\numberwithin{equation}{section}
\newtheorem{theorem}[equation]{Theorem}
\newtheorem{lemma}[equation]{Lemma}
\newtheorem{proposition}[equation]{Proposition}
\newtheorem{conjecture}[equation]{Conjecture}
\newtheorem{corollary}[equation]{Corollary}
\newtheorem{claim}[equation]{Claim}
\theoremstyle{definition}
\newtheorem{remark}[equation]{Remark}
\newtheorem{definition}[equation]{Definition}
\newtheorem{example}[equation]{Example}
\newcommand{\rightbox}{\hfill\scalebox{0.8}{$\blacksquare$}}
\def\la{\lambda}
\def\<{\langle}
\def\>{\rangle}
\def\SL{ {\text {\rm SL} } }
\def\0{{\mathbf 0}}
\def\sgn{{\rm sgn}}
\newcommand{\dcmax}{{\textup{dc$_{\textup{max}}$}}}
\newcommand{\dcorig}{\textup{dc}}
\newcommand{\IC}{\mathbb C}
\newcommand{\IR}{\mathbb R}
\newcommand{\IN}{\mathbb N}
\DeclareMathOperator{\GL}{GL}
\DeclareMathOperator{\mult}{mult}
\DeclareMathOperator{\St}{St}
\DeclareMathOperator{\Sym}{Sym}
\renewcommand{\det}{\textup{det}}
\newcommand{\per}{\textup{per}}
\newcommand{\aS}{\ensuremath{\mathfrak{S}}}
\newcommand{\VPs}{\textup{\textsf{VP$_s$}}}
\newcommand{\VNP}{\textup{\textsf{VNP}}}
\newcommand{\sharpP}{\textup{\textsf{\#P}}}
\newcommand{\PP}{\textup{\textsf{P}}}
\newcommand{\NP}{\textup{\textsf{NP}}}
\newcommand{\exceptions}{{\mathscr{X}}}
\title{Rectangular Kronecker coefficients and plethysms in geometric complexity theory}
\author{Christian Ikenmeyer\footnote{Max Planck Institute for Informatics, Saarland Informatics Campus, cikenmey$@$mpi-inf.mpg.de. This work was done mainly while the first author was at Texas A\&M University.}, Greta Panova\footnote{University of Pennsylvania, panova$@$math.upenn.edu. Partially supported by NSF grant DMS-1500834.}}
\begin{document}
\sloppy
\maketitle

\begin{abstract}
We prove that in the geometric complexity theory program
the vanishing of rectangular Kronecker coefficients cannot be used to
prove superpolynomial determinantal complexity lower bounds for the permanent polynomial.

Moreover, we prove the positivity of rectangular Kronecker coefficients for a large class of partitions
where the side lengths of the rectangle are at least quadratic in the length of the partition.
We also compare rectangular Kronecker coefficients with their corresponding plethysm coefficients,
which leads to a new lower bound for rectangular Kronecker coefficients.
Moreover, we prove that the saturation of the rectangular Kronecker semigroup is trivial,
we show that the rectangular Kronecker positivity stretching factor is 2 for a long first row,
and we completely classify the positivity of
rectangular limit Kronecker coefficients that were introduced by Manivel in 2011.
\end{abstract}

{\footnotesize\noindent\textbf{MSC2010:} 20C30, 20G05, 68Q17}

{\footnotesize\noindent\textbf{Keywords:} Kronecker coefficients, plethysm coefficients, geometric complexity theory, positivity}

\setcounter{tocdepth}{1}
\tableofcontents

\section{Geometric complexity theory and Kronecker positivity}

The flagship problem in algebraic complexity theory is the determinant vs permanent problem,
as introduced by Valiant \cite{Val:79b}.
For $n \in \IN$ the polynomial
\[
\det_n := \sum_{\pi \in \aS_n} \sgn(\pi) X_{1,\pi(1)} \cdots X_{n,\pi(n)}
\]
is the well-known determinant polynomial, while for $m \in \IN$
\[
\per_m := \sum_{\pi \in \aS_m} X_{1,\pi(1)} \cdots X_{m,\pi(m)}
\]
is the permanent polynomial, a polynomial of interest in particular in graph theory and physics.
From a complexity theory standpoint the determinant is complete for the complexity class $\VPs$ \cite{Val:79b,toda:92,mapo:08},
while the permanent is complete for $\VNP$ \cite{Val:79b} and also for $\sharpP$ \cite{val:79}.
By definition we have $\VPs \subseteq \VNP$ and a major conjecture in algebraic complexity theory
related to the famous $\PP \neq \NP$ conjecture (see \cite{Coo:00}) is the following.
\begin{conjecture}\label{conj:flagshipconj}
$\VPs \neq \VNP$.
\end{conjecture}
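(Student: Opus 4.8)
\emph{Proof strategy (sketch).} Conjecture~\ref{conj:flagshipconj} is the determinant versus permanent problem, so an honest ``plan'' here is the geometric complexity theory (GCT) program of Mulmuley and Sohoni, which is the subject of this paper; I sketch the route and indicate where it collides with the results proved below.

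\emph{Step 1: geometric reformulation.} By Valiant's completeness theorems, $\VPs \neq \VNP$ follows --- up to the usual polynomial versus quasipolynomial subtlety --- once one shows that the padded permanent $\ell^{n-m}\per_m$ (with $\ell$ a fresh variable), viewed as a point of the space $\Sym^n(\IC^{n^2})$ of degree-$n$ forms, does not lie in the orbit closure $\overline{\GL_{n^2}\cdot \det_n}$ for every $n$ polynomially bounded in $m$. The first move is therefore to replace the complexity-class inequality by this non-containment of projective orbit closures and to reduce it to a statement about their homogeneous coordinate rings.

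\emph{Step 2: pass to representation theory.} Both coordinate rings $\IC[\overline{\GL_{n^2}\cdot \ell^{n-m}\per_m}]$ and $\IC[\overline{\GL_{n^2}\cdot \det_n}]$ are $\GL_{n^2}$-modules and hence decompose into isotypic components $\{\lambda\}^{\oplus\mult_\lambda}$ over partitions $\lambda$ with at most $n^2$ parts. It then suffices to exhibit a single partition $\lambda$ with multiplicity $0$ on the determinant side but positive multiplicity on the permanent side (an \emph{occurrence obstruction}), or more weakly a $\lambda$ with strictly larger multiplicity on the permanent side (a \emph{multiplicity obstruction}). The determinant-side multiplicities are governed by plethysm coefficients of $\Sym^d(\Sym^n)$, while the permanent-side multiplicities, for the highly symmetric $\lambda$ one is first tempted to use, are controlled by (rectangular) Kronecker coefficients; so the combinatorial core becomes exactly the plethysm-versus-Kronecker comparison developed in this paper.

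\emph{Step 3: exhibit the obstruction --- the main obstacle.} This is where the plan runs into the theorems proved below, which is precisely their point. We will show that rectangular Kronecker coefficients are \emph{positive} for a large class of partitions (rectangle side lengths quadratic in the number of rows), that the rectangular Kronecker semigroup has trivial saturation, and that the positivity stretching factor is only $2$, and we will classify Manivel's rectangular limit Kronecker coefficients. The upshot is that essentially no ``free'' vanishing of rectangular Kronecker coefficients survives to be exploited, so occurrence obstructions of this combinatorial type cannot separate the two orbit closures. Completing the plan would thus require either genuine multiplicity obstructions, resting on delicate strict inequalities between plethysm and Kronecker coefficients, or the injection of algebraic structure of $\overline{\GL_{n^2}\cdot \ell^{n-m}\per_m}$ that is invisible to its stabiliser. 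I do not expect to clear this obstacle here; the role of the present paper is to chart which representation-theoretic routes remain viable and, in the rectangular Kronecker case, to close one of them.
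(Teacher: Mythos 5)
The statement is Conjecture~\ref{conj:flagshipconj}, an open problem which the paper does not claim to resolve; you correctly refrain from claiming a proof, so there is no paper proof to compare against. What you offer is a sketch of the GCT route, and most of it faithfully mirrors Section~1: reduce $\VPs\neq\VNP$ to superpolynomial growth of $\dcorig(\per_m)$, recast that as non-containment of the padded permanent in $\overline{\GL_{n^2}\det_n}$, and hunt for occurrence or multiplicity obstructions among irreducible $\GL_{n^2}$-modules in the coordinate rings. You also correctly identify the paper's Main Theorem (Thm.~\ref{thm:together}) as a no-go result that closes the rectangular-Kronecker-vanishing route rather than completes the plan, which is exactly the paper's own framing in Section~\ref{sec:consequences}.

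There is, however, one substantive error in your Step~2: you have the roles of the plethysm and Kronecker coefficients reversed. In the paper's setup it is the rectangular Kronecker coefficient $g(\lambda,n\times d,n\times d)$ that bounds from above the multiplicity of $\{\lambda\}$ in $\IC[\overline{\GL_{n^2}\det_n}]$ (via the algebraic Peter--Weyl theorem and the determinant's stabilizer), while the plethysm coefficient $a_\lambda(d[n])$ is the multiplicity of $\{\lambda\}$ in the ambient $\Sym^d(\Sym^n\IC^{n^2})$ and therefore bounds the permanent-side multiplicity $q^m_\lambda(d[n])$ from above, as in \eqref{eq:omegaa}. The occurrence-obstruction criterion \eqref{eq:positivity} is: find $\lambda$ with $q^m_\lambda(d[n])>0$ but $g(\lambda,n\times d,n\times d)=0$. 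Your sketch assigns plethysm to the determinant side and Kronecker to the permanent side, which inverts the paper's (and the standard GCT) picture; with that correction the sketch is an accurate description of the program and of why the paper's results obstruct it.
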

This conjecture can be phrased independently of the definition of these complexity classes
as a question about expressing permanents as determinants of larger matrices as follows.
Valiant showed that for every polynomial $f$
there exists an integer $n \in \IN$
such that $f$ can be written as a determinant of an $n \times n$ matrix whose entries are
affine linear forms in the variables of $f$.
The smallest such number $n$ is called the \emph{determinantal complexity} of $f$, denoted by $\dcorig(f)$.
For example,
\[
\det\begin{pmatrix}
X_1 & 1 + X_2 \\
X_1 - X_2 & 1
\end{pmatrix} = X_1 +(1+X_2)(X_1-X_2) = X_1 X_2 - X_2^2 + 2 X_1 - X_2,
\]
so $\dcorig(X_1 X_2 - X_2^2 + 2 X_1 - X_2) \leq 2$.
Conjecture \ref{conj:flagshipconj} can be equivalently stated as follows:
\begin{conjecture}
The sequence $\dcorig(\per_m)$ grows superpolynomially in $m$.
\end{conjecture}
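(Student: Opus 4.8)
This is Valiant's conjecture restated, so the natural plan is to attack it through the Mulmuley--Sohoni geometric complexity theory program, which is the setting of the representation-theoretic results developed below.

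\emph{Geometrizing the bound.} Passing to the border variant $\dcbar$ (where limits of affine-linear substitutions are allowed) and fixing a linear form $\ell$, one has the standard reduction: if $\dcorig(\per_m)\le n$ then the padded polynomial $\ell^{\,n-m}\per_m$ lies in the $\GL_{n^2}$-orbit closure $\ov{\GL_{n^2}\cdot\det_n}\subseteq\Sym^n(\IC^{n^2})$. So it suffices to prove that for every polynomial $p$ there are infinitely many $m$ with $\ell^{\,n-m}\per_m\notin\ov{\GL_{n^2}\cdot\det_n}$ for $n=p(m)$; this converts the circuit lower bound into a containment question about projective orbit closures.

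\emph{Finding an obstruction.} A containment $\ov{\GL_{n^2}\cdot\ell^{n-m}\per_m}\subseteq\ov{\GL_{n^2}\cdot\det_n}$ induces a $\GL_{n^2}$-equivariant surjection of coordinate rings, so it is enough to find a partition $\la$ (with $n$ polynomial in $m$) with
\[
\mult\!\big(S_\la,\IC[\ov{\GL_{n^2}\cdot\ell^{n-m}\per_m}]\big)\;>\;\mult\!\big(S_\la,\IC[\ov{\GL_{n^2}\cdot\det_n}]\big),
\]
a so-called \emph{multiplicity obstruction}. The cleaner \emph{occurrence obstruction} --- forcing the determinant-side multiplicity to be $0$ --- is not available here: the positivity of rectangular Kronecker coefficients proved below, together with the triviality of the rectangular Kronecker saturation and the stretching factor being only $2$ for a long first row, is exactly what blocks that route, so one is committed to a genuine multiplicity comparison.

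\emph{Reducing to combinatorics, and the obstacle.} Using the $\GL_n\times\GL_n$-stabilizer of $\det_n$, the determinant-side multiplicities are governed by plethysm coefficients $\mult(S_\la,\Sym^d(\Sym^n\IC^{n^2}))$ after a branching and symmetrization; the padded-permanent side produces Kronecker-type coefficients, and the degree gap $n-m$ forces $\la$ to have a very long first row, so that Manivel's limit Kronecker coefficients control the relevant $n\to\infty$ behaviour. The conjecture then reduces to exhibiting an infinite family of $\la$ on which the Kronecker-type quantity strictly dominates the plethysm-type quantity --- exactly the comparison between rectangular Kronecker and plethysm coefficients made in this paper, but strengthened from a lower bound to a strict separation. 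This last step is the main obstacle: there are no usable closed formulas for these multiplicities, and --- as this paper documents --- their rectangular positivity structure is extremely rigid (positivity is generic, the semigroup saturates trivially, stretching factors are tiny), which is precisely what makes both occurrence and multiplicity obstructions hard to produce. Any proof along these lines must therefore contribute a genuinely new handle on the \emph{fine} multiplicity structure of plethysm and Kronecker coefficients, not merely on their vanishing.
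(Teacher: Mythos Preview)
The statement you were asked to prove is a \emph{conjecture} --- Valiant's conjecture that $\VPs\neq\VNP$ --- and the paper does not prove it; indeed this is one of the central open problems in algebraic complexity theory. There is therefore no ``paper's own proof'' to compare against. Your proposal is not a proof either: it is an outline of the GCT strategy, ending with the explicit admission that ``any proof along these lines must therefore contribute a genuinely new handle on the fine multiplicity structure'' which you do not supply. So the gap is total: no argument is given that actually establishes superpolynomial growth of $\dcorig(\per_m)$.

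A couple of smaller inaccuracies in the outline are worth noting as well. First, what blocks the occurrence-obstruction route in this paper is the Main Theorem~\ref{thm:together} (if $g(\la,n\times d,n\times d)=0$ then $q_\la^m(d[n])=0$ for $n>3m^4$), not the saturation or stretching-factor results you cite; those are side corollaries in Section~\ref{sec:corollaries}. Second, your description of which multiplicities are ``governed by plethysm coefficients'' is inverted: the plethysm coefficient $a_\la(d[n])$ is the multiplicity in the ambient $\Sym^d(\Sym^n\IC^{n^2})$ and hence an upper bound for \emph{both} orbit-closure multiplicities; the determinant-side multiplicity is bounded above by the rectangular Kronecker coefficient $g(\la,n\times d,n\times d)$ via Peter--Weyl (see the proof of Prop.~\ref{prop:peter-weyl}), and the permanent-side multiplicity is $q_\la^m(d[n])$. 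The sought inequality for a multiplicity obstruction is $q_\la^m(d[n])>g(\la,n\times d,n\times d)$, as in Theorem~\ref{thm:strategy}.
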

Finding lower bounds for $\dcorig(\per_m)$
is an important research area in algebraic complexity theory,
see for example the recent progress in \cite{MR:04, CCD:10, lamare:10, HI:14, abv:15, Yabe:15}.
Mulmuley and Sohoni \cite{muso:01,gct2} proposed an approach to this problem
using algebraic geometry and representation theory
and coined the term geometric complexity theory.

\subsection{Complexity lower bounds via representation theory}\label{subsec:compllowboundsrepth}
In the following we outline how one can prove lower bounds on $\dcorig(\per_m)$ using rectangular Kronecker coefficients,
see \eqref{eq:positivity} below.

A \emph{partition} $\la$ of $N$, written $\la \vdash N$, is defined to be a finite nonincreasing sequence of positive integers $\la=(\la_1,\la_2,\ldots,\la_{\ell(\la)})$,
where the \emph{length} $\ell(\la)$ denotes the number of entries in $\la$
and $\sum_{i=1}^{\ell(\la)}\la_i=N$. We write $|\la| := N$.
To a partition $\la$ we associate its \emph{Young diagram}, which is a top-aligned and left-aligned array of boxes
such that in row $i$ we have $\la_i$ boxes. Thus for $\la \vdash N$ the corresponding Young diagram has $N$ boxes.
For example, for $\la=(6,6,3,2,1,1)$ the associated Young diagram is
\[
\tiny\yng(6,6,3,2,1,1).
\]
If we transpose a Young diagram at the main diagonal we obtain another Young diagram, which we call $\la^t$.
The row lengths of $\la^t$ are the column lengths of $\la$.
In the example above we have $\la^t=(6,4,3,2,2,2)$.
For natural numbers $n$ and $d$ let $n \times d$ denote the partition $(d,d,\ldots,d)$,
i.e, the partition whose Young diagram is rectangular with $n$ rows and $d$ columns.
For two partitions $\la$ and $\mu$ let $\la+\mu$ denote the rowwise sum.
Moreover, for an integer $a$ let $a\la$ denote the rowwise scaling by $a$.
For $N \in \IN$ let $\aS_N$ denote the symmetric group on $N$ symbols.
For a partition $\la \vdash N$ let $[\la]$ denote the irreducible $\aS_{N}$-representation of type $\la$.
For partitions $\la,\mu,\nu$ of $nd$ let $g(\la,\mu,\nu) \in \IN$ denote the \emph{Kronecker coefficient}, i.e.,
the multiplicity of the irreducible $\aS_{nd}$-representation $[\la]$ in the tensor product $[\mu] \otimes [\nu]$,
where $[\mu] \otimes [\nu]$ is interpreted as an $\aS_{nd}$-representation via the diagonal embedding $\aS_{nd} \hookrightarrow \aS_{nd} \times \aS_{nd}$,
$\pi \mapsto (\pi,\pi)$.
A combinatorial interpretation of $g(\la,\mu,\nu)$ is known only in special cases,
see \cite{lasc:80, remm:89, remm:92, rewe:94, rosa:01, baor:06, bla:12, liu:14, IMW:15, Hay:15},
and finding a general combinatorial interpretation is problem~10 in Stanley's list of
positivity problems and conjectures in algebraic combinatorics \cite{sta:00}.
In geometric complexity theory
the main interest is focused on
\emph{rectangular} Kronecker coefficients,
i.e., the coefficients $g(\la,n \times d,n\times d)$.
These will be the main objects of study in this paper, as it was conjectured by Mulmuley and Sohoni \cite{gct2} that
their vanishing behaviour could be used to separate $\VPs$ from $\VNP$.
This conjecture spiked interest in these coefficients and has already led to several publications
inspired by geometric complexity theory.
Our main result says that it is impossible to separate $\VPs$ from $\VNP$ in this way.

For a partition $\la \vdash dn$ and a vector space $V$ of dimension at least $\ell(\la)$ let $\{\la\}$
denote the irreducible $\GL(V)$-representation of type $\la$.
The \emph{plethysm coefficient} $a_{\la}(d[n])$ is
the multiplicity of $\{\la\}$ in the $\GL(V)$-representation $\Sym^d(\Sym^n(V))$,
where $\Sym^\bullet$ denotes the symmetric power, i.e.,
$\Sym^n(V)$ can be identified with the vector space of homogeneous degree $n$ polynomials in $\dim(V)$ variables.
Analogous to the situation for the Kronecker coefficient,
finding a general combinatorial interpretation for $a_\la(d[n])$ is a fundamental open problem,
listed as problem~9 in \cite{sta:00}.

Not much is known about the third quantity we use, which is specialized to the permanent.
For fixed $n$ and $m$, $n > m$, let $\per_m^n := (X_{1,1})^{n-m}\per_m \in \Sym^n \IC^{n^2}$ denote the \emph{padded permanent polynomial}
(this is not the standard definition found in the literature, but it gives the same main result, see the appendix~\ref{subsec:additionalvar}).
Let $\GL_{n^2}\per_m^n$ denote its orbit and $\overline{\GL_{n^2}\per_m^n}$
its orbit closure (Zariski or Euclidean), which is an affine subvariety of the ambient space $\Sym^n\IC^{n^2}$.
For $\la \vdash dn$ let
$q^m_\la(d[n])$ denote the multiplicity of $\{\la\}$ in the homogeneous degree $d$
component of the coordinate ring $\IC[\overline{\GL_{n^2}\per_m^n}]$.
Since the orbit closure is a subvariety of the ambient space we have
\begin{equation}\label{eq:omegaa}
q^m_\la(d[n]) \leq a_\la(d[n]).
\end{equation}

\begin{theorem}[{see for example \cite[eq.~(5.2.7)]{BLMW:11}}]\label{thm:strategy}
If $q^m_\la(d[n]) > g(\la,n \times d, n\times d)$, then $\dcorig(\per_m)>n$.
\end{theorem}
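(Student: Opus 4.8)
\medskip
\noindent The plan is to establish the contrapositive: assuming $\dcorig(\per_m)\le n$, we show $q^m_\la(d[n])\le g(\la,n\times d,n\times d)$ for every partition $\la$ of $dn$. The bridge is the determinant orbit closure $\ol{\GL_{n^2}\det_n}\subseteq\Sym^n\IC^{n^2}$. First I would check that $\dcorig(\per_m)\le n$ forces $\per_m^n\in\ol{\GL_{n^2}\det_n}$: writing $\per_m=\det_n(A_0+\sum_{i,j}x_{i,j}A_{i,j})$ for matrices $A_0,A_{i,j}\in\Mat_{n\times n}(\IC)$ and homogenizing this affine-linear substitution with the padding variable $X_{1,1}$, one realizes $\per_m^n$ as $\psi\cdot\det_n$ for a (possibly singular) linear endomorphism $\psi$ of $\IC^{n^2}$; since $\GL_{n^2}$ is dense in $\mathrm{End}(\IC^{n^2})$ and the action is continuous, $\psi\cdot\det_n$ lies in $\ol{\GL_{n^2}\det_n}$. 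As this set is a closed $\GL_{n^2}$-stable cone, it follows that $\ol{\GL_{n^2}\per_m^n}\subseteq\ol{\GL_{n^2}\det_n}$. The only delicate point here, that $X_{1,1}$ is an existing variable rather than a fresh one, is handled as in Appendix~\ref{subsec:additionalvar}.

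Next I would pass to coordinate rings. The containment of orbit closures induces a graded, $\GL_{n^2}$-equivariant surjection $\IC[\ol{\GL_{n^2}\det_n}]\twoheadrightarrow\IC[\ol{\GL_{n^2}\per_m^n}]$, so in each degree $d$ and for each $\la$ one obtains $q^m_\la(d[n])\le\mult\bigl(\{\la\},\IC[\ol{\GL_{n^2}\det_n}]_d\bigr)$. The remaining task is to bound the right-hand side by $g(\la,n\times d,n\times d)$. Because the orbit $\GL_{n^2}\det_n$ is dense in its closure, restriction of functions embeds $\IC[\ol{\GL_{n^2}\det_n}]_d$ $\GL_{n^2}$-equivariantly into $\IC[\GL_{n^2}\det_n]_d\cong\IC[\GL_{n^2}/H]_d$, where $H=\mathrm{Stab}(\det_n)$; by the algebraic Peter--Weyl theorem the multiplicity of $\{\la\}$ in the latter is at most $\dim(\{\la\}^{H})$, which is at most $\dim(\{\la\}^{H^\circ})$. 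By Frobenius's classical description, $H^\circ$ acts on $\IC^{n^2}=\IC^n\otimes\IC^n$ by $M\mapsto AMB^{\mathsf T}$ with $\det(A)\det(B)=1$; restricting $\{\la\}$ along $\GL_n\times\GL_n\hookrightarrow\GL_{n^2}$ yields $\bigoplus_{\mu,\nu}g(\la,\mu,\nu)\,\{\mu\}\boxtimes\{\nu\}$ (this branching is exactly what the Kronecker coefficients compute, via Schur--Weyl duality), and the invariants under $\{(A,B):\det A\det B=1\}$ survive precisely from those summands with $\{\mu\}$ and $\{\nu\}$ both equal to the character $(\det)^{d}$, that is $\mu=\nu=n\times d$. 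Hence $\dim(\{\la\}^{H^\circ})=g(\la,n\times d,n\times d)$, and chaining the inequalities gives the claim.

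The main obstacle is this last step, which rests on two classical but substantial facts: Frobenius's determination of $\mathrm{Stab}(\det_n)$ in $\GL_{n^2}$, and the identification of the $\GL_{n^2}{\downarrow}\GL_n\times\GL_n$ branching with Kronecker coefficients. One has to be careful with contragredient twists and with the role of the $\det$-character in order to confirm that the two rectangles surviving the invariance computation are exactly $n\times d$; fortunately the rectangle is self-conjugate for this purpose, so the bound comes out as $g(\la,n\times d,n\times d)$ irrespective of the precise convention used to $\GL$-decompose the coordinate ring. The first step is Valiant's homogenization trick and is routine apart from the padding-variable subtlety cited above.
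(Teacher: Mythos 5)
Your argument is correct and matches the route the paper takes by citing BLMW~(5.2.7) (and reproduces in the proof of Prop.~\ref{prop:peter-weyl}): contrapositive, homogenization plus density of $\GL_{n^2}$ in $\mathrm{End}(\IC^{n^2})$ to get $\overline{\GL_{n^2}\per_m^n}\subseteq\overline{\GL_{n^2}\det_n}$, the $\GL_{n^2}$-equivariant restriction surjection and Schur's lemma, and the algebraic Peter--Weyl bound on $\mult_\la(\IC[\overline{\GL_{n^2}\det_n}])$ via Frobenius's description of $\mathrm{Stab}(\det_n)$ and the $\GL_{n^2}\downarrow\GL_n\times\GL_n$ branching rule. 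Your explicit unpacking of the stabilizer and branching computation---the step the paper leaves to the citation---is sound, and passing to $H^\circ$-invariants is a clean way to sidestep the transposition component of the stabilizer while still yielding exactly the bound $g(\la,n\times d,n\times d)$.
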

Kronecker coefficients are $\sharpP$-hard to compute
as they are generalizations of the well-known
Littlewood-Richardson coefficients \cite{nara:06}.
But the positivity of Littlewood-Richardson coefficients can be decided in polynomial time \cite{deloera:06,MNS:12},
even by a combinatorial max-flow algorithm \cite{bi:12}.
Even though deciding positivity of Kronecker coefficients is $\NP$-hard in general, see \cite{IMW:15},
the same paper provides evidence that the rectangular Kronecker coefficient case is significantly simpler.
Thus to implement Thm.~\ref{thm:strategy} \cite{gct2} proposed to focus on the \emph{positivity} of representation theoretic multiplicities
in order to use the weaker statement
\begin{equation}\label{eq:positivity}
\text{If $q^m_\la(d[n]) > 0 = g(\la,n \times d, n\times d)$, then $\dcorig(\per_m)>n$}
\end{equation}
to prove lower bounds, see also \cite[Sec.~6.6]{Lan:13} and \cite[Problem~3.13]{Bur:15}, where this approach is explained.
The results in \cite{bci:09, BHI:15, Kum:15} already indicate that vanishing of $g(\la,n \times d, n\times d)$ might be rare.
In \cite{IMW:15} sequences of partition triples $(\la,\mu,\mu)$ are constructed that satisfy $g(\la,\mu,\mu)=0$.
Unfortunately $\mu$ has not the necessary rectangular shape.
Indeed, our main result Thm.~\ref{thm:together} completely rules out the possibility
that \eqref{eq:positivity} could be used to prove superpolynomial lower bounds on $\dcorig(\per_m)$.
\begin{theorem}[Main Theorem]\label{thm:together}
Let $n>3m^4$, $\la \vdash nd$. If $g(\la,n \times d, n \times d)=0$, then $q^m_\la(d[n])=0$.
\end{theorem}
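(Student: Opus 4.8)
The plan is to prove the contrapositive: assuming $q^m_\la(d[n])>0$, deduce $g(\la,n\times d,n\times d)>0$. The inequality \eqref{eq:omegaa} gives $a_\la(d[n])>0$, but that alone is hopeless --- there are $\la$ with positive plethysm coefficient and vanishing rectangular Kronecker coefficient --- so the argument must exploit the very special \emph{padded} shape of $\per_m^n=(X_{1,1})^{n-m}\per_m$. (Note also that one cannot route through an orbit-closure containment $\overline{\GL_{n^2}\per_m^n}\subseteq\overline{\GL_{n^2}\det_n}$, since that would force $\dcorig(\per_m)\le 3m^4$, which is not known; thus Step~2 below must be a genuinely combinatorial inclusion-of-supports statement about rectangular Kronecker coefficients, which play the role of $q^m_\la$ for the determinant --- they are the $\{\la\}$-multiplicities in the coordinate ring of the $\GL_{n^2}$-orbit of $\det_n$.) The proof therefore has two parts: (1) the padding confines every $\la$ with $q^m_\la(d[n])>0$ to a narrow region --- partitions with few rows and a very long first row; (2) for every $\la$ in that region the rectangular Kronecker coefficient is positive.

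\emph{Step 1: the padding lemma.} I would first prove: if $q^m_\la(d[n])>0$ then $\ell(\la)\le m^2$ and $\la_1\ge(n-m)d$; equivalently, writing $\overline\la:=(\la_2,\la_3,\dots)$, the partition $\overline\la$ has at most $m^2-1$ rows and $|\overline\la|\le md$. This is a bookkeeping argument with highest weight vectors of the kind used in \cite{BLMW:11}: since $\per_m^n$ is divisible by $X_{1,1}^{\,n-m}$ and uses only the $m^2$ coordinates $X_{i,j}$ with $1\le i,j\le m$, it lies in the linear subspace $X_{1,1}^{\,n-m}\cdot\Sym^m(\IC^{n^2})$, and spelling out what a weight-$\la$ highest weight vector of $\Sym^d(\Sym^n\IC^{n^2})^{*}$ does on this subspace and on its $\GL_{n^2}$-translates forces $\la_1\ge(n-m)d$, while using that $\per_m$ additionally lives in an $m^2$-dimensional coordinate subspace forces $\ell(\la)\le m^2$. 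With $n>3m^4$ this yields $|\overline\la|=nd-\la_1\le md$, and $n>3m^4=3(m^2)^2\ge 3\,\ell(\la)^2$, i.e.\ one side length of the rectangle is at least quadratic in the length of $\la$.

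\emph{Step 2: rectangular Kronecker positivity for a long first row.} It remains to show: if $\la\vdash nd$ with $\ell(\la)\le m^2$, $|\overline\la|\le md$ and $n>3m^4$, then $g(\la,n\times d,n\times d)>0$. I would use the semigroup property of Kronecker positivity --- $g(\la',\mu',\nu')>0$ and $g(\la'',\mu'',\nu'')>0$ imply $g(\la'+\la'',\mu'+\mu'',\nu'+\nu'')>0$ --- and the fact that $n\times d$ decomposes as a rowwise sum of rectangles only as $\sum_i(n\times d_i)$ with $\sum_i d_i=d$. Splitting $\la$ along these lines (grouping most of the first row into single-row blocks, for which $g((nk),n\times k,n\times k)=1$, and spreading the at most $md$ boxes of $\overline\la$ across a bounded number of remaining blocks of column width $\approx m^2$) one reduces to positivity of $g(\mu,n\times k,n\times k)$ for bounded $k$ and $\mu$ a long-first-row partition of $nk$ with $\ell(\mu)\le m^2$. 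For this remaining family I would use the transpose symmetry $g(\mu,n\times k,n\times k)=g(\mu,k\times n,k\times n)$ together with a stability argument: writing $k\times n=n\cdot(1^{k})$ and using the Sam--Snowden/Vallejo stability of Kronecker coefficients along a stable triple lets one reduce $n$ down to an explicit $O(m)$ base case, which is then settled by the explicit rectangular nonvanishing results of \cite{bci:09,BHI:15}.

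The main obstacle is Step 2. Some rectangular Kronecker coefficients really do vanish (compare the vanishing constructions in \cite{IMW:15}, whose inner shapes are however not rectangular, and the evidence for rarity of vanishing in \cite{bci:09,BHI:15,Kum:15}), so one must show that \emph{no} such exceptional zero lies in the long-first-row region isolated in Step~1, while simultaneously tracking all bounds tightly enough to obtain the explicit polynomial threshold --- this is precisely where $n>3m^4$ is produced. The padding lemma of Step~1, by contrast, is a routine adaptation of standard techniques.
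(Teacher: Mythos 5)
Your Step~1 is essentially Theorem~\ref{thm:kadishlandsberg} of the paper and is fine. The problem is Step~2: the positivity claim you isolate, namely ``if $\la\vdash nd$, $\ell(\la)\le m^2$, $|\bar\la|\le md$ and $n>3m^4$, then $g(\la,n\times d,n\times d)>0$,'' is \emph{false}, and you cannot repair it by tightening constants. Two independent families of counterexamples sit squarely in the narrow region you isolate.

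First, there are zeros coming from the shape of $\bar\la$: taking $\la=(nd-1,1)$ (so $\bar\la=(1)$), one has $\ell(\la)=2\le m^2$ and $|\bar\la|=1\le md$, yet $g((nd-1,1),n\times d,n\times d)=0$ for all $n,d\ge 7$ (this is the $k=1$ vanishing of Theorem~\ref{thm:hook_positive}). The same is true for $\bar\la$ in the finite exceptional set $\exceptions=\{(1),(1^2),(1^4),(1^6),(2,1),(3,1)\}$. The theorem survives only because for these $\bar\la$ one can separately prove $a_\la(d[n])=0$ and hence $q^m_\la(d[n])=0$ via \eqref{eq:omegaa}; showing Kronecker positivity there is simply impossible. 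Your sketch never makes this case distinction, and indeed asserts in Step~2 that the goal is to show ``\emph{no} such exceptional zero lies in the long-first-row region,'' which is the wrong goal.

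Second, and more seriously, the claim fails for small $d$ even with $\bar\la\notin\exceptions$. For example with $d=1$ the rectangle is a single column, $g(\la,(1^n),(1^n))=\mathds{1}[\la=(n)]$, so every $\la$ in the region with more than one row has vanishing Kronecker coefficient. No amount of semigroup/transpose/stability manipulation will produce positivity in this regime, because positivity is genuinely absent. The paper closes this hole not with a positivity statement but with a \emph{degree lower bound} (Corollary~\ref{cor:degreelowerbound}): if $|\bar\la|\le md$ and $a_\la(d[n])>g(\la,n\times d,n\times d)$, then $d>n/m$. Its contrapositive shows that for $d\le n/m$ the plethysm coefficient is bounded above by the rectangular Kronecker coefficient, so $g=0$ forces $a_\la(d[n])=0$ and hence $q^m_\la(d[n])=0$. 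This inequality is the most novel ingredient of the proof; it is itself deduced from Manivel's stability of $g(\rho(nd),n\times d,n\times d)$ together with the finiteness of determinantal complexity (so that a hypothetical violation would yield $m$-obstructions of unbounded quality, contradicting $\dcorig(f)\le\dcmax(m)<\infty$). Your proposal has no analogue of this step and explicitly disavows routing through the determinant, which is precisely where the needed inequality comes from. Only for $d>n/m$ (whence $d>3m^3$) does the paper switch to the positivity theorem (Theorem~\ref{thm:secondmain}(b)), and even there only after excluding $\bar\la\in\exceptions$. In summary, the high-level idea of ``confine $\la$, then argue about rectangular Kroneckers'' matches the paper, but the core of the argument---the split at $d=n/m$, the degree lower bound, and the exceptional-partition dichotomy---is missing, and without it Step~2 as stated is unprovable.
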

Thm.~\ref{thm:together} holds in higher generality:
In the proof of Thm.~\ref{thm:together} we do not use any specific property of the permanent other than it is a
family of polynomials whose degree and number of variables is polynomially bounded in $m$.
For all these families of polynomials no superpolynomial lower bounds on the determinantal complexity can be shown
using the vanishing of $g(\la,n \times d, n \times d)$.

To use \eqref{eq:positivity} it is required by \eqref{eq:omegaa} that
\begin{align}
a_\la(d[n]) > 0 = g(\la,n \times d, n\times d).
\end{align}
An example is $\la=(13,13,2,2,2,2,2)$ and $d=12$, $n=3$, where we have $a_\la(d[n]) =1 > 0 = g(\la,n \times d, n\times d)$, see \cite[Appendix]{Ike:12b}.

We remark that---inspired by our paper---the very recent paper \cite{BIP:16} generalizes Thm.~\ref{thm:together} by
showing $q^m_\la(d[n])=0$ even if we only require that $\la$ does not occur in the
coordinate ring $\IC[\overline{\GL_{n^2}\det_n}]$ of orbit closure of the determinant (although they require a much larger $n$ in terms of $m$).
That disproves a stronger conjecture by Mulmuley and Sohoni.
While our paper uses representation theory to prove the main result, \cite{BIP:16} uses a more geometric approach.
We discuss more similarities and differences to the proof in \cite{BIP:16} at the end of this section, after
explaining the proof idea for Thm.~\ref{thm:together}.

A natural generalization of our main theorem would be to consider \emph{symmetric} rectangular Kronecker coefficients
instead of just $g(\la,n \times d, n\times d)$, because those are the multiplicities in the coordinate ring of the orbit $\GL_{n^2}\det_n$.
Our proof does not immediately work for those coefficients as they lack the important transposition symmetry,
but an equivalent no-go result for these coefficients follows directly from \cite{BIP:16}.

For a partition $\lambda$ we write $\bar\lambda$ to denote $\lambda$ with its first row removed, so $|\bar\lambda|+\la_1=|\la|$.
Vice versa, for a partition $\rho$ and $N \geq \rho_1$ we write $\rho(N) := (N - |\rho|,\rho)$ to denote the partition $\rho$ with an additional new first row
containing $N-|\rho|$ boxes.
The following theorem gives a very strong restriction on the shape of the partitions $\la$ that we have to consider.
\begin{theorem}[{\cite{LK:12}}]\label{thm:kadishlandsberg}
If $q^m_\la(d[n]) > 0$, then $|\bar\lambda|\leq md$ and $\ell(\la)\leq m^2$.
\end{theorem}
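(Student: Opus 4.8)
The plan is to bound $q^m_\la(d[n])$ from above by the corresponding multiplicity in the coordinate ring of a \emph{larger} $\GL_{n^2}$-stable affine cone containing $\overline{\GL_{n^2}\per_m^n}$. The basic mechanism is: if $Z\subseteq\Sym^n\IC^{n^2}$ is any closed $\GL_{n^2}$-stable cone with $\overline{\GL_{n^2}\per_m^n}\subseteq Z$, then restriction of functions is a surjection of graded $\GL_{n^2}$-modules $\IC[Z]_d\twoheadrightarrow\IC[\overline{\GL_{n^2}\per_m^n}]_d$, so the multiplicity of $\{\la\}$ in $\IC[\overline{\GL_{n^2}\per_m^n}]_d$ is at most its multiplicity in $\IC[Z]_d$. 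I would apply this with two different choices of $Z$, one for each of the two asserted inequalities.

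For $\ell(\la)\le m^2$: the form $\per_m^n=(X_{1,1})^{n-m}\per_m$ only involves the $m^2$ variables $X_{i,j}$ with $1\le i,j\le m$, so its linear span in $\IC^{n^2}$ is $m^2$-dimensional; the same holds for every point of the orbit $\GL_{n^2}\per_m^n$ and, by semicontinuity of the dimension of the linear span, for every point of its closure. Hence $\overline{\GL_{n^2}\per_m^n}$ lies in the $\GL_{n^2}$-stable cone $\textup{Sub}_{m^2}\subseteq\Sym^n\IC^{n^2}$ of forms whose linear span has dimension at most $m^2$. It is standard (see e.g.\ \cite{BLMW:11}) that $\{\la\}$ does not occur in $\IC[\textup{Sub}_{m^2}]$ unless $\ell(\la)\le m^2$; alternatively one can derive this directly from the algebraic Peter--Weyl theorem together with the Littlewood--Richardson branching rule for $\GL_{n^2}\downarrow\GL_{m^2}\times\GL_{n^2-m^2}$ applied to the stabilizer of $\per_m^n$, which contains the factor $\GL_{n^2-m^2}$ acting on a complement of the span. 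Either way, $q^m_\la(d[n])>0$ forces $\ell(\la)\le m^2$.

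For $|\bar\la|\le md$: write $\per_m^n=v^{n-m}\cdot w$ with $v=X_{1,1}$ a linear form and $w=\per_m$ a form of degree $m$; every point of the orbit has the same shape $(gv)^{n-m}\,(gw)$. Hence $\overline{\GL_{n^2}\per_m^n}$ is contained in the closure $Y$ of $\{v^{n-m}\,w:v\in\IC^{n^2},\ w\in\Sym^m\IC^{n^2}\}$, which is the closure of the image of the $\GL_{n^2}$-equivariant multiplication map from (the cone over the Veronese of) $\Sym^{n-m}\IC^{n^2}$ times $\Sym^m\IC^{n^2}$ into $\Sym^n\IC^{n^2}$, hence a $\GL_{n^2}$-stable cone. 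Since this map is dominant and equivariant, pullback embeds $\IC[Y]$ $\GL_{n^2}$-equivariantly into the tensor product of the two coordinate rings, and a degree-$d$ function on $Y$ pulls back to something bihomogeneous of bidegree $(d,d)$; using that the degree-$d$ part of the coordinate ring of the Veronese cone $\{v^{n-m}\}$ is the single-row module $\Sym^{(n-m)d}\IC^{n^2}$, together with the plethysm decomposition $\Sym^d(\Sym^m\IC^{n^2})=\bigoplus_{\mu\vdash md}a_\mu(d[m])\,\{\mu\}$, one gets
\[
\IC[Y]_d\ \hookrightarrow\ \Sym^{(n-m)d}\IC^{n^2}\ \otimes\ \bigoplus_{\mu\vdash md}a_\mu(d[m])\,\{\mu\}
\]
as $\GL_{n^2}$-modules. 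By the Pieri rule, $\Sym^{(n-m)d}\IC^{n^2}\otimes\{\mu\}=\bigoplus_\la c^\la_{\mu,((n-m)d)}\{\la\}$, and $c^\la_{\mu,((n-m)d)}\neq 0$ only if $\la/\mu$ is a horizontal strip; a horizontal strip satisfies $\la_{i+1}\le\mu_i$ for all $i\ge 1$, whence $|\bar\la|=\sum_{i\ge 2}\la_i\le\sum_{i\ge 1}\mu_i=|\mu|=md$. So $q^m_\la(d[n])>0$ forces $|\bar\la|\le md$.

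I expect the only genuinely delicate points to be bookkeeping ones rather than conceptual ones: keeping straight the duality convention implicit in the definition of $q^m_\la(d[n])$ so that ``$\{\la\}$ occurs'' is exactly the statement one needs, checking that pullback along the multiplication map really is an embedding of coordinate rings respecting the gradings in the claimed bidegree-$(d,d)$ way, and recording the (standard) vanishing of $\{\la\}$ in $\IC[\textup{Sub}_{m^2}]$ for $\ell(\la)>m^2$. None of these require more than routine representation theory together with the Pieri and Littlewood--Richardson branching rules.
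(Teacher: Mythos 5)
The paper does not prove Theorem~\ref{thm:kadishlandsberg}; it is cited verbatim from \cite{LK:12}, and the closely related Prop.~\ref{prop:kadish-landsberg}(a) is likewise attributed directly to that reference. Your proof is correct and reconstructs the Kadish--Landsberg argument: containment of $\overline{\GL_{n^2}\per_m^n}$ in the subspace variety $\textup{Sub}_{m^2}$ yields $\ell(\la)\le m^2$, and factoring through the $\GL_{n^2}$-equivariant multiplication map $(v,w)\mapsto v^{n-m}w$ together with the Pieri rule yields $|\bar\la|\le md$, matching the route taken in \cite{LK:12}.
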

We sometimes write $\la_1 \geq d(n-m)$ instead of $|\bar\lambda|\leq md$.

The main ingredients for the proof of Thm.~\ref{thm:together} are the following Corollary~\ref{cor:degreelowerbound} and Thm.~\ref{thm:secondmain}.
\begin{corollary}[Degree lower bound]\label{cor:degreelowerbound}
If $|\bar\la|\leq md$ with $a_\la(d[n]) > g(\la,n \times d, n\times d)$, then $d>\frac n m$.
\end{corollary}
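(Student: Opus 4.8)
The plan is to prove the equivalent statement: if $|\bar\la|\le md$ and $d\le\frac n m$ — equivalently $md\le n$, whence $|\bar\la|\le md\le n$ — then $a_\la(d[n])\le g(\la,n\times d,n\times d)$; the corollary is then the contrapositive. If $a_\la(d[n])=0$ there is nothing to prove, so assume $a_\la(d[n])>0$. Since $\Sym^d(\Sym^n V)$ is a $\GL(V)$-submodule of $(\Sym^n V)^{\otimes d}$ and iterated Pieri shows every partition occurring in the latter has at most $d$ rows, we get $\ell(\la)\le d\le n$; also $\la_1=nd-|\bar\la|\ge(n-m)d$. Thus the partitions I really need to treat are long in the first row and short otherwise.

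Next I would rephrase both quantities via the embedding $\aS_n\wr\aS_d\hookrightarrow\aS_{nd}$. By the classical identification of plethysm with wreath-product induction, $a_\la(d[n])$ is the multiplicity of $[\la]$ in $\ind_{\aS_n\wr\aS_d}^{\aS_{nd}}\mathbf 1$. On the other hand, realizing $[n\times d]$ as the space of $\SL_n$-invariants in $(\IC^n)^{\otimes nd}$ — the only $\GL_n$-irreducible that can survive, by degree reasons, being $\{n\times d\}=\det^{\otimes d}$ — identifies $[n\times d]\otimes[n\times d]$ with $\bigl((\IC^n\otimes\IC^n)^{\otimes nd}\bigr)^{\SL_n\times\SL_n}$ as $\aS_{nd}$-modules, so $g(\la,n\times d,n\times d)$ is the multiplicity of $[\la]$ there. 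It is therefore enough to produce an $\aS_{nd}$-equivariant map $\Phi\colon\ind_{\aS_n\wr\aS_d}^{\aS_{nd}}\mathbf 1\to[n\times d]\otimes[n\times d]$ that is injective on the sum of the isotypic components $[\la]$ with $|\bar\la|\le md$. The natural candidate sends a set partition $\mathcal{B}=\{B_1,\dots,B_d\}$ of $\{1,\dots,nd\}$ into $n$-element blocks — a basis vector of the permutation module $\ind_{\aS_n\wr\aS_d}^{\aS_{nd}}\mathbf 1$ — to $\bigl(\bigotimes_k\varepsilon_{B_k}\bigr)\otimes\bigl(\bigotimes_k\varepsilon_{B_k}\bigr)$, where $\varepsilon_{B_k}$ is the volume (determinant) tensor on the slots indexed by $B_k$. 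Each $\varepsilon_{B_k}$ is $\SL_n$-fixed, and the sign ambiguity in $\varepsilon_{B_k}$ cancels between the two tensor factors, so $\Phi$ is well defined, $\aS_{nd}$-equivariant, and lands in $[n\times d]\otimes[n\times d]$.

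The hard part is the injectivity of $\Phi$ on those isotypic components, and this is exactly where the hypothesis $n\ge md\ge|\bar\la|$ is essential: the comparison is false without it, as the paper's own example $\la=(13,13,2,2,2,2,2)$, $d=12$, $n=3$ — where $a_\la(d[n])=1>0=g(\la,3\times 12,3\times 12)$ and $n=3$ is far below $|\bar\la|=23$ — shows that for small $n$ the map $\Phi$ must collapse on the relevant isotypic part. I expect proving injectivity to require a delicate argument: branching $[\la]$ down along $\aS_n\wr\aS_d$ and following a suitable highest-weight vector, or computing the $[\la]$-isotypic pieces of source and target directly and comparing ranks, using crucially that $\la$ has at most $d$ rows and that its first row of length $\ge(n-m)d$ leaves $n\ge|\bar\la|$ free coordinates, so that no Plücker-type relation among the volume-tensor configurations can be supported on a partition with $|\bar\la|\le md$. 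An alternative route would bypass $\Phi$: show that $g\bigl((nd-|\bar\la|,\bar\la),n\times d,n\times d\bigr)$ is weakly increasing in $n$ for fixed $d$ and $\bar\la$ (monotonicity within Manivel's rectangular-limit framework), bound its stabilization threshold by $|\bar\la|$, and match the stable value with the likewise-stabilizing plethysm coefficient; one might also try to push all three first rows into the Murnaghan-stable range via $g(\la,n\times d,n\times d)=g(\la,d\times n,d\times n)$, but the reduced partitions $(n^{d-1})$ still grow with $n$, so Murnaghan's theorem does not apply verbatim. Either way, once $a_\la(d[n])\le g(\la,n\times d,n\times d)$ is secured for $n\ge md$, Corollary~\ref{cor:degreelowerbound} follows at once.
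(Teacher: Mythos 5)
Your reduction is correct and matches the paper's: you correctly observe that the corollary is the contrapositive of the statement that $a_\la(d[n]) \le g(\la, n\times d, n\times d)$ whenever $n \ge md \ge |\bar\la|$, and this is precisely the content of the paper's Proposition~\ref{pro:main} (specialized to $n\ge|\bar\la|$, which implies $(n,d)\in\St^1(\bar\la)$). However, you do not prove the key inequality; you explicitly leave the injectivity of your map $\Phi$ as an open step, and that step is the entire substance of the result. As stated, then, the proposal has a genuine gap.

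It is worth emphasizing how differently the paper handles this gap, because your sketch misses the paper's central trick. The paper's proof of Proposition~\ref{pro:main} does \emph{not} go through any direct representation-theoretic comparison of the plethysm with the Kronecker tensor square. Instead it runs a proof by contradiction through complexity theory: if the inequality failed for some $(m,d)$ in the stable range, then by Manivel's stability (Theorem~\ref{thm:manivel_rect}) the Kronecker coefficient $g(\bar\la(nd), n\times d, n\times d)$ would stay constant for all $n\ge m$, while by the Kadish--Landsberg lifting construction (Proposition~\ref{prop:kadish-landsberg}(b)) the multiplicity $o^m_{\bar\la(nd)}(d[n])$ in the padded variety stays at least $a_{\bar\la(md)}(d[m])$ for all $n\ge m$. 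This would produce an $m$-obstruction of quality $n$ for every $n$, hence (by the Peter--Weyl comparison, Proposition~\ref{prop:peter-weyl}) polynomials $f_n\in V_m$ with $\dcorig(f_n)>n$ for every $n$ — contradicting the elementary finiteness bound $\dcmax(m)<\infty$ (Lemma~\ref{lem:finite_dc}), which Valiant's construction gives for any fixed number of variables and degree. The paper flags this explicitly: a purely representation-theoretic inequality is proved by invoking the finiteness of determinantal complexity. Your $\Phi$-based approach is a plausible direction for a direct proof, and the wreath-product description of $a_\la(d[n])$ and the $\SL_n\times\SL_n$-invariant model of $[n\times d]\otimes[n\times d]$ are both correct ingredients, but the injectivity on the $|\bar\la|\le md\le n$ isotypic part is exactly the nontrivial content, and you neither prove it nor reduce it to something you can cite. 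Your fallback suggestions (monotonicity within Manivel's framework, Murnaghan stability) are likewise unworked — indeed you note yourself that Murnaghan does not apply verbatim. To repair the proposal you would either need to actually establish the injectivity of $\Phi$ on the relevant isotypic components, or, more in the spirit of the paper, chain together Manivel stability, the Kadish--Landsberg lift, and the finiteness of $\dcmax(m)$.
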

Corollary~\ref{cor:degreelowerbound} is proved in Section~\ref{sec:degreelowerbound}.

Note that if $n>3 m^4$ and $d > \frac n m$, then $d>3 m^3$.
\begin{theorem}[Kronecker positivity]\label{thm:secondmain}
Let $\exceptions$ be the set of the following 6 exceptional partitions.
\[
\exceptions := \{(1),(2\times1),(4\times1),(6\times1),(2,1),(3,1)\}.
\]
Let $d\in\IN$, $n \in \IN$, $\la \vdash dn$.\\
(a) If $\bar\lambda\in\exceptions$, then $a_\la(d[n])=0$.\\
(b) If $\bar\lambda\notin\exceptions$ and there exists $m \geq 3$ such that 
$\ell(\la)\leq m^2$,
$|\bar\lambda|\leq md$,
$d > 3 m^3$, and
$n > 3 m^4$, then $g(\la,n\times d, n \times d)>0$.
\end{theorem}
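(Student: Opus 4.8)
Part (a) is a finite check: for each of the six partitions $\rho \in \exceptions$ we must show $a_{\rho(dn)}(d[n]) = 0$ for all $d,n$. Writing $\la = \rho(dn)$ with $\bar\la = \rho$, the plethysm coefficient $a_\la(d[n])$ counts $\GL$-highest weights, and since $\rho$ is tiny (at most $6$ boxes below the first row, all in a single column or nearly so), one can appeal to the known stability of plethysm coefficients $a_{\rho(dn)}(d[n])$ as $d,n \to \infty$ (the "plethysm stability" results), or directly to an explicit formula/generating function for $\Sym^d(\Sym^n(V))$ in low degrees. The cleanest route is to use the semistandard-tableau or the Carré--Leclerc type combinatorial descriptions of $a_\la(d[n])$ that are valid in these degenerate shapes, noting that a single column of even length $2,4,6$ or the shapes $(2,1),(3,1)$ simply cannot be completed; I expect this to reduce to checking a handful of monomial symmetric function coefficients in $h_d[h_n]$.

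For part (b), the plan is to build an explicit subrepresentation or, more precisely, to exhibit positivity of $g(\la, n\times d, n\times d)$ by a Littlewood--Richardson / semigroup argument. The key structural facts are: (i) Kronecker coefficients with two rectangular arguments enjoy strong stability and "additivity" properties — if $g(\alpha,n\times a,n\times a)>0$ and $g(\beta,n\times b,n\times b)>0$ then $g(\alpha+\beta, n\times(a+b), n\times(a+b))>0$ by a standard tensor-product/highest-weight concatenation argument; (ii) the transposition symmetry $g(\la,n\times d,n\times d) = g(\la^t, d\times n, d\times n)$; and (iii) the hypotheses $\ell(\la)\le m^2$, $|\bar\la|\le md$, $d>3m^3$, $n>3m^4$ force $\la$ to be an enormous rectangle $n\times d$ with a small partition $\bar\la$ (of at most $md$ boxes, spread over at most $m^2$ rows) perturbing its first row. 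So $\la = n\times d + \text{(small correction)}$, and the idea is to write $n\times d$ itself trivially ($g(n\times d, n\times d, n\times d)\ge 1$ since the trivial/sign-twisted component always appears) and then absorb the small partition $\bar\la$ using a "one-box" or "hook" addition lemma for rectangular Kroneckers, iterated $|\bar\la|$ times. Each step needs enough room, which is exactly what $d>3m^3$ and $n>3m^4$ guarantee relative to $\ell(\bar\la)$ and $|\bar\la|$.

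Concretely I would proceed as follows. First, reduce to partitions $\la$ with $\la = n\times d + \mu$ where $\mu$ is obtained from $\bar\la$ (a partition with $\le m^2$ rows and $\le md$ boxes) by shifting into the first row region; set up the bookkeeping so that "$\la$ close to $n\times d$" is made precise. Second, prove a building-block lemma: for the rectangle itself, $g(n\times d, n\times d, n\times d)\ge 1$, and more usefully, for any partition $\nu$ with $\ell(\nu)\le n$, $|\nu| = d$, one has $g(\text{rectangle} + \text{row of }\nu\text{-type}, \ldots) > 0$ in a suitable stable range — this is where I would invoke either Manivel's or Vallejo's additivity theorems, or prove the needed instance directly by exhibiting a nonzero element in the appropriate space of $\aS_{nd}$-invariants using the combinatorial model of Kronecker coefficients via 3-dimensional matrices (contingency tables / Latin-square-like fillings). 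Third, iterate the additivity along a decomposition of $\bar\la$ into at most $m^2$ single rows, using at each stage the transposition symmetry to keep both arguments rectangular, and track that the "room" needed — roughly $O(\ell(\bar\la)\cdot|\bar\la|) = O(m^3 d)$ in the $d$-direction and similarly in the $n$-direction — stays below the available $d n$, which holds precisely because $d>3m^3$ and $n>3m^4$. Fourth, dispose of the finitely many small $\bar\la$ that are \emph{not} in $\exceptions$ but are too small for the generic argument (e.g.\ $\bar\la$ of size $\le 5$ other than the six exceptions) by a separate explicit verification, possibly a short computer-assisted check, which is legitimate since there are only finitely many such shapes and the relevant Kronecker coefficients in the stable range are given by a fixed polynomial in $d,n$.

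The main obstacle I anticipate is the building-block lemma and its iteration: making the "additivity with controlled room" quantitatively sharp enough that the thresholds $d>3m^3$, $n>3m^4$ actually suffice. Bare additivity ($g(\alpha,\cdot,\cdot)>0$ and $g(\beta,\cdot,\cdot)>0 \Rightarrow g(\alpha+\beta,\cdot,\cdot)>0$) does not by itself tell you that you can \emph{reach} the desired $\la$ from the plain rectangle with $\bar\la$ added only to the first-row region; one needs a version that allows adding a prescribed small partition to a given rectangle while staying rectangular in the other two slots, and the error terms in the known stability bounds must be bounded by the gap between $|\bar\la|\le md$ and the full width $d$, and between $\ell(\la)\le m^2$ and the full height $n$. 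Equivalently, the technical heart is a "rectangular Kronecker semigroup contains a translated cone" statement, uniform in the small partition $\bar\la$ once $d,n$ are polynomially large in $m$; proving that with the stated explicit constants — rather than merely "for $n,d$ sufficiently large" — is where the real work lies.
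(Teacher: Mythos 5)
Your high-level framing of part (b) — start from the pure rectangle, feed in $\bar\lambda$ piece by piece via the Kronecker semigroup property and the transposition symmetry, and do the bookkeeping so the ``room'' stays within the bounds $d>3m^3$, $n>3m^4$ — is the right skeleton, and you correctly diagnose that the quantitative control is where the work lives. But there is a genuine gap in the decomposition step. You propose to ``iterate the additivity along a decomposition of $\bar\lambda$ into at most $m^2$ single rows.'' The semigroup property adds partitions entrywise, so the pieces of the decomposition must themselves be partitions. Decomposing a partition into its individual rows does not do this: for $\bar\lambda=(5,3,2)$, say, the would-be summands $(0,3,0)$ and $(0,0,2)$ are not partitions, and summing three single-row partitions $(5)+(3)+(2)$ lands you at $(10)$, not at $(5,3,2)$. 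What does work, and what the paper's proof is built on, is the \emph{columnwise} decomposition $\bar\lambda=\sum_j c_j\,(1^j)$ (a column of length $j$ is a partition), refined by grouping columns of the same length into near-square rectangles $(k-1)\times k$, so that one can invoke the Bessenrodt--Behns square positivity $g(k\times k,k\times k,k\times k)>0$ for the bulk, and separate hook-positivity lemmas for the remaining isolated columns. That hook positivity (Cor.~\ref{cor:hooks1}, Cor.~\ref{cor:columns1246}, built on the inductive extension Prop.~\ref{prop:extensions}) is exactly the ``rectangle plus one column'' building block your sketch needs but never pins down; without it, and without the columnwise decomposition Lemma~\ref{lem:partitiondecomposition} that sorts out which small $\rho$ require separate finite checks, the iteration cannot get off the ground.

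On part (a), your plan to compute plethysm multiplicities directly (via a combinatorial model or a generating function for $\Sym^d\Sym^n$ in low degree together with plethysm stability) is a plausible but different route. The paper instead applies Prop.~\ref{pro:main} — the inequality $a_\lambda(d[n])\leq g(\lambda,n\times d,n\times d)$ in the one-row stable range — so that the vanishing of the plethysm coefficient follows from the vanishing of the stable rectangular Kronecker coefficient $a_{\bar\lambda}$, which is a genuinely finite computation. Your route would require an independent stability argument and explicit stable plethysm values; the paper's route is shorter because it reuses machinery already developed for the degree bound, and it transparently covers all $d,n$ at once.
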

Thm.~\ref{thm:secondmain} is proved in Section~\ref{sec:secondthm}.
The proof cuts the partition $\la$ into smaller pieces in several significantly different ways
and makes heavy use of the following three properties:
\begin{itemize}
\item The semigroup property: If for 6 partitions $\la,\mu,\nu,\la',\mu',\nu'$ of $N$ we have that
$g(\la,\mu,\nu)>0$ and $g(\la',\mu',\nu')>0$, then also $g(\la+\la',\mu+\mu',\nu+\nu')>0$,
where we interpret partitions as integer vectors in order to define the sum of two partitions.
\item The transposition property: $g(\la,\mu,\nu)=g(\la,\mu^t,\nu^t)=g(\la^t,\mu^t,\nu)=g(\la^t,\mu,\nu^t)$.
\item The square positivity: For all positive $k$ we have $g(k \times k,k\times k,k\times k)>0$.
\end{itemize}
The first property is easy to see if we interpret $g(\la,\mu,\nu)$ as the dimension of the $(\la,\mu,\nu)$-highest weight vector space in
the coordinate ring of $V \otimes V \otimes V$. Here the acting group is $\GL(V)\times\GL(V)\times\GL(V)$ and the
semigroup property follows from multiplying highest weight vectors.
The second and third property follow from the character theory of the symmetric group.
While the second property is immediate, the third property (in \cite{bb}) requires reduction to the alternating group and specific properties of characters of symmetric partitions, later generalized in \cite{ppv:14} and further in \cite{pp:14b}. 

\begin{proof}[Proof of Thm.~\ref{thm:together}]
The proof is now straightforward.
Let $n > 3m^4$ and $\lambda \vdash nd$ such that $g(\lambda, n \times d, n \times d)=0$.
If $|\bar\lambda| > md$ or $\ell(\lambda) > m^2$, then $q_\lambda^m(d[n])=0$ by Thm.~\ref{thm:kadishlandsberg} and we are done.
Thus we assume from now on that $|\bar\lambda| \leq md$ and $\ell(\lambda) \leq m^2$.
If $d \leq \frac n m$, then by Cor. 1.6 we have $a_\la(d[n])\leq g(\la,n\times d,n\times d)=0$.
By \eqref{eq:omegaa} it follows $q_\lambda^m(d[n])=0$ and we are done.
Thus we also assume from now that $d > \frac n m > 3m^3$.
From $g(\la,n\times d,n\times d)=0$ we conclude with Thm.~\ref{thm:secondmain}(b) that $\bar\la \in \exceptions$.
But Thm.~\ref{thm:secondmain}(a) implies that $a_\la(d[n])=0$, which by \eqref{eq:omegaa} implies $q_\lambda^m(d[n])=0$ and we are done.
\end{proof}
\subsection{Consequences for geometric complexity theory}\label{sec:consequences}
Algebraic geometry and representation theory guarantee that if $(X_{1,1})^{n-m}\per_m \notin \overline{\GL_{n^2}\det_n}$,
then there exists a homogeneous polynomial $P$ in an irreducible representation $\{\la\}\subseteq \Sym^d(\Sym^n\IC^{n^2})$ such that
$P$ vanishes on $\overline{\GL_{n^2}\det_n}$ and $P((X_{1,1})^{n-m}\per_m)\neq 0$.
One suggestive way of finding these $P$ was \eqref{eq:positivity}.
Although Thm.~\ref{thm:together} rules out this possibility,
there are several ways in which these $P$ could still be found.
One such way, namely finding irreducible $\GL_{n^2}$-representations in
the coordinate ring $\IC[\overline{\GL_{n^2}(X_{1,1})^{n-m}\per_m}]$ that do not occur in
the coordinate ring $\IC[\overline{\GL_{n^2}\det_n}]$ has been proved impossible \cite{BIP:16}.
Not much is known if we study the actual multiplicities of irreducible $\GL_{n^2}$-representations in the coordinate rings.

We remark that the overall proof structure in \cite{BIP:16} closely mimics the proof structure of our Thm.~\ref{thm:together}:
\cite{LK:12} is used to prove a degree lower bound and then an analog to Thm.~\ref{thm:secondmain} is proved,
also by cutting the partition $\la$ into smaller pieces.
The details and the methods used in \cite{BIP:16} are very different from our paper though.

\subsection{Positivity results for Kronecker coefficients}

In Section~\ref{sec:secondthm} we prove the rectangular Kronecker positivity for a large class of partitions:
If the side lengths of the rectangles are at least quadratic in the partition length we have positivity,
see Theorem~\ref{thm:mainpositivityresult} for the precise statement.
More exact Kronecker positivity results are given in Section~\ref{sec:morepositivity}.

Combinatorial conjectures like the Saxl conjecture \cite{ppv:14,Ike:15,LS:15} are also concerned with the positivity of Kronecker coefficients.

In Section~\ref{sec:corollaries} we we prove that the saturation of the rectangular Kronecker semigroup is trivial,
we show that the rectangular Kronecker positivity stretching factor is 2 for a long first row,
and we completely classify the positivity of
rectangular limit Kronecker coefficients.

\medskip {\bf Acknowledgments.} 

We gratefully acknowledge the helpful comments of the anonymous reviewers. GP was partially supported by NSF grant DMS-1500834.

\section{Proof of the degree lower bound}\label{sec:degreelowerbound}
In this section we prove the degree lower bound Cor.~\ref{cor:degreelowerbound}.
The main ingredients are Manivel's result about limit rectangular Kronecker coefficients (Section~\ref{subsec:manivel})
and Valiant's insights about finite determinantal complexity (Section~\ref{subsec:infgoodlow}).
\subsection{Stable rectangular Kronecker coefficients}\label{subsec:manivel}
The main contribution to the specific limits of Kronecker coefficients that we are interested in in this paper comes from Manivel.
\begin{theorem}[{\cite[Thm.~1]{man:11}}]\label{thm:manivel_rect}
Fix a partition $\rho$.
The function $g(\rho(nd), n \times d, n \times d)$ is symmetric and nondecreasing in $n$ and $d$.
If $n \geq |\rho|$ we have that 
$g(\rho(nd), n \times d, n \times d) = a_{\rho}(d),$
where $a_\rho(d)$ denotes the dimension of the $\SL_d$ invariant space $\big(S_\rho(S_{(2,1^{d-2})} \IC^d )\big)^{\SL_d}$,
where $S_\rho$ and $S_{(2,1^{d-2})}$ denote Schur functors.
\end{theorem}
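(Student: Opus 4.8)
The plan is to prove the three assertions separately. The symmetry is immediate from the transposition property $g(\la,\mu,\nu)=g(\la,\mu^{t},\nu^{t})$: taking $\mu=\nu=n\times d$ gives $g(\rho(nd),n\times d,n\times d)=g(\rho(nd),d\times n,d\times n)$, which (since $\rho(nd)=\rho(dn)$) is exactly invariance of the function under interchanging $n$ and $d$. For the other two assertions I would pass to invariant theory. Put $a:=\ell(\rho)+1$ and $W:=\IC^{a}$. By Schur--Weyl duality and the Cauchy identity, using that $\big(S_{\mu}(\IC^{d})\big)^{\SL_{d}}\neq 0$ precisely when $\mu$ is rectangular of height $d$, one has
\[
g(\la,d\times n,d\times n)=\dim\big(R^{U^{-}}\big)_{-\la}\qquad\text{for all }\la\vdash nd\text{ with }\ell(\la)\le a,
\]
where $R:=\IC[\Mat_{d}^{\oplus a}]^{\SL_{d}\times\SL_{d}}$, with $\GL_{a}$ acting on the $a$ matrix slots, $\SL_{d}\times\SL_{d}$ acting by left and right multiplication, and $U^{-}$ the lower unipotent radical of $\GL_{a}$. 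Now $R$ is a finitely generated graded domain, its degree-$d$ component is the single copy $\Sym^{d}(W^{*})$, and $\delta:=\det M_{1}\in R_{d}$ is, up to a scalar, its unique lowest weight vector, of weight $-d\varepsilon_{1}$. Multiplication by $\delta$ is injective, and since it multiplies by a lowest weight vector it maps $\big(R^{U^{-}}\big)_{-\rho(nd)}$ into $\big(R^{U^{-}}\big)_{-\rho((n+1)d)}$; hence $g(\rho(nd),n\times d,n\times d)$ is nondecreasing in $n$, and — by the symmetry just proved — in $d$ as well.

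For the closed formula I would compute the localisation $R[\delta^{-1}]$. On $\{\delta\neq0\}=\GL_{d}\times\Mat_{d}^{\oplus(a-1)}$ the change of variables $N_{1}=M_{1}$, $N_{j}=M_{j}M_{1}^{-1}$ $(j\ge2)$ turns the action into $(g,h)\cdot N_{1}=gN_{1}h^{-1}$ and $(g,h)\cdot N_{j}=gN_{j}g^{-1}$; taking invariants under the right-hand $\SL_{d}$ contracts $\IC[\GL_{d}]$ to $\IC[\delta^{\pm1}]$, on which the remaining $\SL_{d}$ acts trivially, so that
\[
R[\delta^{-1}]\;\cong\;\IC[\delta^{\pm1}]\otimes\IC[\Mat_{d}^{\oplus(a-1)}]^{\SL_{d}},
\]
the second factor being the conjugation-invariant ring of $a-1$ matrices. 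Writing $\Mat_{d}=\IC\cdot I\oplus\mathfrak{sl}_{d}$ as $\SL_{d}$-modules splits off $a-1$ trace coordinates and leaves $B:=\IC[\mathfrak{sl}_{d}^{\oplus(a-1)}]^{\SL_{d}}$, and the Cauchy identity together with the self-duality of the adjoint representation gives
\[
B\;\cong\;\bigoplus_{\sigma}S_{\sigma}\big((\IC^{a-1})^{*}\big)\otimes\big(S_{\sigma}(\mathfrak{sl}_{d})\big)^{\SL_{d}}
\]
as a $\GL_{a-1}$-module. The entries of $X_{j}:=N_{j}-\tfrac{1}{d}(\operatorname{tr}N_{j})\,I$ carry $\GL_{a}$-weight $\varepsilon_{1}-\varepsilon_{j}$, the trace coordinates are killed by the root subgroups $I+cE_{j1}$ $(j\ge2)$, and $\delta$ has weight $-d\varepsilon_{1}$; carrying these weights through the localisation identifies the weight-$(-\rho(nd))$ part of $R[\delta^{-1}]^{U^{-}}$ with $\delta^{\,n}\otimes\big(B^{U^{-}_{\GL_{a-1}}}\big)_{-\rho}$, which has dimension $\dim\big(S_{\rho}(\mathfrak{sl}_{d})\big)^{\SL_{d}}$ for every $n$.

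Finally I would descend back to $R$. Such a lowest weight vector of $R[\delta^{-1}]$ has the form $\delta^{\,n}\beta$ with $\beta$ of degree $|\rho|$ in the $X_{j}$; clearing the $\delta$'s in $\beta$ rewrites it as $\delta^{\,n-|\rho|}\beta_{\rho}$ with $\beta_{\rho}\in R_{d|\rho|}$, and this lies in $R$ exactly when $n\ge|\rho|$. So for $n\ge|\rho|$ all of $\delta^{\,n}\otimes\big(B^{U^{-}_{\GL_{a-1}}}\big)_{-\rho}$ is realised inside $R$, and since $R^{U^{-}}\subseteq R[\delta^{-1}]^{U^{-}}$ this gives $g(\rho(nd),n\times d,n\times d)=\dim\big(S_{\rho}(\mathfrak{sl}_{d})\big)^{\SL_{d}}$ for $n\ge|\rho|$. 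It remains to observe that $S_{(2,1^{d-2})}(\IC^{d})$ is the adjoint representation $\mathfrak{sl}_{d}$ of $\SL_{d}$ — it has dimension $d^{2}-1$ and highest weight $\varepsilon_{1}-\varepsilon_{d}$ modulo the centre of $\GL_{d}$ — so that the stable value equals $a_{\rho}(d)=\dim\big(S_{\rho}(S_{(2,1^{d-2})}\IC^{d})\big)^{\SL_{d}}$. The part I expect to require the most care is the localisation analysis and weight bookkeeping in the last two paragraphs: one must verify that the coordinate change and the passage to $R[\delta^{-1}]$ are compatible with extracting lowest weight vectors, and the threshold $n\ge|\rho|$ appears exactly as the condition for $\delta^{\,n-|\rho|}$ to be an honest element of $R$ rather than a Laurent monomial in $\delta$.
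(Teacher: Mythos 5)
The paper states this theorem as a citation of Manivel \cite{man:11} and does not reprove it, so there is no in-paper argument to compare against. Your self-contained invariant-theoretic proof is, as far as I can check, correct, and it is close in spirit to Manivel's original argument, which likewise realises the stable rectangular Kronecker coefficient via the $\SL_d\times\SL_d$-invariant ring of $a$ matrices and its relation to plethysms of the adjoint representation. The one place that genuinely requires care — and which you flagged yourself — is the passage to the $N$-coordinates: the substitution $N_j=M_jM_1^{-1}$ is not $\GL_a$-equivariant and $R[\delta^{-1}]$ is not a $\GL_a$-module (since $g\cdot\delta$ is not a power of $\delta$ for general $g$); however $\delta$ is a lowest weight vector, so the Borel $B^-=TU^-$ fixes the line $\IC\delta$ and does act on the localisation, and all the weight and $U^-$-invariance statements only involve $B^-$ and $\mathfrak{u}^-$, whose derivation action extends uniquely to $R[\delta^{-1}]$. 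With that understood the computation goes through as you describe: $E_{j1}$ acts as $-d\,\partial/\partial\,\mathrm{tr}(N_j)$ and annihilates $\delta$ and every $X_k$, so $U^-$-invariance forces independence of the trace coordinates; the $E_{ij}$ with $i>j\ge 2$ implement the $\GL_{a-1}$-action on $(N_2,\dots,N_a)$; hence $R[\delta^{-1}]^{U^-}=\IC[\delta^{\pm1}]\otimes B^{U^-_{\GL_{a-1}}}$, the weight count pins the $\delta$-exponent to $n$ in weight $-\rho(nd)$ and yields $\dim\big(S_\rho(\mathfrak{sl}_d)\big)^{\SL_d}$, and clearing denominators requires $\delta^{|\rho|}$, i.e.\ $n\ge|\rho|$. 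The supporting identifications — $g(\lambda,d\times n,d\times n)=\dim\big(R^{U^-}\big)_{-\lambda}$ for $\ell(\lambda)\le a$, $R$ a domain so $\delta$-multiplication is an injection of lowest weight spaces proving monotonicity, the symmetry via $g(\lambda,\mu,\nu)=g(\lambda,\mu^t,\nu^t)$ with $\rho(nd)=\rho(dn)$, $S_{(2,1^{d-2})}\IC^d\cong\mathfrak{sl}_d$ as an $\SL_d$-module, and self-duality of the adjoint in the Cauchy step — are all correct. So the argument is sound.
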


\begin{remark}
Since $g(\rho(nd), n \times d, n \times d)$ is symmetric in $n$ and $d$,
if both $d \geq |\rho|$ and $n \geq |\rho|$, then
$g(\rho(nd), n \times d, n \times d) = a_\rho$ and $a_{\rho}$ depends only on $\rho$. \rightbox
\end{remark}

The pairs $(n,d)$ for which the Kronecker coefficient $g(\rho(nd), n \times d, n \times d)$ reaches its maximum form the \emph{stable range} $\St(\rho)$,
a monotone subset of $\mathbb{Z}^2$:
\[
\St(\rho) := \{(n,d)  \mid  g(\rho(nd) , n \times d, n\times d) = a_\rho \}.
\]

Analogously, the 1-stable range is defined as
\[
\St^1(\rho) := \{(n,d)  \mid  g(\rho(nd) , n \times d, n\times d) = a_\rho(d) \}.
\]

\begin{example}
The following tables show the Kronecker coefficients $g(\rho(nd),n \times d, n\times d)$ for $\rho=(6)$, $\rho=(2,1,1,1,1,1)$, and $\rho=(3,1,1,1,1)$,
from left to right.
\begin{center}
\begin{minipage}{11cm}
\begin{minipage}{3cm}
\begin{verbatim}
0 0 0 0 0 0 ...
0 0 0 0 0 1 ...
0 0 0 1 1 2 ...
0 0 1 2 2 3 ...
0 0 1 2 2 3 ...
0 1 2 3 3 4 ...
. . . . . . .
. . . . . .  .
. . . . . .   .
\end{verbatim}
\end{minipage}
\mbox{~~~~~}
\begin{minipage}{3cm}
\begin{verbatim}
0 0 0 0 0 0 0 ...
0 0 0 0 0 0 0 ...
0 0 1 1 1 1 1 ...
0 0 1 2 2 2 2 ...
0 0 1 2 2 2 2 ...
0 0 1 2 2 2 2 ...
0 0 1 2 2 2 2 ...
. . . . . . . .
. . . . . . .  .
. . . . . . .   .
\end{verbatim}
\end{minipage}
\mbox{~~~~~}
\begin{minipage}{3cm}
\begin{verbatim}
0 0 0 0 0 0 0 ...
0 0 0 0 0 0 0 ...
0 0 0 1 2 2 2 ...
0 0 1 3 4 4 4 ...
0 0 2 4 5 5 5 ...
0 0 2 4 5 5 5 ...
0 0 2 4 5 5 5 ...
. . . . . . . .
. . . . . . .  .
. . . . . . .   .
\end{verbatim}
\end{minipage}
\end{minipage}
\end{center}
\rightbox
\end{example}

\subsection{Complexity lower bounds literally too good to be true: inequality of multiplicities and degree lower bound}\label{subsec:infgoodlow}
In this section we prove Corollary~\ref{cor:degreelowerbound} by using the finiteness of the determinantal complexity.

Define the finite dimensional vector space $V_m := \Sym^m \IC^{m^2}$ of homogeneous degree $m$ polynomials in $m^2$ variables $X_{1,1}, X_{1,2}, \ldots, X_{m,m}$.
For $n\geq m$ and $f \in V_m$
let $f^{\sharp n} \in V_n$ denote the product $(X_{1,1})^{n-m} f$.
Let \[\Gamma_m^n := \overline{ \{ g f^{\sharp n} \mid g \in \GL_{n^2}, f \in V_m\}} \subseteq V_n\] denote the variety of padded polynomials.
We define $o^{m}_\la(d[n])$
to be the multiplicity of the irreducible $\GL_{n^2}$-representation $\{\la\}$ in the coordinate ring $\IC [ \Gamma_m^n ]$.
In the notation from Section~\ref{subsec:compllowboundsrepth} we have
$q^m_\la(d[n])\leq o^{m}_\la(d[n])$.
Since $\Gamma_m^n \subseteq V_n$ is an affine subvariety, we have
$
0 \leq o_\la^m(d[n]) \leq a_\la(d[n]),
$
where $d := |\la|/n$. If $|\la|$ is not divisible by $n$, then $0 = o_\la^m(d[n]) = a_\la(d[n])$.

\begin{definition}[$m$-obstruction of quality $n$]\label{def:f_obstr}
An $m$-obstruction of quality $n$ is defined to be a partition $\la$ such that $|\lambda|=nd$ for some $d \in \mathbb N$
and
$
g(\la,n\times d, n \times d) < o_{\la}^m(d[n]).
$
\end{definition}

As the following proposition shows, the existence of an $m$-obstruction of quality $n$
proves the existence of an $f\in V_m$ that cannot be written as an $n \times n$ determinant.
\begin{proposition}\label{prop:peter-weyl}
If there exists an $m$-obstruction of quality $n$, then $\dcorig(f) > n$ for some $f \in V_m$.
\end{proposition}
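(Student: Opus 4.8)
The plan is to argue by contradiction using the defining property of determinantal complexity together with standard algebraic-geometry/representation-theory facts about coordinate rings of orbit closures. Suppose $\la$ is an $m$-obstruction of quality $n$, so $|\la| = nd$ and $g(\la, n\times d, n\times d) < o^m_\la(d[n])$, but assume for contradiction that every $f \in V_m$ satisfies $\dcorig(f) \leq n$. The first step is to translate this assumption geometrically: $\dcorig(f)\leq n$ means $f$ (up to the usual affine-linear substitution, which is absorbed by the $\GL_{n^2}$-action after padding — this is exactly why the padded $\per^n_m$ and the variety $\Gamma^n_m$ are set up the way they are in Section~\ref{subsec:infgoodlow}) lies in $\overline{\GL_{n^2}\det_n}$. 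Running this over all $f\in V_m$ and taking closures, the assumption gives the inclusion of varieties
\[
\Gamma_m^n \subseteq \overline{\GL_{n^2}\det_n} \subseteq V_n.
\]
Here one should be slightly careful about the affine-linear vs.\ linear substitution point; I would invoke the appendix (Section~\ref{subsec:additionalvar}) to justify that padding by a power of $X_{1,1}$ turns affine-linear determinantal representations into the action of $\GL_{n^2}$ on $\Sym^n$, so that the finiteness hypothesis really does yield this containment.

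The second step is to pass from the inclusion of varieties to a comparison of coordinate-ring multiplicities. A closed inclusion $Y \subseteq Z$ of $\GL_{n^2}$-stable affine varieties induces a surjection of graded $\GL_{n^2}$-algebras $\IC[Z] \twoheadrightarrow \IC[Y]$, hence in each degree $d$ a surjection of $\GL_{n^2}$-modules; by complete reducibility of polynomial $\GL_{n^2}$-representations, the multiplicity of any irreducible $\{\la\}$ can only drop, so
\[
o^m_\la(d[n]) \;=\; \mult\big(\{\la\},\, \IC[\Gamma_m^n]_d\big) \;\leq\; \mult\big(\{\la\},\, \IC[\overline{\GL_{n^2}\det_n}]_d\big).
\]
The third step identifies the right-hand multiplicity with a rectangular Kronecker coefficient. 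This is the classical computation underlying Theorem~\ref{thm:strategy}: the coordinate ring of $\overline{\GL_{n^2}\det_n}$ in degree $d$ embeds into (indeed, by a theorem going back to the GCT literature, equals for the relevant $\la$) the space of $\SL_n \times \SL_n$-invariants in $\Sym^d(\Sym^n(\IC^n \otimes \IC^n))$ built from the highest-weight structure, and unwinding this via the Cauchy/plethysm identity expresses $\mult(\{\la\},\IC[\overline{\GL_{n^2}\det_n}]_d)$ as $g(\la, n\times d, n\times d)$ (the relevant inequality direction $\mult(\{\la\},\IC[\overline{\GL_{n^2}\det_n}]_d) \le g(\la,n\times d,n\times d)$, which is all we need, is exactly \cite[eq.~(5.2.7)]{BLMW:11} cited in Theorem~\ref{thm:strategy}). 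Chaining the three steps gives
\[
o^m_\la(d[n]) \;\leq\; g(\la, n\times d, n\times d),
\]
contradicting the assumption that $\la$ is an $m$-obstruction of quality $n$. Hence some $f\in V_m$ has $\dcorig(f) > n$.

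I expect the main obstacle to be the bookkeeping in the first and third steps rather than anything deep. In the first step one must handle the affine-linear-to-linear reduction cleanly (padding, and the fact that $\Gamma_m^n$ was defined precisely to make the orbit under $\GL_{n^2}$ of \emph{all} degree-$m$ polynomials in $m^2$ variables, not just the permanent); here I would lean on the appendix and on Valiant's original argument that a size-$n$ affine-linear determinantal expression of $f$ places $f^{\sharp n}$ in the $\GL_{n^2}$-orbit closure of $\det_n$. In the third step one must be careful that the standard computation of $\IC[\overline{\GL_{n^2}\det_n}]$ only gives the multiplicity \emph{bound} $\le g(\la,n\times d,n\times d)$ and not necessarily equality — but since we only need the inequality, citing Theorem~\ref{thm:strategy} (equivalently \cite[eq.~(5.2.7)]{BLMW:11}) suffices and the proposition follows.
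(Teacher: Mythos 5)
Your proposal is correct and follows essentially the same route as the paper: argue by contradiction, deduce the inclusion $\Gamma_m^n \subseteq \overline{\GL_{n^2}\det_n}$, pass to the induced surjection of graded coordinate rings to compare multiplicities, and bound $\mult_\la(\IC[\overline{\GL_{n^2}\det_n}])$ by $g(\la, n\times d, n\times d)$ via \cite[eq.~(5.2.7)]{BLMW:11}. The only cosmetic difference is that the paper invokes Schur's lemma where you invoke complete reducibility, and you spend more words on the affine-to-linear padding step that the paper leaves implicit in the definition of $\Gamma_m^n$; neither changes the substance.
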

\begin{proof}
By the algebraic Peter-Weyl theorem \cite[eq.~(5.2.7)]{BLMW:11} it follows:
$
\mult_\la (\IC[\overline{\GL_{n^2}\det_n}]) \leq g(\la,n\times d, n \times d).
$
Since an $m$-obstruction of quality $n$ exists, it follows that
\begin{equation*}\tag{$\ast$}
\mult_\la (\IC[\overline{\GL_{n^2}\det_n}]) < o_{\la}^m(d[n]).
\end{equation*}
Assume for the sake of contradiction that $\dcorig(f)\leq n$ for all $f \in V_m$.
This implies that $\Gamma_m^n \subseteq \overline{\GL_{n^2}\det_n}$ as an affine subvariety.
The restriction of functions is a $\GL_{n^2}$ equivariant surjection between the homogeneous degree $d$ parts of the coordinate rings:
$
\IC[\overline{\GL_{n^2}\det_n}]_d \twoheadrightarrow \IC[\Gamma_m^n]_d.
$
By Schur's lemma it follows
$
\mult_\la (\IC[\overline{\GL_{n^2}\det_n}]) \geq \mult_\la(\IC[\Gamma_m^n]) = o_\la^m(d[n]),
$
which is a contradiction to $(\ast)$.
\end{proof}

\begin{proposition}[{\cite{LK:12}}]\label{prop:kadish-landsberg}
\begin{enumerate}
\item[(a)] Given $\la \vdash nd$, if $|\bar\la|>dm$, then $o_\la^m(d[n]) = 0$.
\item[(b)] Given $\la \vdash md$, we have $o_{\la+(dn-dm)}^m(d[n]) \geq a_\la(d[m])$.
\end{enumerate}
\end{proposition}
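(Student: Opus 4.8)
The plan is to work inside $\IC[V_n]_d$, decomposed into $\GL_{n^2}$-isotypic pieces, and to use that $\Gamma_m^n=\overline{\GL_{n^2}\cdot S_0}$ for the linear subspace $S_0:=X_{1,1}^{\,n-m}\cdot V_m\subseteq V_n$ (the padded polynomials before twisting by $\GL_{n^2}$). For a $\GL_{n^2}$-submodule $M\subseteq\IC[V_n]_d$ one has $M|_{\Gamma_m^n}=0$ iff $M|_{S_0}=0$ (translate by $g$ and use $g\cdot M=M$), so for both parts it suffices to understand, at the level of torus weights, which copies of $\{\la\}$ survive restriction to $S_0$. Two elementary facts do most of the work. \emph{(i)} In any copy of $\{\la\}$ every torus weight has all of its entries of absolute value $\le\la_1$, since these entries are, up to sign, contents of semistandard tableaux of shape $\la$ and a fixed value occupies at most one box per column. \emph{(ii)} On $S_0$ only the coordinates $c_\alpha$ with $\alpha_{1,1}\ge n-m$ are nonzero, so a torus-weight vector of $\IC[V_n]_d$ that does not vanish on $S_0$ has $X_{1,1}$-weight of absolute value $\ge d(n-m)$.

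For part (a): if $|\bar\la|>dm$, i.e.\ $\la_1<d(n-m)$, then \emph{(i)} and \emph{(ii)} are incompatible, so every torus-weight vector -- hence every copy -- of $\{\la\}$ in $\IC[V_n]_d$ vanishes on $S_0$. Therefore $o_\la^m(d[n])=\mult_\la\IC[\Gamma_m^n]_d=0$.

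For part (b) I would build highest weight vectors explicitly. We may assume $\ell(\la)\le m^2$, otherwise $a_\la(d[m])=0$ and there is nothing to prove. Choose the Borel of $\GL_{n^2}$ so that $X_{1,1}$ is the lowest of the $n^2$ linear forms and, more generally, the $m^2$ matrix variables are the $m^2$ lowest of them. Then a highest weight vector $\phi$ of type $\la$ in $\IC[\Sym^m\IC^{n^2}]_d$ has its weight supported on the matrix variables (because $\ell(\la)\le m^2$), hence $\phi$ depends only on their coefficients; there are $a_\la(d[m])$ linearly independent such $\phi$. Put $\Phi:=\phi\circ\partial_{X_{1,1}}^{\,n-m}$, where $\partial_{X_{1,1}}^{\,n-m}\colon V_n\to\Sym^m\IC^{n^2}$ is the $(n-m)$-fold partial derivative. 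Since $X_{1,1}$ is the lowest linear form it is never the source of a raising operator, so $\partial_{X_{1,1}}$ commutes with every $E_{ij}=X_i\partial_{X_j}$, $i<j$, of $\GL_{n^2}$; hence $\Phi$ is again a highest weight vector, and a weight count identifies its type as $\la+(dn-dm)$. To see that the $a_\la(d[m])$ vectors $\Phi$ are linearly independent modulo the ideal of $\Gamma_m^n$, restrict them to $S_0\subseteq\Gamma_m^n$: for $P=X_{1,1}^{\,n-m}f$ with $f\in V_m$, the Leibniz rule gives $\partial_{X_{1,1}}^{\,n-m}(X_{1,1}^{\,n-m}f)=D(f)$, where $D$ is the endomorphism of $V_m$ multiplying the $X_{1,1}$-degree-$a$ graded piece by an explicit strictly positive integer; in particular $D$ is invertible. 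Thus $\Phi|_{S_0}=\phi\circ D$ under $S_0\cong V_m$, $\phi\mapsto\phi\circ D$ is a linear automorphism of $\IC[V_m]_d$, and the $\Phi$ stay independent on $\Gamma_m^n$. Being highest weight vectors of type $\la+(dn-dm)$, they yield $o_{\la+(dn-dm)}^m(d[n])\ge a_\la(d[m])$.

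Part (a) and the highest-weight bookkeeping in (b) are routine. The point that needs care in (b) is that the constructed $\Phi$ must not all degenerate to $0$ on the orbit closure, and this is exactly what the invertibility of the block-scalar operator $D$ guarantees; it is here that one genuinely uses that the padding variable $X_{1,1}$ is one of the $m^2$ matrix variables, which is why $\Phi|_{S_0}$ equals $\phi\circ D$ rather than $\phi$ itself. A possible alternative would be to describe $\IC[\Gamma_m^n]$ via the total space of the bundle of padded forms over $\mathbb P(\IC^{n^2})$ and compute sections by Borel--Weil, but the direct highest-weight-vector argument above seems shorter.
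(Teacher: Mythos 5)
Your proof is correct, and it is genuinely more self-contained than the paper's, which for both parts simply points to \cite{LK:12}. Two comparisons are worth making. For part (a) you give a direct torus-weight computation (entries of weights in $\{\la\}$ bounded in absolute value by $\la_1$, versus the forced $X_{1,1}$-weight of size $\geq d(n-m)$ on $S_0$), whereas the paper only cites \cite[Thm~1.3]{LK:12}; your argument is short and self-contained and could replace the citation. For part (b) the mechanism is the same -- build highest weight vectors on $V_n$ from those on $V_m$ and show they remain independent on the orbit closure -- but the details differ: the paper invokes the map $P\mapsto P^{\sharp n}$ from \cite{LK:12}, whose defining property is that the evaluations agree on the nose, $P(f)=P^{\sharp n}(f^{\sharp n})$, and then gets independence by evaluating a basis at generic points $f_j$; you instead build $\Phi=\phi\circ\partial_{X_{1,1}}^{\,n-m}$ explicitly (carefully choosing $X_{1,1}$ to be the lowest weight line so $\partial_{X_{1,1}}$ kills no raising operator), accept the non-identity restriction $\Phi|_{S_0}=\phi\circ D$, and recover independence from the invertibility of the block-diagonal operator $D$. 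Your version makes explicit what $P^{\sharp n}$ is (up to the diagonal twist $D$) and replaces the generic-evaluation step by a purely algebraic one; the trade-off is the extra care with the Borel convention and the bookkeeping around $D$, which you handle correctly. One small presentational remark: the observation that $\phi$ of type $\la$ with $\ell(\la)\le m^2$ is literally a function of the matrix-variable coefficients only (hence lives in $\IC[V_m]_d$) deserves a one-line justification -- all monomial weights in $\IC[\Sym^m\IC^{n^2}]_d$ are nonpositive, so a weight supported on the last $m^2$ coordinates forces every contributing coordinate function $c_\alpha$ to have $\alpha$ supported there -- but this is exactly the argument you are implicitly using, so there is no real gap.
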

\begin{proof}
Part (a) is the first part of \cite[Thm 1.3]{LK:12}.
Part (b) follows from the proof of the second part of \cite[Thm 1.3]{LK:12}:
From a highest weight vector $P$ of weight $\la\vdash md$ in $\Sym^d(V_m)$ they construct
a highest weight vector $P^{\sharp n}$ of weight $\la+(d(n-m))\vdash nd$ in $\Sym^d(V_n)$ such that
the evaluations $P(f) = P^{\sharp n}(f^{\sharp n})$ coincide.
Take a basis $P_1,\ldots,P_{a_\la(d[m])}$ of the highest weight vector space of weight $\la$ in $\Sym^d(V_m)$
and take general points $f_1,\ldots,f_{a_\la(d[m])}$ from $V_m$.
Then the evaluation matrix $\big( P_i(f_j) \big)_{1\leq i,j\leq a_\la(d[m])}$ has full rank.
Thus $\big( P^{\sharp n}_i(f^{\sharp n}_j) \big)_{1\leq i,j\leq a_\la(d[m])}$ has full rank,
which implies the statement.
\end{proof}

Let $\dcmax(m)$ denote the maximum $\max\{\dcorig(f) \mid f \in V_m\}$.
The following lemma shows that $\dcmax(m)$ is a well-defined finite number.

\begin{lemma}\label{lem:finite_dc}
Fix $m \in \mathbb N$.
There is a number $\dcmax(m) \in \mathbb N$ such that
for all $f \in V_m$
we have $\dcorig(f)\leq \dcmax(m)$.
\end{lemma}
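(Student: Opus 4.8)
The plan is to realize each ``sublevel set'' $D_n := \{\,f \in V_m : \dcorig(f)\le n\,\}$ as a constructible subset of the finite-dimensional affine space $V_m$, and then to exploit that $V_m$ is irreducible over the uncountable field $\IC$ to extract a uniform bound. For $n\ge m$ consider the polynomial map
\[
\Phi_n \colon \Mat_{n\times n}(\IC)^{m^2+1}\longrightarrow \Sym^{\le n}(\IC^{m^2}),\qquad
\big(A_0,(A_{ij})\big)\longmapsto \det\Big(A_0+\sum_{i,j}X_{ij}A_{ij}\Big),
\]
which is a morphism of affine varieties, since $\det$ is polynomial in the matrix entries. By definition $\dcorig(f)\le n$ means $f=\det M$ for some affine-linear $n'\times n'$ matrix $M$ with $n'\le n$; padding $M$ with an identity block of size $n-n'$ leaves the determinant unchanged, so this holds precisely when $f$ lies in the image of $\Phi_n$. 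Hence $D_n$ equals the image of $\Phi_n$ intersected with the linear subspace $V_m$, which is constructible by Chevalley's theorem. Moreover $D_m\subseteq D_{m+1}\subseteq\cdots$, and by Valiant's theorem $\dcorig(f)<\infty$ for every $f$, so $\bigcup_{n\ge m}D_n=V_m$.

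Next I would run a Noetherian induction on dimension, whose engine is the standard fact that a nonempty irreducible variety over $\IC$ is not a countable union of proper Zariski-closed subsets. Applying this to $V_m=\bigcup_n\overline{D_n}$ yields an $N$ with $\overline{D_N}=V_m$; since $D_N$ is constructible and dense it contains a nonempty open $U\subseteq V_m$, i.e.\ $\dcorig\le N$ on $U$. The exceptional locus $V_m\setminus U$ is a proper closed subvariety; on each of its irreducible components $W$ the sets $D_n\cap W$ are again increasing constructible subsets with union $W$, so the same argument gives a bound valid on a dense open subset of $W$, and one recurses on the strictly lower-dimensional leftover. The recursion terminates, stratifying $V_m$ into finitely many locally closed pieces with $\dcorig$ bounded by $N_1,\dots,N_s$ respectively; put $\dcmax(m):=\max\{N_1,\dots,N_s\}$.

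I expect no genuine obstacle here: the only non-elementary ingredient is the \emph{Baire category for varieties} fact above, which is well known and is the sole place the uncountability of $\IC$ is used. The one subtlety worth flagging is that $D_n$ is merely constructible, not closed, so one cannot read a bound straight off $\overline{D_N}=V_m$ and must instead peel off dense opens and induct on dimension as above. For readers who prefer an effective statement, there is also a softer route avoiding algebraic geometry: every $f\in V_m$, written in the monomial basis, is a sum of $\binom{m^2+m-1}{m}$ monomials, each a product of $m$ variables and thus of formula size $O(m)$, so $f$ has formula size $O\!\big(m\binom{m^2+m-1}{m}\big)$, and Valiant's construction bounds $\dcorig$ by a polynomial in the formula size, giving an explicit value for $\dcmax(m)$.
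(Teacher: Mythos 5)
Your main argument (Chevalley plus Noetherian induction) is correct but takes a genuinely different and heavier route than the paper, while the ``softer route'' you sketch at the very end is essentially the paper's own proof: every $f\in V_m$ has at most $\binom{m^2+m-1}{m}$ monomials, and by Valiant's construction a polynomial with $r$ monomials has $\dcorig\le rm$, giving the explicit bound $\dcmax(m)\le m\binom{m^2+m-1}{m}$.

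The comparison is instructive. The paper's one-line argument is elementary, effective, and avoids any geometry; it only needs a coarse count of monomials and the formula-size-to-determinant translation from Valiant. Your main argument instead packages the problem abstractly: the sublevel set $D_n=\{f:\dcorig(f)\le n\}$ is the image of a morphism $\Phi_n$ intersected with $V_m$, hence constructible, and an increasing countable family of constructible sets filling an irreducible $\IC$-variety must stabilize up to a proper closed subset, after which one recurses on dimension. This is correct, but it is non-constructive and crucially uses the uncountability of $\IC$ (over $\overline{\IF_p}$, for example, the Baire-type fact fails). What your version buys is generality: it shows that \emph{any} pointwise-finite ``complexity measure'' whose sublevel sets are constructible in a fixed finite-dimensional family of polynomials is automatically uniformly bounded, without ever producing an explicit bound. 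For the purposes of Prop.~\ref{pro:main}, though, the paper's effective statement is strictly simpler and suffices; I would lead with the monomial-count argument and keep the constructibility argument only as a remark on generality.
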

\begin{proof}
From \cite{Val:79b} it follows that if $f$ has $r$ monomials, then $\dcorig(f)\leq rm$.
In particular, since every $f \in V_m$ has at most $\binom{m+m^2-1}{m}$ monomials, it follows that
$\forall f \in V_m : \ \dcorig(f) \leq \binom{m+m^2-1}{m}m$.
\end{proof}

\begin{proposition}[Inequality of Multiplicities]\label{pro:main}
Fix $\rho$, and let $(n,d) \in \St^1(\rho)$,
which is true in particular if $n \geq |\rho|$.
Let $\la = \rho(nd)$. Then 
$
g( \la , n \times d, n \times d) \geq a_\la(d[n]).
$
\end{proposition}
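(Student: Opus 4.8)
The plan is to combine Manivel's stable computation of the rectangular Kronecker coefficient (Theorem~\ref{thm:manivel_rect}) with a lower bound for the plethysm multiplicity $a_\la(d[n])$ coming from the fact that we are looking at the ``long first row'' partition $\la=\rho(nd)$. The key point is that both quantities, in the appropriate stable regime, should be controlled by the same $\SL_d$-invariant space attached to $\rho$, namely $\bigl(S_\rho(S_{(2,1^{d-2})}\IC^d)\bigr)^{\SL_d}$, whose dimension is $a_\rho(d)$. Since $(n,d)\in\St^1(\rho)$ by hypothesis, the left-hand side $g(\la,n\times d,n\times d)$ equals exactly $a_\rho(d)$; so the whole proposition reduces to showing $a_\la(d[n])\le a_\rho(d)$.

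First I would recall the classical description of the plethysm $\Sym^d(\Sym^n V)$ and its relation to $\GL$-invariants: the multiplicity $a_\la(d[n])$ of $\{\la\}$ with $\la=\rho(nd)$, $\ell(\rho)\le n$, in $\Sym^d(\Sym^n V)$ stabilizes (in $n$) and is governed by $\SL$-invariant theory. Concretely, one uses the standard fact (e.g.\ via the Cauchy/Littlewood identities, or the $\GL\times\GL$ description of $\Sym^d(\Sym^n(A\otimes B))$) that the multiplicity of a long-first-row partition $\rho(nd)$ in $\Sym^d(\Sym^n V)$ equals the dimension of a highest-weight space that, after removing the first row, becomes an $\SL_d$-invariant space in $S_\rho$ applied to a fixed $\SL_d$-module. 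I would identify that module as $S_{(2,1^{d-2})}\IC^d$, exactly the one appearing in Theorem~\ref{thm:manivel_rect}; this is not a coincidence, since Manivel's proof itself passes through the plethysm $\Sym^d(\wedge^2)$-type computation. Hence $a_{\rho(nd)}(d[n]) = \dim\bigl(S_\rho(S_{(2,1^{d-2})}\IC^d)\bigr)^{\SL_d} = a_\rho(d)$ for $n$ large enough (and $\le a_\rho(d)$ in general by monotonicity of plethysm multiplicities in the number of variables / length of the first row).

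Then I would assemble the inequality: $g(\la,n\times d,n\times d)=a_\rho(d)=a_\la(d[n])$ when $(n,d)$ lies in the $1$-stable range, giving the ``$\geq$'' (in fact equality) claimed. For the parenthetical remark that $n\ge|\rho|$ suffices, I would simply invoke Theorem~\ref{thm:manivel_rect}, which states precisely that $g(\rho(nd),n\times d,n\times d)=a_\rho(d)$ once $n\ge|\rho|$, so $(n,d)\in\St^1(\rho)$ in that case; the plethysm side is already stable in $n$ for $n\ge\ell(\rho)$, which is implied by $n\ge|\rho|$.

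The main obstacle I anticipate is the precise bookkeeping in the second step: matching the stable plethysm multiplicity $a_{\rho(nd)}(d[n])$ with the $\SL_d$-invariant dimension $a_\rho(d)$, i.e.\ verifying that the same Schur functor $S_{(2,1^{d-2})}$ shows up on both the Kronecker and the plethysm side. This requires either citing the relevant stability result for plethysm of ``doubly long'' partitions (a result of the Carré--Leclerc / Manivel circle) or reproving it via an explicit highest-weight-vector construction: take a highest weight vector of weight $\rho$ in $S_\rho$ of the $\SL_d$-module $S_{(2,1^{d-2})}\IC^d$, inflate it to a highest weight vector of weight $\rho(nd)$ in $\Sym^d(\Sym^n V)$, and check the inflation is injective on the invariant space — entirely parallel to the argument in Proposition~\ref{prop:kadish-landsberg}(b). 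Everything else (the stable-range hypothesis, monotonicity, and the final chaining of (in)equalities) is routine once that identification is in hand.
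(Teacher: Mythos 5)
Your proposal takes a genuinely different route from the paper, but it has a real gap at its center. The paper's proof of Proposition~\ref{pro:main} is an \emph{indirect} argument that never analyzes the plethysm coefficient on its own terms: it assumes the inequality fails for some $(m,d)\in\St^1(\rho)$, uses the stability of the Kronecker side together with Prop.~\ref{prop:kadish-landsberg}(b) to manufacture an $m$-obstruction of quality $n$ for every $n\geq m$, invokes Prop.~\ref{prop:peter-weyl} to produce $f_n\in V_m$ with $\dcorig(f_n)>n$, and then derives a contradiction from the finiteness of $\dcmax(m)$ (Lem.~\ref{lem:finite_dc}). The authors explicitly flag this as noteworthy: ``Interestingly the proof of this purely representation theoretic statement uses the finiteness of the determinantal complexity.'' Your plan instead tries to prove the inequality head-on by identifying the stable value $\lim_{n}a_{\rho(nd)}(d[n])$ with Manivel's $a_\rho(d)=\dim\bigl(S_\rho(S_{(2,1^{d-2})}\IC^d)\bigr)^{\SL_d}$.

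The gap is precisely the step you yourself call the ``main obstacle.'' First, the assertion that the stable plethysm coefficient $\lim_n a_{\rho(nd)}(d[n])$ equals $a_\rho(d)$ is not established anywhere in your sketch and, as far as the paper is concerned, is not known: if it were, Prop.~\ref{pro:main} would actually be an equality in the $1$-stable range, and the authors would have had no reason to resort to the determinantal-complexity detour. What \emph{is} known (Brion) is that $a_{\rho(nd)}(d[n])$ is nondecreasing in $n$ and eventually constant, but the identification of that limit with the $\SL_d$-invariant dimension in $S_\rho(\mathfrak{sl}_d)$ is exactly what needs proof, and you don't give one. Second, the concrete construction you propose to close the gap runs in the wrong direction: inflating a highest weight vector from $\bigl(S_\rho(S_{(2,1^{d-2})}\IC^d)\bigr)^{\SL_d}$ into the weight-$\rho(nd)$ highest-weight space of $\Sym^d(\Sym^n V)$ and checking injectivity would yield $a_\rho(d)\leq a_{\rho(nd)}(d[n])$, whereas the proposition requires the opposite inequality $a_{\rho(nd)}(d[n])\leq a_\rho(d)$. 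You would need a surjection (or an injection the other way), and nothing in the sketch supplies it. Until that is fixed, the argument does not prove the proposition.
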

Interestingly the proof of this purely representation theoretic statement uses the finiteness of the determinantal complexity.
\begin{proof}[Proof of Prop.~\ref{pro:main}]
Assume the contrary, i.e., there exists a $\rho$ and $(m,d) \in \St^1(\rho)$ with
$
g(\rho(md), m \times d, m \times d) < a_{ \rho(md)}(d[m]).
$
Since $(m,d) \in \St^1(\rho)$, the Kronecker coefficient $g(\rho(nd),n\times d,n \times d)$ is the same for all $(n,d)$ with $n\geq m$.
Thus we have 
$
g(\rho(n d), n \times d, n \times d) <  a_{\rho( md )}(d[m])
$
for all $n$ with $n\geq m$.
By Prop.~\ref{prop:kadish-landsberg}(b) we have $o_{\rho(n d)}^{m}(d[n])\geq a_{\rho( md )}(d[m])$
for all $n \geq m$.
This implies $\rho(nd)$ is an $m$-obstruction of quality $n$ for all $n \geq m$.
By Prop.~\ref{prop:peter-weyl} there exist functions $f_n \in V_m$
such that $\dcorig(f_n) > n$ for every $n > m$.
In particular we have
$
\dcorig(f_{\dcmax(m)}) > \dcmax(m),
$
in contradiction to Lem.~\ref{lem:finite_dc}.
\end{proof}

\begin{proof}[Proof of Corollary~\ref{cor:degreelowerbound}]
The proof is now immediate.
By assumption we have $|\bar\la|\leq md$. By Prop.~\ref{pro:main} we have
$(n,d)\notin\St^1(\bar\la)$, so $|\bar\la|>n$.
Thus $dm \geq |\bar\la| > n$ and therefore $d > \frac n m$.
\end{proof}

\section{Kronecker positivity: A simplified version}\label{sec:didact}
In this section we prepare the reader for the combinatorially intricate arguments to come in the proof of Thm.~\ref{thm:secondmain}.
This section is a purely didactical one. We prove a weaker statement (Thm.~\ref{thm:didact}) than Thm.~\ref{thm:secondmain} in the following sense.
The bound we obtain is weaker and we exclude column lengths 2, 3, 5, and 7 from $\la$, which are the column lengths that cause most of the technical issues.
The paper is self-contained without this section \ref{sec:didact}. Neither Thm.~\ref{thm:didact} nor its proof are referenced later.
On the contrary, to prove Thm.~\ref{thm:didact} we use two positivity results (Lem.~\ref{lem:rect_pos} and Cor.~\ref{cor:hooks1}) that will be proved in section~\ref{sec:secondthm}.
\begin{theorem}\label{thm:didact}
Let $m\geq 3$ and let $n>m^9$ and $d > m^8$.
Let $\la \vdash nd$ with $|\bar\la|\leq md$ and $\ell(\la)\leq m^2$.
If $\la$ has no column of length 2, 3, 5, or 7, then $g(\la,n \times d, n \times d)>0$.
\end{theorem}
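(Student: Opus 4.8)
The plan is to decompose the partition $\la$ into pieces, apply known Kronecker-positivity building blocks to each piece, and then reassemble via the semigroup property. The three tools at our disposal are the semigroup property, the transposition property, and square positivity $g(k\times k, k\times k, k\times k)>0$; from the last one (combined with transposition and the semigroup property) we also get, for partitions whose shape is rectangular or a single hook, auxiliary positivity lemmas, which the excerpt names Lem.~\ref{lem:rect_pos} and Cor.~\ref{cor:hooks1}. Since $g(\la, n\times d, n\times d)$ is symmetric under transposing the first two arguments simultaneously, working with $\la$ is equivalent to working with $\la^t$, so it is natural to read $\la$ columnwise: its column multiplicities $c_j = \#\{i : \la^t_i = j\}$ record how many columns of each length $j$ appear. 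The hypotheses $\ell(\la)\le m^2$ and $|\bar\la|\le md$ mean all columns have length at most $m^2$ and at most $md$ boxes lie outside the first row, i.e. $\sum_{j\ge 2} (j-1)\,(\text{number of columns of length } j) \le md$ while the first row has length $\la_1 \ge d(n-m)$, which is huge.

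First I would isolate the ``bulk'' of $\la$: the long first row. Write $\la = \mu + \nu$ where $\mu$ collects the part of the first row that we will feed into square-positivity-type building blocks, and $\nu = \bar\la$ together with a short remaining first row handles the ``small'' part. Concretely, since $d$ and $n$ are both polynomially large in $m$ and $n > m^9$, $d > m^8$, we can afford to carve $n\times d$ into a sum of many small squares $k\times k$ plus controlled rectangular and hook remainders; on the $\la$-side the first row is long enough (length $\ge d(n-m) \gg $ anything bounded by $m^2$ or $md$) to absorb the rows of all these building blocks. The constraint that $\la$ has no column of length $2,3,5,7$ is exactly what lets us tile the nontrivial columns: a column of length $j \notin \{2,3,5,7\}$ with $j \le m^2$ can be built (via the semigroup and transposition properties, feeding hooks and squares) — lengths $1,4,6$ and all $j \ge 8$ are reachable, whereas $2,3,5,7$ are the small exceptions that the full Thm.~\ref{thm:secondmain} has to treat by hand and which this didactic version simply forbids. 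I would handle the columns of length $1$ (the trivial part of the first row beyond what the other blocks use) separately, since adding a row of $1$'s to both $n\times d$ copies is trivial once $n$ has room.

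The key steps, in order: (1) reduce to a columnwise description of $\la$ and record the budget inequalities coming from $\ell(\la)\le m^2$, $|\bar\la|\le md$; (2) for each column length $j$ occurring in $\la$, exhibit a triple $(\alpha^{(j)}, \beta^{(j)}, \gamma^{(j)})$ of partitions of $|\alpha^{(j)}|$ with $g(\alpha^{(j)},\beta^{(j)},\gamma^{(j)})>0$, where $\alpha^{(j)}$ is (a rectangle-of-columns-of-length-$j$) plus a compensating first row, $\beta^{(j)} = \gamma^{(j)}$ are rectangles, and $\alpha^{(j)}$ uses only columns of the forbidden-exempt lengths — this is where Lem.~\ref{lem:rect_pos} and Cor.~\ref{cor:hooks1} do the work and where the restriction $j \notin\{2,3,5,7\}$ is used; (3) check the bookkeeping: the sum of the $\beta^{(j)}$'s over all occurring $j$, together with a final ``filler'' square/rectangle block, equals $n\times d$, and the sum of the $\alpha^{(j)}$'s plus the filler equals $\la$ — here the slack $n > m^9$, $d > m^8$ guarantees there is enough room to make the row counts and column counts match exactly; (4) invoke the semigroup property to conclude $g(\la, n\times d, n\times d)>0$.

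The main obstacle is step (3), the exact reassembly: each building block fixes both the $\la$-shape contribution and the $n\times d$ contribution simultaneously, so one must choose the block sizes (how many squares of which size, how the hook blocks interlock) so that the three componentwise sums land exactly on $(\la, n\times d, n\times d)$ — in particular the first rows must match to the box, and the total number of boxes must be $nd$ in all three coordinates. This is a covering/partitioning argument constrained on two axes at once; the generous gaps between $n, d$ and $m^2, md$ (which is why the bound here is the weak $n > m^9$, $d > m^8$ rather than the sharp one in Thm.~\ref{thm:secondmain}) are precisely what make it go through without case analysis. A secondary point to be careful about is that square positivity gives $g(k\times k,\cdot,\cdot)>0$ only for square second/third arguments, so every intermediate block's $\beta,\gamma$ must be genuinely rectangular; mixing these with the transposition symmetry to realize columns of length $4$ and $6$ and of all lengths $\ge 8$ is the content of the cited Lem.~\ref{lem:rect_pos} and Cor.~\ref{cor:hooks1}, which I am permitted to assume here.
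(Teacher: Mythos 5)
Your outline --- decompose $\la$ columnwise, tile each column class with rectangle and hook building blocks, reassemble via the semigroup property, and use the slack $n>m^9$, $d>m^8$ to absorb remainders into the first row --- is precisely the paper's strategy, and you correctly identify Lem.~\ref{lem:rect_pos} and Cor.~\ref{cor:hooks1} as the inputs. You also correctly read off that the forbidden column lengths $\{2,3,5,7\}$ in $\la$ correspond, after deleting the first row, to the hook-exception lengths $\bar k=k-1\in\{1,2,4,6\}$.

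But what you flag as ``the main obstacle'' --- step (3), the exact reassembly --- is the entire content of the proof, and you leave it unresolved. The paper fills it with a concrete two-level Euclidean division on each column multiplicity $c_k$ of $\la$. Write $c_k = x_k k + r_k$ with $0\le r_k<k$, so $\bar\la$ splits into $x_k$ rectangles $\bar k\times k$ and $r_k$ single columns $\bar k\times 1$; then with $s_k=\lfloor d/k\rfloor$ write $x_k=h_k s_k+t_k$, clumping the rectangles into $h_k$ wide blocks $\bar k\times k s_k$ of width near $d$ plus one remainder block $\bar k\times k t_k$. Each wide block, capped with a first row to size $dk$, gives a positive triple against $d\times k$ by Lem.~\ref{lem:rect_pos} with $a=d$; likewise the remainder block; each single column, capped with a first row to size $dm^2$, is a hook $(dm^2-\bar k,1^{\bar k})$ and gives a positive triple against $d\times m^2$ by Cor.~\ref{cor:hooks1}, using $\bar k\notin\{1,2,4,6\}$. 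Semigroup-summing all of these yields $g(\mu,d\times\tilde n,d\times\tilde n)>0$ with $\bar\mu=\bar\la$ and $\tilde n=\sum_k(h_k k+k+r_k m^2)$. This is where the hypothesis $|\bar\la|\le md$ actually enters: it gives $x_k\le md$, and $s_k\ge d/m^3$ gives $h_k\le m^4$, so a crude estimate yields $\tilde n\le m^9\le n$. Adding the filler triple $\bigl((n'd),d\times n',d\times n'\bigr)$ with $n'=n-\tilde n$ then lands exactly on $(\la,d\times n,d\times n)$, and a final transposition gives $(\la,n\times d,n\times d)$. Without the two divisions, the choice $s_k=\lfloor d/k\rfloor$, and the verification $\tilde n\le n$, there is no proof yet; I would ask you to carry out this accounting before accepting the argument.
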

Let $(i,1^j)$ denote the hook partition of $i+j$ with $j+1$ boxes in the first column.
\begin{lemma}\label{lem:didacthooks}
Let $m\geq 3$ and $d \geq m^2$.
Let $1 \leq j < m^2$, $j \notin \{1,2,4,6\}$.
Then $g((d m^2-j,1^j),d \times m^2, d \times m^2) > 0$.
\end{lemma}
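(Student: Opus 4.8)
The plan is to reduce the claim to the square positivity $g(k\times k,k\times k,k\times k)>0$ plus the semigroup and transposition properties, by writing the hook $(dm^2-j,1^j)$ and the rectangle $d\times m^2$ as sums of simpler pieces. Observe first that $d\times m^2$ has $dm^2$ boxes and $g((d m^2-j,1^j),d\times m^2,d\times m^2)=g((d m^2-j,1^j)^t,(d\times m^2)^t,d\times m^2)$ by the transposition property; the transpose of the hook is again a hook $(j+1,1^{dm^2-j-1})$, which is typically unhelpful, so instead I would keep $\la$ as a hook and try to build it additively.

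Concretely, the strategy is: decompose $m^2=\sum_i a_i$ and $d\times m^2 = \sum_i (d\times a_i)$ (stacking rectangles of the same width $d$ vertically is a partition sum after padding with zero rows), and simultaneously decompose the hook $(dm^2-j,1^j)$ as a sum of hooks $(da_i - j_i, 1^{j_i})$ with $\sum j_i = j$ and each $j_i < a_i$ (or $j_i$ equal to known-good values). For this I need, as a base case, positivity of $g((da - j_0,1^{j_0}),d\times a,d\times a)$ for individual small $a$ and permissible $j_0$; this is exactly the content of Lemma~\ref{lem:rect_pos} and Corollary~\ref{cor:hooks1} referenced in the section, which I would invoke directly (the excerpt states the paper re-uses two positivity results proved in Section~\ref{sec:secondthm}, and those are precisely the hook-in-square building blocks). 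The arithmetic hook identity to keep in mind is $(dm^2-j,1^j) = \sum_i (da_i,1^0)$-type pieces adjusted so that first rows add to $dm^2-j$ and column-one overhangs add to $j$; since $g(k\times k,k\times k,k\times k)>0$ handles the "no hook arm" pieces $d\times a_i$ when $da_i$ is a perfect square is not needed — rather one uses the square positivity to supply a $g(k\times k,k\times k,k\times k)$ factor for the bulk rectangle and then adds a single small hook correction.

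The cleanest version: split $m^2 = c^2 + r$ where $c = \lfloor m\sqrt{\,}\rfloor$-type choice is awkward, so instead write $d\times m^2$ as the sum of $d\times(m^2-a)$ and $d\times a$ for a well-chosen small $a$ with $a\notin$ (some bad set) and $a > j$ (possible since $j<m^2$ and we have freedom as long as $m$ is large enough, using $d\ge m^2$), take $g((d a - j,1^j),d\times a,d\times a)>0$ from Cor.~\ref{cor:hooks1}/Lem.~\ref{lem:rect_pos}, and take $g(d\times(m^2-a),d\times(m^2-a),d\times(m^2-a))>0$ from square positivity after a further transposition turning the $d\times(m^2-a)$ rectangle into a square — i.e., pick $a$ so that $d(m^2-a)$ is handled by an exact square $k\times k$ with $k=$ the common value, which one arranges by choosing the split so that one factor is literally a $k\times k$ square (here $d$ and $m^2-a$ need not be equal; instead use $g(p\times q,p\times q,p\times q)>0$ whenever… — but that is false in general, so one must genuinely reduce to squares). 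The correct move is the standard one from this circle of ideas: $d\times(m^2-a)$ with $d\ge m^2-a$ can itself be built from square positivity plus semigroup by writing it as a sum of $\min(d,m^2-a)\times\min(d,m^2-a)$ squares plus a thin leftover rectangle, and the thin leftover (a $\le (m^2-a)\times(m^2-a)$ rectangle, or a $1\times\text{something}$ strip after transpose) is again a trivially-positive trivial-representation case since $g(N\times 1, N\times 1, N\times 1)=g((N),(N),(N))=1$. Assembling these three groups of pieces via the semigroup property, and checking that the first rows, column overhangs, and total box counts add up to exactly $(dm^2-j,1^j)$ against $d\times m^2$, finishes the proof.

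The constraints $m\ge 3$, $d\ge m^2$, and $j\notin\{1,2,4,6\}$ enter precisely in the base case: Cor.~\ref{cor:hooks1}/Lem.~\ref{lem:rect_pos} will require the hook arm length on a width-$d$ rectangle to avoid the same bad residues, and the excluded $\{1,2,4,6\}$ are inherited from there (they correspond to the exceptional $\bar\la\in\exceptions$ pattern — note $\exceptions$ contains $(2\times 1),(4\times 1),(6\times 1)$ and $(1)$). The main obstacle I anticipate is the \emph{bookkeeping of the additive decomposition}: one must choose the split $m^2 = \sum a_i$ and the distribution $j=\sum j_i$ so that (i) every $a_i$ and every resulting arm length lands in a range where a base-case positivity result is available, (ii) the rectangle widths all equal $d$ so the pieces sum as partitions, and (iii) no piece triggers an exceptional small case — and doing this \emph{uniformly} in $j$ over the whole range $1\le j<m^2$ (not just for one convenient $j$) is the delicate part, since $j$ can be as large as $m^2-1$, forcing the hook arm to be spread across several of the rectangular summands. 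Handling $j$ near $m^2$ is where I expect to spend the most care, likely by peeling off arm length from several $d\times a_i$ blocks of moderate size rather than from a single small block.
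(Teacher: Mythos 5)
The paper's proof of this lemma is a one-liner: apply Corollary~\ref{cor:hooks1} directly with $h=m^2$ and $w=d$ (after using the transposition property $g(\la,\mu,\nu)=g(\la,\mu^t,\nu^t)$ to pass from the $d\times m^2$ rectangle to $m^2\times d$). The hypotheses of the corollary are met because $d\ge m^2\ge 9\ge 7$, and the corollary then yields positivity for all $j\in[0,m^4-1]\setminus\{1,2,4,6,m^4-2,m^4-3,m^4-5,m^4-7\}$. Since you only care about $1\le j<m^2$ with $j\notin\{1,2,4,6\}$, and $m^4-7>m^2-1$ for $m\ge 3$ (equivalently $m^2(m^2-1)>6$), the four upper exceptional values are automatically out of range — that single inequality is the whole content of the step.

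Your proposal correctly identifies Corollary~\ref{cor:hooks1} as the relevant ingredient, but it never recognizes that a single direct application settles the matter. Instead you set out to decompose $m^2=\sum a_i$, split the rectangle into $d\times(m^2-a)$ and $d\times a$, distribute the hook arm $j=\sum j_i$ across pieces, and worry about handling $j$ near $m^2$ by "peeling off arm length from several blocks." That machinery is unnecessary here, and more importantly, the reason you thought it was needed is the actual gap: you did not notice that when Corollary~\ref{cor:hooks1} is applied at scale $h=m^2$, its excluded upper values $h^2-2,h^2-3,h^2-5,h^2-7$ all sit above $m^4-7$, far outside the permitted range $j<m^2$. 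Once you see that, there is nothing left to decompose, no leftover rectangle to absorb, and no square-positivity bookkeeping to do. The constraints $m\ge3$ and $d\ge m^2$ enter exactly to guarantee $m^2\ge7$ (so the corollary applies with $h=m^2$, $w=d$) and $m^4-7>m^2-1$; the excluded set $\{1,2,4,6\}$ is inherited verbatim from the lower exceptional values of the corollary, as you correctly guessed. Your plan also never reaches a definitive argument — it ends by listing anticipated difficulties — so as written it does not constitute a proof.
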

\begin{proof}
This follows from Cor.~\ref{cor:hooks1}, because $m^4-6 > m^2$.
\end{proof}
Together with Lem.~\ref{lem:rect_pos} we now have all the building blocks to prove Thm.~\ref{thm:didact}.
\begin{proof}[Proof of Thm.~\ref{thm:didact}]
We forget about the first row of $\la$ and treat each column length $k \in \{2,\ldots,m^2\}$ in $\la$ separately.
For the ease of notation let $\bar k := k-1$.
For each $k$ let $c_k$ be the number of columns of length $k$ in $\la$.
We divide $c_k$ by $k$, formally $c_k = x_k k + r_k$ with $r_k < k$.
The partition $\bar\la$ decomposes as follows:
\[
\bar\la = \sum_{k=2}^{m^2} x_k (\bar k\times k) + \sum_{k=2}^{m^2} r_k (\bar k \times 1).
\]
Note that by assumption on $\la$ both sums exclude the four values $k \in \{2,3,5,7\}$.
We now group together the $(\bar k\times k)$ rectangles into groups of roughly $\frac d k$,
so that the result is roughly of size $\bar k\times d$.
Formally we define $s_k := \lfloor d/k\rfloor$ and divide $x_k = h_k s_k + t_k$ with $t_k < s_k$.
Now $\bar\la$ decomposes as follows:
\[
\bar\la = \sum_{k=2}^{m^2} h_k (\bar k\times k s_k)
+ \sum_{k=2}^{m^2} (\bar k\times k t_k)
+ \sum_{k=2}^{m^2} r_k (\bar k\times 1).
\]
We prove  Kronecker positivity for the summands separately.
For the leftmost sum, $g((\bar k\times k s_k)(dk),d \times k, d \times k)>0$ by Lem.~\ref{lem:rect_pos} with $a=d$.
For the middle sum, $g((\bar k\times k t_k)(dk),d \times k, d \times k)>0$ with the same argument.
For the rightmost sum, $g( (dm^2-\bar k,1^{\bar k}),d \times m^2, d \times m^2)>0$ by Lem.~\ref{lem:didacthooks}.
According to these observations we add a new first row to $\bar\la$. Formally $\mu:=$
\[
\sum_{k=2}^{m^2} h_k (\bar k\times k s_k)(dk)
+ \sum_{k=2}^{m^2} (\bar k\times k t_k)(dk)
+ \sum_{k=2}^{m^2} r_k (\bar k\times 1)(m^2 d).
\]
Using the semigroup property the above three positivity considerations we show that
\begin{equation}\label{eq:didact:almost}
g(\mu, d \times \tilde n, d \times \tilde n) > 0,
\end{equation}
where $\tilde n = \sum_{k=2}^{m^2}h_k k + \sum_{k=2}^{m^2}k + r_k m^2$.

We now show $\tilde n \leq n$ using very coarse estimates for the sake of simplicity.
Since $|\bar\la| \leq md$ we have $x_k \leq md$ for $2 \leq k \leq m^2$.
By definition,  $s_k \geq \lfloor \frac d {m^2} \rfloor \geq \frac d {m^3}$.
Thus $h_k \leq x_k/s_k \leq md/(d/m^3) \leq m^4$.
Since $r_k < k \leq m^2$ we have that
$\tilde n \leq m^4\cdot m^2\cdot m^2 + m^2 \cdot m^2 + m^3 \leq m^9 \leq n$.

Set $n' := n-\tilde n \geq 0$.
Since both $|\la|$ and $|\mu|$ are divisible by $d$
we can use the semigroup property and add the positive Kronecker triple
$((n' d),d \times n', d \times n')$ to \eqref{eq:didact:almost}.
Since $\bar\la=\bar\mu$ we conclude
$g(\la,n\times d, n\times d)>0$.
\end{proof}

\section{Proof of Kronecker positivity}\label{sec:secondthm}
In this section we prove Thm.~\ref{thm:secondmain}.
If $\bar\la \in \exceptions$, a finite calculation reveals $a_{\bar\la} = 0$.
Combining this with Prop.~\ref{pro:main} gives $a_\la(d[n])=0$.
Thus we are left to analyze the case $\bar\la \notin \exceptions$.

Given a partition $\nu \notin \exceptions$ we want to decompose it into smaller partitions and use the semigroup property to show
the positivity of $g(\nu(ab),a\times b,a\times b)$.
We use the following decomposition theorem to write $\nu$ as a sum of partitions that are not in $\exceptions$.
\begin{lemma}[Partition decomposition]\label{lem:partitiondecomposition}
Given $\nu \notin \exceptions$ let $\ell:=\ell(\nu)+1$.
We can find $x_k \in \IN$, $y_k \in \IN$, $2 \leq k \leq \ell$,
and a partition $\xi$ such that
\[\textstyle
\nu = \rho + \xi + \sum_{k=2}^\ell x_k ((k-1) \times k) + \sum_{k=2}^\ell y_k ((k-1) \times 2),
\]
with $y_k < k$ and
where all columns in $\xi$ have distinct lengths and no column in $\xi$ has length 1, 2, 4, or 6,
and where $\rho$ is of one of the following shapes:
\begin{enumerate}[(1)]
\item $\rho$ has only columns of length 1, 2, 4, 6, all column lengths are distinct, and $\rho \notin \exceptions$.
\item $\rho = (i \times 1)+\eta$, where $i \notin \{1,2,4,6\}$, $i \leq \ell-1$, and $\eta \in \exceptions \setminus \{(3,1)\}$.
\item $\rho = (i \times 2)+\eta$, where $i \in \{2,4,6\}$ and $\eta \in \exceptions \setminus \{(3,1)\}$.
\item $\rho = (4)+\eta$, where $\eta \in \exceptions \setminus \{(3,1)\}$.
\item $\rho \in \{(3,1,1,1,1,1),(3,1,1,1),(3),(4,1)\}$.
\end{enumerate}
\end{lemma}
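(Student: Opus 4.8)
The plan is to give an explicit, greedy ``peeling'' algorithm that processes the columns of $\nu$ by their length, producing the four rectangular-family terms $\sum x_k((k-1)\times k)$, $\sum y_k((k-1)\times 2)$, the ``distinct-lengths'' remainder $\xi$, and a small leftover piece $\rho$ falling into one of the five listed cases. Concretely, for each column length $k$ with $2\le k\le \ell$ let $c_k$ be the number of columns of length $k$ in $\nu$. For $k\notin\{1,2,4,6\}$, first extract $x_k:=\lfloor c_k/k\rfloor$ copies of the rectangle $(k-1)\times k$ (note the column length is $k$, and a $(k-1)\times k$ rectangle uses $k$ columns of length $k-1$, so I must be careful about the index shift: I will peel columns of a given length and turn groups of them into the appropriate rectangle with that column length). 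This leaves a residue $r_k<k$ in each length class. The columns of length in $\{1,2,4,6\}$ and the small residues are exactly what has to be cleaned up into $\xi$ and $\rho$, and the reason $1,2,4,6$ are singled out is precisely that $((k-1)\times 2)$ with $k\in\{2,4,6\}$ and the hook-type pieces $(i\times 1)$ with $i\notin\{1,2,4,6\}$ are the shapes for which the auxiliary positivity lemmas (Lem.~\ref{lem:rect_pos}, Cor.~\ref{cor:hooks1}) have already been established, whereas lengths $1,2,4,6$ are the problematic ones that must be absorbed into the exceptional set $\exceptions$.

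The key steps, in order, would be: (i) fix $\ell=\ell(\nu)+1$ so that every column of $\nu$ has length $<\ell$, guaranteeing all indices $k$ used lie in $[2,\ell]$; (ii) peel off the $x_k((k-1)\times k)$ rectangles for the ``good'' lengths $k\notin\{1,2,4,6\}$, reducing each such class to a residue of size $<k$; (iii) for residues of ``good'' length, and more generally for any remaining columns whose lengths are pairwise distinct and avoid $\{1,2,4,6\}$, collect them into $\xi$; (iv) from the columns of lengths in $\{2,4,6\}$, peel off rectangles $((k-1)\times 2)$ as long as at least two columns of that length remain, i.e. set $y_k:=\lfloor (\#\text{length-}k\text{ columns})/2\rfloor$ modulo the bookkeeping, leaving at most one column of each of lengths $2,4,6$ and arbitrarily many columns of length $1$; (v) what survives is a partition built from $\le 1$ column each of lengths $2,4,6$, some columns of length $1$, and possibly one leftover ``good'' column (a residue that could not be merged into $\xi$ because its length already appeared) — and the task is to show this survivor, call it $\rho$, together with one reallocated piece if necessary, matches one of the five listed shapes. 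Case (1) is when $\rho$ consists only of distinct-length columns from $\{1,2,4,6\}$ and is not already in $\exceptions$; cases (2)–(4) cover the situation where $\rho$ has a single ``tall'' column or a column of length $4$ sitting on top of an exceptional partition $\eta\in\exceptions\setminus\{(3,1)\}$; case (5) is a short finite list that catches the remaining small configurations (and the exclusion of $(3,1)$ from the allowed $\eta$'s is what forces $(3,1,1,1,1,1),(3,1,1,1),(3),(4,1)$ to appear explicitly in case (5)).

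The main obstacle I expect is not the peeling itself but the \emph{endgame bookkeeping}: after the greedy extraction one is left with a bounded-size residue, and one must verify by a finite but fussy case analysis that every such residue of a partition $\nu\notin\exceptions$ lands in exactly one of the five families — in particular that one can always avoid the forbidden piece $\eta=(3,1)$ by re-routing a box (e.g. converting a $(3,1)$ situation into a $(4,1)$ or $(3,1,1,1)$ situation, which is why those appear in case (5)), and that the hypothesis $\nu\notin\exceptions$ is genuinely used to rule out the handful of residues that would otherwise be problematic. So the proof would proceed: describe the algorithm, check that each extracted summand has the claimed form and satisfies the stated constraints ($y_k<k$, columns of $\xi$ distinct and avoiding $\{1,2,4,6\}$), then dispatch the finitely many possible residues by hand, using $\nu\notin\exceptions$ to eliminate the bad ones and a small reallocation trick to dodge $\eta=(3,1)$. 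The verification that the sum of all extracted pieces reconstitutes $\nu$ exactly (column multiplicities add up) is routine and would be stated but not belabored.
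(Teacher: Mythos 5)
There is a genuine gap in step (ii) of your plan: you propose to peel the $(k-1)\times k$ rectangles \emph{only} for column lengths outside $\{1,2,4,6\}$, leaving every column of one of those ``bad'' lengths to be cleaned up by the $y_k((k-1)\times 2)$ terms and the bounded leftover $\rho$. But the constraint $y_k<k$ is a hard cap on how many such columns the $y$-terms can absorb, $\xi$ is forbidden from containing columns of length $1,2,4,6$, and $\rho$ is one of finitely many small shapes. So your algorithm breaks as soon as $\nu$ has more than a few columns of a bad length: for example $\nu=(100,100)$ (whose conjugate is $2^{100}$) has $100$ columns of length $2$, and your step (iv) would set $y_3\approx 50$, violating $y_3<3$. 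The paper avoids this by performing the greedy $(k-1)\times k$ peel for \emph{every} $k$, including $k\in\{2,3,5,7\}$; the purpose of singling out lengths $1,2,4,6$ is not to exempt them from the rectangle peel but only to decide which \emph{single} leftover columns go to $\xi$ (handled later by the hook positivity Cor.~\ref{cor:hooks1}, which genuinely fails for those lengths) versus $\rho$.

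A second, subtler point you do not capture: after the naive division $c_k = x'_k k + r'_k$ with $r'_k<k$, the paper deliberately ``un-extracts'' one rectangle when $x'_k\geq 1$, setting $r_k:=r'_k+k$ and $x_k:=x'_k-1$, so the residue satisfies $r_k<2k$ rather than $r_k<k$. This is not an aesthetic choice: it enforces the invariant that $y_k=0$ only when $\nu$ has at most one column of length $k-1$, and the entire endgame re-routing (decreasing some $y_k$ to push a column or a $((k-1)\times 2)$ block into $\rho$, which is what lets you escape $\rho\in\exceptions$ and what produces cases (2)–(5)) depends on it. Without it the endgame already fails on small examples: $\nu=(4,1)$ has three columns of length $1$ and one of length $2$; naive greedy gives $x_2=1$, $y_2=0$, $\rho=(2,1)\in\exceptions$, and there is nothing left to re-route, whereas the paper's adjustment gives $y_2=1$ and lands $\nu$ in case (5). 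You flag the endgame as ``fussy bookkeeping,'' which is accurate, but the bookkeeping is only tractable because of this specific invariant, and a proof that omits it would not close.
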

\begin{proof}
We start by treating each $k$ independently.
Let $c_k$ denote the number of columns of length $k-1$ in $\nu$.
In a greedy manner cut off from $\nu$ as many rectangles of size $(k-1)\times k$ as possible.
Formally, we divide
$c_k = x'_k k + r'_k$ with $r'_k<k$.
We are left with a $(k-1) \times r'_k$ rectangle.
We now join (if possible) one of the $(k-1)\times k$ rectangles with the $(k-1) \times r_k$ rectangle:
If $x'_k\geq 1$, define $r_k := r'_k+k$ and $x_k:=x'_k-1$.
If $x'_k=0$, define $r_k := r'_k$ and $x_k:=x'_k$.
We obtain a $(k-1) \times r_k$ rectangle and call it $R_k$.
Note that $r_k < 2k$.

Cut off from $R_k$ rectangle as many rectangles of size $(k-1)\times 2$ as possible.
Formally, define $y_k := \lfloor r_k / 2 \rfloor$, so $y_k < k$ as required in the claim.
After cutting we are left with either the empty partition or a column $(k-1)\times 1$.
In the former case (i.e., $r_k$ is even) define $b_k:=0$, otherwise $b_k:=1$.

Looking at all $k$ together we define two remainder partitions
$
\rho := \sum_{k-1=1,2,4,6} b_k ((k-1)\times 1)
$
and
$
\xi := \sum_{k-1 \in [1,\ell-1]\setminus\{1,2,4,6\}} b_k ((k-1)\times 1).
$
Clearly
\[\textstyle
\nu = \rho + \xi + \sum_{k=2}^\ell x_k ((k-1) \times k) + \sum_{k=2}^\ell y_k ((k-1) \times 2),
\]
as required in the statement of the claim.
Moreover, $\xi$ has the correct shape.
Clearly $\rho$ has only columns of length 1, 2, 4, 6 and all column lengths are distinct.
If $\rho \notin \exceptions$, then we are done by property (1).

The rest of the proof is devoted to the case where $\rho \in \exceptions$.
We will use that if $y_k=0$, then $\nu$ has at most 1 column of length $k-1$, by definition of $R_k$.
Note that $\rho \neq (3,1)$, because no two columns in $\rho$ have the same length.
If $\nu$ has a column of length $i$ different from 1, 2, 4, 6, then
such a column appears in $\xi$ or we have that $y_{i+1} > 0$ and $\xi$ has no column of length $i$.
If it appears in $\xi$, then we remove it from $\xi$ and add it to $\rho$ and we are done by property (2).
If it does not appear in $\xi$ but $y_{i+1}>0$, then we decrease $y_{i+1}$ by 1 and add a column of length $i$ to both $\xi$ and $\rho$,
so that we are done by property (2).

So from now on we assume that $\nu$ only has columns of length 1, 2, 4, or 6 (and thus $x_k=0$ for all $k \notin \{2,3,5,7\}$).

If $y_{i+1}>0$ for some $i \in \{2,4,6\}$, then we can decrease $y_{i+1}$ by 1 and set $\rho \leftarrow \rho + (i \times 2)$,
so we are done by property~(3).

Thus from now on we assume $y_{i+1}=0$ for all $i \in \{2,4,6\}$.
Note that $y_2$ corresponds to the columns of length 1 in $\nu$.
If $y_2\geq 2$, then we can decrease $y_2$ by 2 and set $\rho \leftarrow \rho + (4)$,
so we are done by property~(4).

At this point the possible shapes of $\nu$ are quite limited:
$c_{i+1}=0$ for all $i \notin\{1,2,4,6\}$, $c_{i+1} \leq 1$ for $i \in \{2,4,6\}$, $c_{2}\leq 3$.

If $y_2 = 0$, then $\nu=\rho$ by construction, which is impossible because $\nu \notin \exceptions$ and $\rho \in \exceptions$.

Recall $\rho \in \exceptions \setminus\{(3,1)\}$.
If $y_2 = 1$, then $\nu = \rho+(2)$, so
$
\nu \in \{(3,1,1,1,1,1),(3,1,1,1),(3,1),(3),(4,1)\}.
$
Since $\nu \notin \exceptions$ it follows
$
\nu \in \{(3,1,1,1,1,1),(3,1,1,1),(3),(4,1)\}.
$
Now we set $y_2 \leftarrow 0$ and $\rho \leftarrow \nu$ and we are done by property~(5).
\end{proof}

All summands in the partition decomposition will yield a positive rectangular Kronecker coefficient,
but we will need to group large blocks of $(k-1)\times k$ rectangles.
This is done with the following lemma.
\begin{lemma}\label{lem:rect_pos}
Let   $\mu = (k \times (ks) )$ and $a \geq ks$, then $g( k \times (ks) + (k(a-ks)), a \times k, a \times k)>0$. 
\end{lemma}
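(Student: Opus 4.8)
The plan is to realize the triple $(\la, a\times k, a\times k)$, where $\la := k\times(ks) + (k(a-ks))$, as a sum in the Kronecker semigroup of two triples whose positivity is essentially free, using only the three tools emphasized above: the semigroup property, transposition, and square positivity.

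The first step is to transpose the two rectangular factors. Since $(a\times k)^t = k\times a$, the transposition property gives $g(\la, a\times k, a\times k) = g(\la, k\times a, k\times a)$, so it suffices to prove $g(\la, k\times a, k\times a) > 0$. The point of switching to this orientation is that $\la$ and $k\times a$ both have at most $k$ rows, so they admit honest rowwise decompositions. Using $a \geq ks$, I would write, as rowwise sums of partitions,
\[
\la = (k\times(ks)) + (k(a-ks)), \qquad k\times a = (k\times(ks)) + (k\times(a-ks)),
\]
where $(k(a-ks))$ denotes the one-row partition of $k(a-ks)$. This presents $(\la, k\times a, k\times a)$ as the componentwise sum of the triples $\big(k\times(ks),\,k\times(ks),\,k\times(ks)\big)$ and $\big((k(a-ks)),\,k\times(a-ks),\,k\times(a-ks)\big)$, and one checks that all six partitions have matching sizes ($k^2 s$ and $k(a-ks)$ respectively, so that $\la$ and $k\times a$ both have size $ka$).

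It then remains to verify positivity of the two resulting Kronecker coefficients and apply the semigroup property. For the first triple, $k\times(ks)$ is the rowwise $s$-fold multiple of the square $k\times k$, so $g(k\times(ks), k\times(ks), k\times(ks)) > 0$ follows from the square positivity $g(k\times k, k\times k, k\times k) > 0$ together with the semigroup property (for $s = 0$ this is the empty triple, of coefficient $1$). For the second triple, the first partition is a single row $(N)$ with $N = k(a-ks) = |k\times(a-ks)|$, and $g((N),\nu,\nu) = 1$ for every $\nu \vdash N$, because the trivial $\aS_N$-representation occurs exactly once in $[\nu]\otimes[\nu]$. The semigroup property now yields $g(\la, k\times a, k\times a) > 0$, which is what we wanted.

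I do not expect a real obstacle here; the only point requiring care is keeping track of the two orientations of the rectangle. If one tried to decompose $a\times k$ directly, the complementary summands would fail to be partitions — all rows of $a\times k$ have the same length while there are more than $k$ of them in general — so transposing first is precisely what aligns the row counts and makes every summand a bona fide partition. The degenerate cases $s = 0$ and $a = ks$ cause no trouble, one of the two triples simply becoming the empty triple.
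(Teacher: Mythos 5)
Your argument is correct and is essentially the paper's own proof: the same decomposition of the target triple into $(k\times(ks),\,k\times(ks),\,k\times(ks))$ and $((k(a-ks)),\,k\times(a-ks),\,k\times(a-ks))$, the same appeal to square positivity via \cite{bb} and the semigroup property, and the same single transposition of the rectangular factors. The only cosmetic difference is that you transpose at the outset while the paper transposes at the very end, after first establishing $g(k\times(ks)+(k(a-ks)),\,k\times a,\,k\times a)>0$.
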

\begin{proof}
For all $k$ we have $g( k \times k, k \times k, k \times k)>0$ as shown in \cite{bb}.
By the semigroup property we can add these square triples $s$ times and obtain
$g( k \times (ks), k \times (ks), k \times (ks) )>0.$
Let $(k(a-ks))$ denote the single row partition of $k(a-ks)$.
Clearly $g( (k(a-ks)), k \times (a-ks), k \times (a-ks) )>0$.
Adding these two triples with the semigroup property we get
$g( k \times (ks) + (k(a-ks)), k \times a, k \times a)>0.$
Finally, transposing the last two partitions
we obtain the statement.
\end{proof}

The following proposition will be used to prove positivity for building blocks of partitions like hooks and fat hooks. Here, for a set $S$ and a number $x$ we denote by $x - S$ the set $\{ x-y \mid y \in S\}$.

\begin{proposition}\label{prop:extensions}
Let $\rho$ be a partition of length $\ell$ or $\rho=\emptyset$ with $\ell=0$, and denote by $\nu^k := k \times 1 + \rho$ for $k\geq \ell$ and let $R_\rho := |\rho|+\rho_1+1$.
Suppose that there exists an integer $a  >  \max( \sqrt{R_\rho+\ell}+3,\frac{\ell}{2}-1,6)$
and subsets $H^1_\rho, H^2_\rho \subseteq [\max(\ell,1), 2a+1]$
such that
$g( \nu^k(a^2), a \times a, a \times a)>0$ for all $k \in [\ell , a^2 - R_\rho ] \setminus  \left( H^1_{\rho} \cup (a^2 - H^2_\rho) \right)$.
Then for every $b \geq a$ we have that
$g(\nu^k(b^2), b \times b, b \times b)>0$ for all  $k \in [\ell , b^2 - R_\rho ] \setminus \left( H^1_{\rho} \cup (b^2 - H^2_\rho) \right) $.
\end{proposition}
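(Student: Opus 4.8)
The plan is to prove Proposition~\ref{prop:extensions} by an inductive ``bootstrapping'' argument on the side length of the square, using the semigroup property to lift positivity from $a\times a$ squares to $b\times b$ squares for all $b\geq a$. First I would reduce to the case $b = a+1$: once we know the statement with $a$ replaced by $a+1$, an obvious induction gives it for all $b\geq a$, since the hypotheses for $a+1$ are exactly the conclusion for $a$ (note that $a+1$ still satisfies the arithmetic constraint $a+1 > \max(\sqrt{R_\rho+\ell}+3,\tfrac{\ell}{2}-1,6)$). So the real content is the single step $a \rightsquigarrow a+1$.

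For the step, fix $k \in [\ell, (a+1)^2 - R_\rho] \setminus (H^1_\rho \cup ((a+1)^2 - H^2_\rho))$; I want to write $\nu^k((a+1)^2) = k\times 1 + \rho$ together with its long first row as a sum of three Kronecker triples whose rectangular shapes add up to $(a+1)\times(a+1)$. The natural decomposition is
\[
(a+1)\times(a+1) = a\times a + (a\times 1) + (1\times(a+1)),
\]
i.e.\ peel off one row of length $a$ and one column of length $a+1$ (as a single row of length $a+1$ after transposing, or directly as a hook). Concretely I would try to split $\nu^k((a+1)^2)$ as $\nu^{k'}(a^2) + (\text{a rectangle of shape fitting } a\times 1) + (\text{a single row fitting } 1\times(a+1))$, where $k' = k - (\text{the amount of the first column absorbed by the extra row/column})$, so that $k'$ lies in the admissible range $[\ell, a^2 - R_\rho]\setminus(H^1_\rho \cup (a^2 - H^2_\rho))$ for the $a\times a$ piece; then $g(\nu^{k'}(a^2), a\times a, a\times a) > 0$ by hypothesis, the single-row triple $g((N), p\times q, p\times q)>0$ is trivial, and the remaining strip triple is either trivially positive or again a single-row contribution. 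Adding these with the semigroup property — after transposing one or two of the parts so the second and third components really are squares (using the transposition property) — yields $g(\nu^k((a+1)^2), (a+1)\times(a+1), (a+1)\times(a+1)) > 0$, since the first components sum to $\nu^k((a+1)^2)$: here one checks $k' + (\text{first-column contributions of the two extra strips}) = k$ and the first rows add up correctly, both being forced once $|\nu^k((a+1)^2)| = (a+1)^2 \cdot (a+1)$ matches.

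The main obstacle — and where the hypotheses $a > \sqrt{R_\rho+\ell}+3$, $a > \tfrac{\ell}{2}-1$, $a > 6$ and the exceptional sets $H^1_\rho, H^2_\rho$ all earn their keep — is verifying that the reduced index $k'$ actually falls in the \emph{allowed} range for the $a\times a$ piece, and in particular avoids the two ``bad'' windows $H^1_\rho$ (near the bottom, indices close to $\ell$) and $a^2 - H^2_\rho$ (near the top, indices close to $a^2 - R_\rho$). When $k$ is near the bottom of $[\ell, (a+1)^2 - R_\rho]$ one must split so that $k'$ stays $\geq \ell$ and out of $H^1_\rho$ — this is why $H^1_\rho$ is allowed to be an arbitrary subset of $[\max(\ell,1), 2a+1]$, a window of length $O(a)$, large enough to absorb whatever index shifts the peeling forces, and why we need $a$ large enough (via $\sqrt{R_\rho+\ell}+3$ and $\tfrac{\ell}{2}-1$) that the ``safe'' middle zone $[\max(\ell,1)+2a+1,\ a^2 - R_\rho - (2a+1)]$ is nonempty and big enough to land in; symmetrically, $k$ near the top is handled by peeling columns rather than rows and lands outside $a^2 - H^2_\rho$. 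I would organize this as a short case analysis: (i) $k$ small — adjust the first-row/first-column split to land $k'$ in the safe middle zone; (ii) $k$ large — dually; (iii) $k$ already in the safe middle zone — a direct split with $k' = k - O(a)$ still in $[\ell, a^2-R_\rho]$ and automatically clear of both bad windows because those are confined to within $2a+1$ of the endpoints. The arithmetic constraints on $a$ are precisely what make these three zones overlap-cover the full interval $[\ell, (a+1)^2 - R_\rho]$, and checking this covering (a routine but fiddly interval computation using $(a+1)^2 - a^2 = 2a+1$) is the one place where care is genuinely required.
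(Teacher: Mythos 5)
Your outline is essentially the same inductive bootstrap the paper uses, and the decomposition and transposition ideas are correct. The one point where your description is misleading and should be tightened: the index $k'$ is not a free parameter to be ``adjusted.'' The semigroup step gives you \emph{exactly two} lifts from $P_c := \{k : g(\nu^k(c^2),c\times c,c\times c)>0\}$ to $P_{c+1}$ --- adding $((c),1^c,1^c)$, transposing the squares, then adding $((c+1),1^{c+1},1^{c+1})$ extends the first row by $2c+1$ and keeps $k$ fixed; doing the same after transposing the first component (legal since $\mu=\nu$ is symmetric) and transposing back instead increases $k$ by $2c+1$. So the precise claim is $P_c \cup (2c+1+P_c)\subset P_{c+1}$, and the interval computation you allude to must show that for every $k\in S_{c+1}$ at least one of $k$, $k-(2c+1)$ lies in $S_c$; the arithmetic constraints on $a$ and the confinement of $H^1_\rho,H^2_\rho$ to windows of length $\le 2a+1$ near the two ends are exactly what make the two shifted copies of $S_c$ cover $S_{c+1}$. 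In particular, in your case (i) ($k$ small) you do not land $k'$ in a ``safe middle zone''; you simply take $k'=k$, which lies near the bottom of $[\ell,c^2-R_\rho]$ but is outside $H^1_\rho$ by hypothesis and far from $c^2-H^2_\rho$.
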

Note that $R_\rho$ is the size of the partition obtained from $\rho$ by adding a shortest possible top row and one extra box at the top left corner, so that the largest possible column we can add is $h^2 - R_\rho$ to still have a valid partition. 
\begin{proof}
The proof is by induction on $b$ and repeated application of the semigroup property.  We have that the statement is true for $b=a$ by the given condition. Assume that the statement holds for $b=c$ for some $c$, we show that it holds for $b=c+1$. We do this by showing the following claim which helps us extend from positivity for $b=c$ to $b=c+1$. Let $P_c := \{k: g( \nu^k(c^2), c \times c, c \times c) >0 \}$ be the set of values of $k$ for which the Kronecker coefficient is positive. The inductive step is equivalent to the statement: if $[ \ell , c^2 - R_\rho] \setminus \left( H^1_{\rho} \cup (c^2 - H^2_\rho) \right) \subset P_c$, then $[ \ell , (c+1)^2 - R_\rho] \setminus \left( H^1_{\rho} \cup ((c+1)^2 - H^2_\rho) \right) \subset P_{c+1}$.

\textbf{Claim:} Suppose that $k \in P_c$, then $k, k+2c+1 \in P_{c+1}$.

\textit{Proof of claim:} 
To show that $k \in P_{c+1}$
 we apply the semigroup property by successively adding triples $((x), x \times 1, x \times 1)$ for $x=c$ and then $x=c+1$ and transposing the rectangles.
 Let $\tilde c := c+1$.
\begin{eqnarray*}
0 &<&
g( \nu^k(c^2), c\times c, c\times c) \leq  g( \nu^k(c^2) + (c ), c\times \tilde c, c \times \tilde c) \\
&=& g( \nu^k(c^2) + (c) , \tilde c \times c, \tilde c \times c)\\
&\leq& g( \nu^k(c^2) + (c)+\tilde c , \tilde c \times {\tilde c}, {\tilde c} \times {\tilde c} ) \\
 &=& g(\nu^k( {\tilde c}^2), {\tilde c}\times {\tilde c}, {\tilde c}\times {\tilde c} ).
\end{eqnarray*}
Next, to show that $k +2c+1 \in P_{c+1}$ we first transpose $\nu^k(c^2)$ and one of the squares, apply the argument from above to it, and then transpose again:
\begin{eqnarray*} 
0 &<& g( \nu^{k} (c^2), c\times c, c\times c) = g( (\nu^{k} (c^2))^t  , c\times c, c\times c)\\
 &\leq& g( (\nu^k(c^2))^t + (c ), c\times {\tilde c}, c \times {\tilde c}) \\
 &=& g( (\nu^k(c^2))^t + (c) , {\tilde c} \times c, {\tilde c} \times c) \\
 &\leq& g( (\nu^k(c^2))^t + (2c+1)  , {\tilde c} \times {\tilde c}, {\tilde c} \times {\tilde c} ) \\
 &=& g(\nu^{k+2c+1}( {\tilde c}^2), {\tilde c}\times {\tilde c}, {\tilde c}\times {\tilde c} ).\quad\quad\quad\quad\quad\quad\quad\quad\square
\end{eqnarray*}

The claim implies that $P_c \cup (2c+1 +P_c) \subset P_{c+1}$.

Now we apply the claim to our sets.  By hypothesis, we have that $S_c:= [ \ell , c^2 - R_\rho ] \setminus \left( H^1_{\rho} \cup (c^2 - H^2_\rho) \right) \subset P_c$. 

Suppose that $S_{c+1} \not \subset P_{c+1}$, so there exists a $k \in S_{c+1}$, such that $k \not \in P_{c+1}$. By the claim, we must then have that both $ k \not \in P_c$ and $k-(2c+1) \not \in P_c$. Since $S_c \subset P_c$, this means that $k , k-(2c+1) \in (-\infty,\ell) \cup H_\rho^1 \cup (c^2 - H^2_\rho) \cup [c^2 -R_\rho+1, +\infty)$. Translating the sets by $2c+1$ and observing that $c^2+(2c+1)=(c+1)^2$, and using that $ k \in [\ell, (c+1)^2-R_\rho]$ already, we must have that 
\begin{eqnarray}\label{eq:sets}
 k \in \left(  H^1_\rho \cup (c^2 \! - \! H^2_\rho) \cup [c^2\!-\! R_\rho\!+\! 1,(c\!+\!1)^2\!-\!R_\rho] \right)  
 \\ \cap \underbrace{ \left( [\ell,2c\!+\!1\!+\!\ell]\cup  (2c\!+\!1\!+\! H^1_\rho) \cup ((c\!+\!1)^2\! -\! H^2_\rho) \right) }_{=:I} \notag
 \end{eqnarray}

We now consider the intersection of the first 3 sets above with $I$ and show that it is contained in  $H^1_\rho \cup ((c+1)^2-H^2_\rho)$. If $H^1_\rho = \emptyset$ or $H^2_\rho=\emptyset$, so by definition $c^2 - H^2_\rho=(c+1)^2-H^2_\rho = \emptyset$ as well, then the  intersections considered below involving such sets are trivially empty. We thus set $\min \emptyset = \infty$ and $\max \emptyset =0$ for the arguments below to apply universally. 
Since $c \geq a \geq 7$ we have that 
$\min (c^2 - H^2_\rho) = c^2 - \max H^2_\rho \geq c^2-2a-1 \geq 2c+1+2a+1 \geq \max(2c+1+\ell, \max(2c+1+H^1_\rho))$, and also $\max (c^2 - H^2_\rho ) \leq c^2 - 1 < (c+1)^2 -(2c+1) \leq \min ((c+1)^2-H^2_\rho)$. Thus $(c^2 -H^2_\rho) \cap I=\emptyset$, so it doesn't contribute to the overall intersection. Since $c\geq a > \sqrt{R_\rho+\ell}+3$, we have $c^2 -R_\rho +1 > c^2 - (a-3)^2 +\ell+1 \geq 2c+2a+2$, where the last inequality is equivalent to the obvious $(c-1)^2-(a-1)^2 +2a -10 +\ell \geq 0$ since $(c-1)^2 \geq (a-1)^2$ and $2a-10 > 0$. Thus $c^2 -R_\rho+1 > \max(2c+1+H^1_\rho)$ and also $c^2-R_\rho+1 > 2c+2a+2 \geq 2c+1+\ell$, and so $[c^2\!-\!R_\rho\!+\!1,(c\!+\!1)^2\!-\!R_\rho] \cap I \subset ((c\!+\!1)^2\! -\! H^2_\rho)$. Putting all of this together equation \eqref{eq:sets} becomes
\begin{eqnarray*}
 k \in \left(  H^1_\rho \cup (c^2 \! - \! H^2_\rho) \cup [c^2\!-\! R_\rho\!+\! 1,(c\!+\!1)^2\!-\!R_\rho] \right)  \cap I 
=\\  \left(  H^1_\rho \cap I \right) \! \cup \! \left( (c^2 \! - \! H^2_\rho)\cap I \right) \! \cup \! \left(  [c^2\!-\! R_\rho\!+\! 1,(c\!+\!1)^2\!-\!R_\rho] \cap I \right) \\
\subset H^1_\rho \cup \emptyset \cup  ((c\!+\!1)^2\! -\! H^2_\rho) 
 \end{eqnarray*}
and we see that $k \notin S_{k+1}$, in contradiction to $k \in S_{k+1}$.
Thus the assumption is wrong and we conclude $S_{c+1} \subset P_{c+1}$, so the induction is complete.
\end{proof}

Let $1^j$ denote the rectangular partition $j \times 1$.
Let $(i,1^j)$ denote the hook partition of $i+j$ with $i$ boxes in the first row and $j+1$ boxes in the first column.
So $(k-j,1^j)$ has $k$ boxes.
The next corollary says that most hooks have positive rectangular Kronecker coefficient.
\begin{corollary}\label{cor:hooks1} 
Let $w \geq h \geq 7$, then $g( (hw-j,1^j ), h \times w, h \times w)>0$ for all $j \in [0,h^2-1]\setminus \{1,2,4,6, h^2-2,h^2-3, h^2-5, h^2-7\}$.
\end{corollary}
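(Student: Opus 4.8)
The plan is to deduce Corollary~\ref{cor:hooks1} from Proposition~\ref{prop:extensions} applied to the empty partition, and then to pass from the square rectangle $h\times h$ to the oblong rectangle $h\times w$ by one more application of the semigroup property (in the spirit of the proof of Lemma~\ref{lem:rect_pos}).

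\emph{Step 1 (specialize Proposition~\ref{prop:extensions}).} Take $\rho=\emptyset$, so that $\ell=0$, $R_\rho=|\rho|+\rho_1+1=1$, and $\nu^k=k\times 1$; then $\nu^k(b^2)=(b^2-k,1^k)$ is precisely the hook with $k+1$ boxes in the first column inside $b^2$ boxes. Choose $H^1_\rho:=\{1,2,4,6\}$ and $H^2_\rho:=\{2,3,5,7\}$, both contained in $[1,2a+1]$ once $a\geq 7$, and run the induction from the base value $a=7$; then $a>\max(\sqrt{R_\rho+\ell}+3,\ell/2-1,6)=6$, as required. Since $b^2-H^2_\rho=\{b^2-2,b^2-3,b^2-5,b^2-7\}$, the proposition then delivers $g((b^2-j,1^j),b\times b,b\times b)>0$ for all $b\geq 7$ and all $j\in[0,b^2-1]\setminus\{1,2,4,6,b^2-2,b^2-3,b^2-5,b^2-7\}$, which is exactly the conclusion of Corollary~\ref{cor:hooks1} in the square case $w=h=b$. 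The two exceptional blocks are disjoint since $b^2\geq 49$, and the asymmetry $H^2_\rho=H^1_\rho+1$ is forced by the transposition property: $g((b^2-j,1^j),b\times b,b\times b)=g((b^2-j,1^j)^t,b\times b,b\times b)=g((b^2-(b^2-1-j),1^{b^2-1-j}),b\times b,b\times b)$, so the forbidden set must be stable under $j\mapsto b^2-1-j$, which carries $\{1,2,4,6\}$ to $\{b^2-2,b^2-3,b^2-5,b^2-7\}$.

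\emph{Step 2 (base case).} It remains to supply the hypothesis of Proposition~\ref{prop:extensions} at $a=7$, namely $g((49-j,1^j),7\times 7,7\times 7)>0$ for every $j\in[0,48]\setminus\{1,2,4,6,42,44,46,47\}$. This is where I expect the real work to lie. It is a finite verification; by the transposition symmetry $j\mapsto 48-j$ it suffices to treat $j\leq 24$, and each individual value can be computed from $s_{(N-j,1^j)}=\sum_{i=0}^{j}(-1)^{j-i}h_{N-i}e_i$, which turns $g((49-j,1^j),7\times 7,7\times 7)$ into an alternating sum of products of Littlewood--Richardson coefficients of the $7\times 7$ rectangle (done by hand for the smallest $j$, by a short computer computation for the rest). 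One should also record that the four small values $j\in\{1,2,4,6\}$ genuinely cannot be included --- for instance $g((48,1),7\times 7,7\times 7)=\#\{\text{removable boxes of }7\times 7\}-1=0$ --- so that the choice of $H^1_\rho$ is sharp.

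\emph{Step 3 (square to rectangle).} Finally, fix $w\geq h\geq 7$ and $j$ outside the exceptional set. Step~1 gives $g((h^2-j,1^j),h\times h,h\times h)>0$, while $g((h(w-h)),h\times(w-h),h\times(w-h))=1$ because the Kronecker product of the trivial character with any irreducible returns that irreducible. Adding these two triples via the semigroup property, and observing that $(h^2-j,1^j)+(h(w-h))=(hw-j,1^j)$ (the new row of length $h(w-h)$ is absorbed into the first row) and $(h\times h)+(h\times(w-h))=h\times w$, we conclude $g((hw-j,1^j),h\times w,h\times w)>0$, which is Corollary~\ref{cor:hooks1}.
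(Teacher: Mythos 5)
Your proposal is correct and follows essentially the same route as the paper: specialize Proposition~\ref{prop:extensions} to $\rho=\emptyset$, $a=7$, $H^1_\rho=\{1,2,4,6\}$, $H^2_\rho=\{2,3,5,7\}$, verify the base case at $a=7$ by direct computation, and then pass from $h\times h$ to $h\times w$ by adding the trivial triple $((h(w-h)),h\times(w-h),h\times(w-h))$ via the semigroup property. The extra observations you include (that the transposition symmetry $j\mapsto b^2-1-j$ forces $H^2_\rho=H^1_\rho+1$, and the Jacobi--Trudi/Littlewood--Richardson method for carrying out the finite check) are correct and helpful glosses on the computational step, but do not change the argument.
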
 
\begin{proof}
We apply Prop.~\ref{prop:extensions} with $\rho = \emptyset$, $a=7$, $R_\rho=1$ and $H^1_{\rho} = \{1,2,4,6\}$ and $H^2_\rho = \{2,3,5,7\}$. The values at $a=7$ are readily verified by direct computation. Then we have for all $b \geq 7$ that $g( (b^2-j,1^j), b \times b, b \times b )>0$ for 
$ j \in [0,b^2-1] \setminus (H^1_\rho \cup (b^2 - H^2_\rho) )$. Let $h=b$ and use the semigroup property once to add the positive triple $((h(w-h)), h \times (w-h), h \times (w-h))$ in order to obtain the statement. 
\end{proof}

Let $(i,1^j+\rho)$ denote the partition that results from $\rho$ by first adding an additional column with $j$ boxes and then adding a top row with $i$ boxes.
The next corollary treats the positivity of hooks that are not covered by Cor.~\ref{cor:hooks1}.

\begin{corollary}\label{cor:columns1246}
Fix $w \geq h  \geq 7 $. We have that $g(\lambda,h \times w, h\times w) >0$ for all $\lambda = ( hw  - j-|\rho|, 1^j + \rho)$ with
$\rho \neq \emptyset$ and $|\rho| \leq 6$ 
for all $j \in [ 1, h^2 - R_\rho]$, where $R_\rho = |\rho|+\rho_1+1$, except in the following cases:
(i) $\rho = (1)$ with $j=2$ or $j=h^2-4$;
(ii) $\rho=(2)$ with $j=2$;
(iii) $\rho =(1^2)$ and $j=1$ or $j=h^2-5$; 
(iv) $\rho=(2,1)$ and $j=1$.
\end{corollary}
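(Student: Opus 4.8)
The plan is to derive the statement from Proposition~\ref{prop:extensions} — exactly as Corollary~\ref{cor:hooks1} was derived from it, but now with a nonempty $\rho$ — together with one application of the semigroup property to pass from square rectangles to $h\times w$ rectangles, preceded by a short reduction that disposes of the degenerate part of the index range.

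\textbf{Reduction to $j\ge\ell(\rho)$.} In the notation $\nu^k=k\times 1+\rho=1^k+\rho$ of Proposition~\ref{prop:extensions} one has $\lambda=\nu^j(hw)$, but that proposition only speaks about $k\ge\ell(\rho)$. If $1\le j<\ell(\rho)$, delete the first column of the body $1^j+\rho$; the result is a partition $\rho_0$ with $(\rho_0)_1=\rho_1$ and $|\rho_0|=|\rho|-(\ell(\rho)-j)<|\rho|$, and $1^{j_0}+\rho_0=1^j+\rho$ for $j_0:=\ell(1^j+\rho)=\ell(\rho)\ge\ell(\rho_0)$, while $R_{\rho_0}=R_\rho-(\ell(\rho)-j)$, so that $j_0\in[\ell(\rho_0),h^2-R_{\rho_0}]$ as soon as $j\in[1,h^2-R_\rho]$. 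Thus $\lambda$ is the partition attached to the datum $(\rho_0,j_0)$, which already has $j_0\ge\ell(\rho_0)$; so it suffices to treat the case $j\ge\ell(\rho)$, provided one checks that this reduction lands on an excluded pair only for $(\rho,j)=((1^2),1)\mapsto((1),2)$ and $(\rho,j)=((2,1),1)\mapsto((2),2)$ — which is precisely why (iii) and (iv) appear alongside (i) and (ii).

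\textbf{The square case and the extension.} Fix a nonempty $\rho$ with $|\rho|\le 6$ and apply Proposition~\ref{prop:extensions} with $a=7$. Its hypothesis $a>\max(\sqrt{R_\rho+\ell(\rho)}+3,\ \ell(\rho)/2-1,\ 6)$ holds because the principal hook bound $\rho_1+\ell(\rho)\le|\rho|+1\le 7$ gives $R_\rho+\ell(\rho)=|\rho|+\rho_1+1+\ell(\rho)\le 14$, whence $\sqrt{R_\rho+\ell(\rho)}+3<7$, and $\ell(\rho)\le 6$. The sets $H^1_\rho,H^2_\rho\subseteq[\max(\ell(\rho),1),15]$ are read off from a direct evaluation of the Kronecker coefficients $g(\nu^k(49),7\times 7,7\times 7)$ for $k\in[\ell(\rho),49-R_\rho]$: the content of the corollary is that this coefficient is positive for every such $k$ except at the pairs $(\rho,k)\in\{((1),2),((2),2),((1^2),44),((1),45)\}$, so one takes $H^1_{(1)}=\{2\}$, $H^2_{(1)}=\{4\}$, $H^1_{(2)}=\{2\}$, $H^2_{(1^2)}=\{5\}$, and $H^1_\rho=H^2_\rho=\emptyset$ in all remaining cases. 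Proposition~\ref{prop:extensions} then delivers $g(\nu^k(b^2),b\times b,b\times b)>0$ for every $b\ge 7$ and every $k\in[\ell(\rho),b^2-R_\rho]\setminus(H^1_\rho\cup(b^2-H^2_\rho))$, where $b^2-H^2_{(1)}=\{b^2-4\}$ and $b^2-H^2_{(1^2)}=\{b^2-5\}$ produce the high-index exceptions of (i) and (iii). Finally, for $w\ge h\ge 7$ add the positive triple $((h(w-h)),h\times(w-h),h\times(w-h))$ to $g(\nu^j(h^2),h\times h,h\times h)>0$ by the semigroup property; since $\nu^j(h^2)+(h(w-h))=(hw-j-|\rho|,1^j+\rho)=\lambda$ and $h\times h+h\times(w-h)=h\times w$, this yields $g(\lambda,h\times w,h\times w)>0$ on the same index set, and enlarging the rectangle introduces no new exceptions.

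\textbf{Main obstacle.} The crux is the base-case verification inside the square case: one must confirm computationally that $g(\nu^k(49),7\times 7,7\times 7)>0$ for each of the $\sim 29$ shapes $\rho$ with $1\le|\rho|\le 6$ and each admissible $k$, apart from the four pairs above — a finite but substantial check on which the precise exception list of the corollary stands or falls. A lesser nuisance is the bookkeeping in the reduction step, namely verifying that every $(\rho,j)$ with $j<\ell(\rho)$ either reduces to a non-exceptional datum of strictly smaller $|\rho_0|$ or is one of the two genuine exceptions, and that the index ranges transform correctly under column deletion.
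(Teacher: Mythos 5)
Your proof is correct, and it matches the paper's argument in its core mechanism — Proposition~\ref{prop:extensions} applied with $a=7$ (after verifying the hypothesis $7>\sqrt{R_\rho+\ell(\rho)}+3$ via the principal-hook bound), a finite computational base case, and one final application of the semigroup property with the triple $\bigl((h(w-h)),\,h\times(w-h),\,h\times(w-h)\bigr)$ to pass from $h\times h$ to $h\times w$. Where you diverge is in the treatment of the degenerate index range $j<\ell(\rho)$, which Proposition~\ref{prop:extensions} does not reach. The paper simply runs a separate direct computational check at $h=w=7$ for all small $j$ (it phrases this as ``$j\le 6$'') and semigroups up. You instead reparametrize: since $\lambda$ depends only on the body $1^j+\rho$, deleting its first column yields a smaller $\rho_0$ with $(\rho_0)_1=\rho_1$, $|\rho_0|=|\rho|-(\ell(\rho)-j)$, and $j_0=\ell(\rho)\ge\ell(\rho_0)$, with $j_0+R_{\rho_0}=j+R_\rho$ so the index windows line up; one then checks that the only degenerate pairs landing on an exception are $((1^2),1)\mapsto((1),2)$ and $((2,1),1)\mapsto((2),2)$, which is exactly why (iii) and (iv) appear. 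Your route is slightly more economical — it relies on a single computation (the base case of Proposition~\ref{prop:extensions}, which already covers every $k\in[\ell(\rho),49-R_\rho]$) rather than an additional small-$j$ sweep — at the cost of the bookkeeping you flag, which you carry out correctly. Both are sound; the difference is an elegance/tedium trade-off in handling the same finite fringe.
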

\begin{proof}
For all values of $j \leq 6$ we have finitely many partitions $\nu^j := 1^j+ \rho$ of length at most 6 and width at most 7, for which we verify computationally the statement with $h=w=7$ and then by the semigroup property deduce it for $\lambda=\nu^j(hw) , h \times w, h \times w$ with $h,w \geq 7$. 

Now assume that $j>6$, so that $j \geq \ell(\rho)$ and then $j$ falls into the conditions of  Prop.~\ref{prop:extensions}, which we now apply with the following values for $\rho, R_{\rho}, H^1_\rho, H^2_\rho$. 
We have that $\ell(\rho),\rho_1 \leq 6$ and   $R_\rho +\ell(\rho)= |\rho| + \rho_1 +\ell(\rho)+1 \leq |\rho| + |\rho|+2 \leq 14$, so $7>\sqrt{R_\rho+\ell(\rho)} +3$, so we can apply Proposition~\ref{prop:extensions} with initial condition $a=7$, which is verified computationally for the following sets $H^1_\rho$ and $H^2_\rho$. 

(i) When $\rho=(1)$ then $R_\rho = 3$, $H^1_{\rho} = \{2\}$ and $H^2_{\rho} =\{4\}$, we obtain $g( (h^2 -1 - j, 2,1^{j-1}), h \times h, h \times h)>0$ for $j\in [1, h^2-R_\rho] \setminus \{2, h^2 - 4\}$.

(ii) When $\rho=(2)$ then $R_\rho = 5$,
 $H^1_{\rho} =\{2\}$ and $H^2_{\rho} = \emptyset$.

(iii) When $\rho = (1^2)$ then $R_\rho = 4$, $H^1_{\rho} = \emptyset$ and $H^2_\rho = \{5\}$. Note that here we apply Proposition~\ref{prop:extensions} for $j \geq \ell(\rho)=2$, whereas the exceptional $j=1$ was already treated in the computational verification.  

(iv) When $\rho=(2,1)$ and $j\geq 2$, then we apply Proposition~\ref{prop:extensions} with $H^1_\rho=H^2_\rho= \emptyset$, and the case $j=1$ was excluded computationally. 

For all other $\rho$ with $j\geq \ell(\rho)$,  set $H^1_{\rho} = H^2_\rho =\emptyset$ and $R_\rho = |\rho|+\rho_1+1$. The cases $j<\ell(\rho)$ were already treated computationally.

Last, we add the positive triple $(h(w-h), h \times (w-h), h \times (w-h))$ to  $ ( h^2  - j-|\rho|, 1^j+\rho ), h \times h, h \times h$ to obtain the statement. 
\end{proof}

Now we are ready to prove the main positivity theorem.
\begin{theorem}\label{thm:mainpositivityresult}
Let $\nu \notin \exceptions$ and $\ell = max(\ell(\nu)+1,9)$,
$a >  3\ell^{3/2}$, 
$b \geq 3 \ell^2$ and
$|\nu| \leq ab/6$.
Then $g(\nu(ab),a \times b, a \times b)>0$.
\end{theorem}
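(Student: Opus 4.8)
The plan is to run the template of Theorem~\ref{thm:didact}, but with Lemma~\ref{lem:partitiondecomposition} doing the work in place of the ad hoc decomposition there, so that the column lengths $2,3,5,7$ get handled. The organizing principle is that every building block will be matched with a Kronecker rectangle having exactly $a$ rows; the semigroup property then glues the blocks into a single positive coefficient $g(\nu(a\tilde b),a\times\tilde b,a\times\tilde b)>0$ for a width $\tilde b$ that I will need to bound by $b$, and one final trivial triple $((ab''),a\times b'',a\times b'')$ with $b'':=b-\tilde b\ge 0$ promotes this to $g(\nu(ab),a\times b,a\times b)>0$, because $\overline{\nu(a\tilde b)}=\nu=\overline{\nu(ab)}$ and both sizes are divisible by $a$.

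First I would apply Lemma~\ref{lem:partitiondecomposition} to write
\[
\nu=\rho+\xi+\sum_{k=2}^{\ell}x_k\,((k-1)\times k)+\sum_{k=2}^{\ell}y_k\,((k-1)\times 2),\qquad y_k<k,
\]
and then produce, for each summand, a positive rectangular Kronecker coefficient with Kronecker rectangle $a\times w$ for a suitably small width $w$. For the fat blocks $x_k((k-1)\times k)$: group the $x_k$ copies into $\lfloor x_k/\lfloor a/k\rfloor\rfloor$ full groups of $\lfloor a/k\rfloor$ copies plus one shorter group; each group assembles into a $(k-1)$-row rectangle whose width is a positive multiple of $k$ not exceeding $a$, so prepending a new first row and invoking Lemma~\ref{lem:rect_pos} gives positivity with rectangle $a\times k$, at width cost $k$ per group. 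For the thin blocks $y_k((k-1)\times 2)$: when $k-1\le 6$ a single copy $(k-1)\times 2$ with a new first row is a fat hook $(aw-j-|\rho'|,1^{j}+\rho')$ with body $\rho'=1^{k-1}$ of size $\le 6$, handled by Corollary~\ref{cor:columns1246} (none of its finitely many exceptional pairs $(\rho',j)$ occurs); when $k-1\ge 7$ one groups the $y_k<k$ copies into $(k-1)\times 2y_k$, handles the largest multiple of $k$ below its width by Lemma~\ref{lem:rect_pos} and the short residual by a direct semigroup argument, keeping the width cost of order $k$ per column length. For $\xi$, which has columns of pairwise distinct lengths, none in $\{1,2,4,6\}$, and at most $\ell-1$ of them: each column of length $c$ with a new first row is the hook $(aw-c,1^{c})$, and with $w:=\max(7,\lceil\sqrt{\ell+7}\rceil)\le a$ Corollary~\ref{cor:hooks1} gives positivity with rectangle $w\times a$, which transposes to $a\times w$ at width cost $w\approx\sqrt{\ell}$. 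For $\rho$, which in each of the five shapes of Lemma~\ref{lem:partitiondecomposition} satisfies $\rho\notin\exceptions$: use Corollary~\ref{cor:columns1246} when $\rho$ is a fat hook with body in $\exceptions\setminus\{(3,1)\}$ (of size $\le 6$) and Manivel's Theorem~\ref{thm:manivel_rect} together with $a_\rho>0$ when $\rho$ has boun
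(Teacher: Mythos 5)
Your proposal follows the same strategy as the paper's proof: decompose $\nu$ via Lemma~\ref{lem:partitiondecomposition}, establish positivity of each block against an $a \times (\text{small width})$ Kronecker rectangle using Lemma~\ref{lem:rect_pos} and Corollaries~\ref{cor:hooks1} and~\ref{cor:columns1246}, glue with the semigroup property, and then absorb the slack by adding $((ab''),a\times b'',a\times b'')$. However, the proposal is cut off mid-sentence and in its present form has two genuine gaps.

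First, the whole numerical content of the theorem is the verification that the accumulated width $\tilde b$ stays below $b$. You flag this step (``for a width $\tilde b$ that I will need to bound by $b$'') but never carry it out. This is not a routine check: it requires bounding $\sum_k k h_k$ via $|\nu| + \nu_1 \le 2|\nu| \le ab/3$ and $h_k \le c_k/(a-\ell)$, estimating the contributions $\sum_k k$, $w(\ell-1)$, $2w$, and finally using the hypotheses $a > 3\ell^{3/2}$, $b \ge 3\ell^2$, $\ell \ge 9$, and $w \approx \sqrt{\ell}$ to show the total is at most $b$. Without this accounting the theorem is not proved, and the specific thresholds in the statement are exactly what make it close.

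Second, the treatment of the remainder $\rho$ is incomplete and as stated is likely circular. The text trails off with ``Manivel's Theorem~\ref{thm:manivel_rect} together with $a_\rho>0$ when $\rho$ has boun[...]''. But the positivity $a_\rho>0$ for $\rho\notin\exceptions$ is Corollary~\ref{cor:zeroness}, which is itself a consequence of Theorem~\ref{thm:secondmain}/\ref{thm:mainpositivityresult}; invoking it here would beg the question. Even verifying $a_\rho>0$ by finite computation does not suffice on its own: you must know positivity of $g(\rho(aw), a\times w, a\times w)$ at the actual rectangle sizes you feed into the semigroup argument, not just in the stable limit. The paper finesses this by a direct computation of $g(\rho(49),7\times 7,7\times 7)>0$ in the small cases (shapes (1),(3),(4),(5) of Lemma~\ref{lem:partitiondecomposition}, all of bounded size) and by Corollary~\ref{cor:columns1246} in shape (2), where the column of length $i$ can be as long as $\ell-1$. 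You should either adopt this route or give a concrete argument pinning positivity at the specific rectangle you use. Finally, a small slip: in handling thin blocks for $k-1\ge 7$ you speak of ``the largest multiple of $k$ below its width,'' but Lemma~\ref{lem:rect_pos}, applied to a $(k-1)$-row block, asks the width to be a multiple of $k-1$, not $k$; the paper instead treats each $(k-1)\times 2$ as a pair of hook columns via Corollary~\ref{cor:hooks1}, which is cleaner.
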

\begin{proof}
For the ease of notation let $\bar k := k-1$ and $\bar \ell := \ell-1$.
We decompose $\nu$ (Lem.~\ref{lem:partitiondecomposition}) into
$
\nu = \rho + \xi + \sum_{k=2}^\ell x_k (\bar k \times k) + \sum_{k=2}^\ell y_k (\bar k \times 2).
$
If we encounter a block of $s_k := \lfloor a/k\rfloor$ many $\bar k \times k$ rectangles, then we group it
to a large $\bar k\times a_k$ rectangle, where $a_k := k s_k$,
which is $a$ rounded down to the next multiple of $k$.
Formally,
we divide $x_k$ by $s_k$ to obtain $x_k = h_k s_k + t_k$,
where $t_k < s_k$.
So we write
\begin{eqnarray*} 
\nu &=& \textstyle\rho + \xi + \sum_{k=2}^\ell h_k (\bar k \times a_k) \\
&+&\textstyle\sum_{k=2}^\ell t_k (\bar k \times k) + \sum_{k=2}^\ell y_k (\bar k \times 2).
\end{eqnarray*}
We will treat these 5 summands independently.

Using Lem.~\ref{lem:rect_pos} with $s=s_k$ (recall $a_k=k s_k$) we see that
$
g( (\bar k \times a_k)(ak), a \times k, a \times k)>0.
$
Using the semigroup property for the $h_k$ summands we get
\begin{equation}\label{eq:summandI}
g( (\bar k \times (a_k h_k))(h_k a k), a \times (k h_k), a \times (k h_k))>0.
\end{equation}

Using Lem.~\ref{lem:rect_pos} again, this time with $s=t_k$ we see that
\begin{equation}\label{eq:summandII}
g( (\bar k \times (k t_k))(ak), a \times k, a \times k)>0.
\end{equation}

Using Corollary~\ref{cor:hooks1} twice with the semigroup property
(in the case $k \notin\{2,3,5,7\}$)
or using Corollary~\ref{cor:columns1246} we see that
$
g((\bar k \times 2)(2hw),h \times 2w, h \times 2w) > 0
$
for all $h,w \geq 7$, $h \geq \sqrt{ k+7}$, $w \geq \sqrt{ k+7}$.

Choose $h=w:=\max(\lceil\sqrt{\ell(\nu)+8}\rceil,7)$ to obtain
$
g((\bar k \times 2)(2w^2), w \times 2w, w \times 2w) > 0
$
for all $2 \leq k \leq \ell(\nu)$.
Using the semigroup property $y_k$ times we get
$
g((\bar k\times (2 y_k))(2w^2 y_k), w \times (2y_k w), w \times (2y_k w)) > 0
$
and hence by transposition:
$
g((\bar k\times (2 y_k))(2w^2 y_k), (2y_k w) \times w , (2y_k w) \times w ) > 0.
$
Since $y_k <k \leq \ell(\nu)+1$, we have that $a \geq 3\ell^{3/2} \geq 2\ell(\nu) \sqrt{\ell(\nu)+8} \geq2y_k w$ we have
$
g((\bar k\times (2 y_k))(a w), a \times w , a \times w ) > 0.
$
The semigroup property gives
\begin{equation}\label{eq:summandIII}
\begin{split}
&\textstyle g(( \sum_{k=2}^\ell (\bar k\times (2 y_k))  )(a w\bar\ell), \\
&a \times (w\bar\ell), a \times (w\bar\ell) ) > 0.
\end{split}
\end{equation}

The columns of $\xi$ are all distinct and not of length 1, 2, 4, 6.
Thus we can use Corollary~\ref{cor:hooks1} to obtain
$
g( (\bar k\times 1)(w^2), w \times w, w \times w) > 0.
$
Since $\xi$ has at most $\ell-1$ columns the semigroup property gives
$
g( \xi(w^2\bar\ell), w \times (w\bar\ell), w \times (w\bar\ell)) > 0.
$
Transposition gives
$
g( \xi(w^2\bar\ell), (w\bar\ell) \times w, (w\bar\ell) \times w) > 0.
$
Since $a \geq w\bar\ell$ we have
\begin{equation}\label{eq:summandIV}
g(\xi(aw), a \times w, a \times w) > 0.
\end{equation}

For $\rho$ we make the case distinction from Lemma~\ref{lem:partitiondecomposition}.
In cases $(1)$, $(3)$, $(4)$, and $(5)$ a finite calculation shows that $g(\rho(49),7\times 7,7\times 7)>0$.
In case $(2)$ we invoke Cor.~\ref{cor:columns1246} to see that
$g(\rho(wa),a\times w,a\times w)>0$.
In both cases we have
\begin{equation}\label{eq:summandV}
g(\rho(wa),a\times w,a\times w)>0).
\end{equation}

Using the semigroup property on equations \eqref{eq:summandI}, \eqref{eq:summandII}, \eqref{eq:summandIII}, \eqref{eq:summandIV}, and \eqref{eq:summandV} we obtain
\begin{equation*}
g(\nu(aM),a \times M,a \times M) > 0,
\end{equation*}
where $M = \sum_{k=2}^\ell k h_k +  \sum_{k=2}^\ell k + w(\ell-1) + 2w$.
We want to show that $M \leq b$.
Note that
$
h_k \leq \frac{c_k}{a_k} = \frac{c_k}{k\lfloor a/k\rfloor} \leq \frac{c_k}{k(a/k-1)} = \frac{c_k}{a-k} \leq \frac {c_k}{a-\ell}.
$
This can be used to show
$
\sum_{k=2}^\ell k h_k \leq \sum_{k=2}^\ell k {c_k}/({a-\ell}) = (|\nu|+\nu_1)/({a-\ell}).
$
\begin{eqnarray*}
M &=& \textstyle\sum_{k=2}^\ell k h_k +  \sum_{k=2}^\ell k + w(\ell-1) + 2w\\
&\leq&  \frac{|\nu|+\nu_1}{a-\ell} + \ell(\ell+1)/2-1 + w\ell +w \\
&\leq& \textstyle\frac{2|\nu|}{a-\ell} + \ell(\ell+1)/2+ w\ell + w+1\\
&\leq&  \frac{ ab}{3 ( a - \ell) } +(\ell+1)^2 \leq  \frac{3b}{8} \frac{8/9}{1 - \ell/a} + \frac{5\ell^2}{4} \\ 
&\leq & \frac{3b}{8} + \frac{5 b/2}{4} 
\leq b.
\end{eqnarray*}

For the last lines we observe that $1 - \ell/a \geq 1 -\frac{1}{3\sqrt{\ell} } \geq 8/9$, and also that $w < \sqrt{\ell+7}+1\leq \frac{\ell+1}{2}$ and $(\ell+1)^2 \leq 5/4 \ell^2$ for $\ell \geq 9$. 
\end{proof}

\begin{proof}[Proof of Thm.~\ref{thm:secondmain}]
If $\bar\la \in \exceptions$, a finite calculation reveals $a_{\bar\la} = 0$.
Combining this with Prop.~\ref{pro:main} gives $a_\la(d[n])=0$.
If $\bar\la \notin \exceptions$,
we invoke Thm.~\ref{thm:mainpositivityresult} with $\nu=\bar\la$, $\ell = m^2$, 
$a=d$, and $b=n$.
Note that $b \geq 3 m^4 = 3 \ell^2$,
and $a \geq 3 m^3 = 3 \ell^{3/2} $
and $|\bar\la|\leq md = a\sqrt\ell \leq ab/6$.
\end{proof}

\section{Further positivity results: limit coefficients, stretching factor, and semigroup saturation}\label{sec:corollaries}
In the rest of the appendix
we prove the positivity of rectangular Kronecker coefficients for a large class of partitions
where the side lengths of the rectangle are at least quadratic in the length of the partition.
Moreover, we prove that the saturation of the rectangular Kronecker semigroup is trivial,
we show that the rectangular Kronecker positivity stretching factor is 2 for a long first row,
and we completely classify the positivity of
rectangular limit Kronecker coefficients that were introduced by Manivel in 2011.

\subsection{Classification of vanishing limit rectangular Kronecker coefficients}
Recall the definition $\exceptions = \{(1),(2\times1),(4\times1),(6\times1),(2,1),(3,1)\}$ from Thm.~\ref{thm:secondmain}.
Using Thm.~\ref{thm:secondmain} we get a complete classification of all cases in which $a_\rho = 0$.
\begin{corollary}\label{cor:zeroness}
$a_\rho = 0$ iff $\rho \in \exceptions$.
\end{corollary}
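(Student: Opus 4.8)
The plan is to read off both implications from results already in hand. The starting point is that, by Theorem~\ref{thm:manivel_rect} and the Remark following it, $a_\rho$ is \emph{literally} the single Kronecker coefficient $g(\rho(nd),n\times d,n\times d)$ evaluated at any pair $(n,d)$ with $n\ge|\rho|$ and $d\ge|\rho|$. Since every $\rho\in\exceptions$ has $|\rho|\le 6$, this means $a_\rho=g(\rho(36),\,6\times 6,\,6\times 6)$ for all six exceptional partitions, so the ``if'' direction reduces to checking that six explicit Kronecker coefficients vanish --- precisely the ``finite calculation'' already invoked in the proof of Theorem~\ref{thm:secondmain}. These are, unsurprisingly, among the cases that the positivity machinery of Corollaries~\ref{cor:hooks1} and \ref{cor:columns1246} cannot reach. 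For illustration I would record the cleanest one: $a_{(1)}=0$ follows from the branching rule, since writing $N=nd$ one has $g((N-1,1),n\times d,n\times d)=\langle\operatorname{Res}_{\aS_{N-1}}[n\times d],\operatorname{Res}_{\aS_{N-1}}[n\times d]\rangle-1$, the rectangle $n\times d$ has a unique removable corner, so the restriction is irreducible and the value is $1-1=0$; the remaining five vanishings I would just compute directly.

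For the ``only if'' direction I would prove the contrapositive: if $\rho\notin\exceptions$ then $a_\rho>0$. The idea is to exhibit one large square pair and feed it to Theorem~\ref{thm:secondmain}(b). Choose $m:=\max\bigl(3,\lceil\sqrt{\ell(\rho)+1}\,\rceil\bigr)$, so that $m\ge 3$ and $\ell(\rho)+1\le m^2$, and choose $d=n:=\max(3m^4+1,\,|\rho|)$. This is large enough that $\la:=\rho(nd)=\rho(d^2)$ is a genuine partition of $nd$ with $\bar\la=\rho$ and $\ell(\la)=\ell(\rho)+1\le m^2$ (one needs $d^2-|\rho|\ge\rho_1$, which holds because $d\ge|\rho|$ and $|\rho|(|\rho|-1)\ge\rho_1$ for every partition with $|\rho|\neq 1$, and $|\rho|=1$ only for $\rho=(1)\in\exceptions$). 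Moreover $|\bar\la|=|\rho|\le d\le md$, $d>3m^3$, and $n>3m^4$, while $\bar\la=\rho\notin\exceptions$; hence Theorem~\ref{thm:secondmain}(b) gives $g(\rho(d^2),\,d\times d,\,d\times d)>0$. Since $n=d\ge|\rho|$, the Remark after Theorem~\ref{thm:manivel_rect} identifies this coefficient with $a_\rho$, so $a_\rho>0$, as desired. (The degenerate case $\rho=\emptyset$ is handled by the same parameter choice, giving $g((d^2),d\times d,d\times d)=1>0$.)

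The only genuinely computational ingredient is the verification that the six exceptional Kronecker coefficients vanish; everything else is bookkeeping on the parameters $m,n,d$ together with an appeal to Manivel's stabilization. I do not anticipate any real obstacle, since the substance of the statement is entirely contained in Theorem~\ref{thm:secondmain} and Theorem~\ref{thm:manivel_rect}, both of which are available.
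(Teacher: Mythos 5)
Your argument is correct and coincides with the paper's own proof: for $\rho\notin\exceptions$ pick $n,d$ large and apply Theorem~\ref{thm:secondmain}(b) together with the stabilization from the remark after Theorem~\ref{thm:manivel_rect}, and for $\rho\in\exceptions$ reduce to a finite check of six rectangular Kronecker coefficients. The explicit parameter bookkeeping and the branching-rule illustration for $\rho=(1)$ are nice but add nothing essential beyond what the paper already uses.
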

\begin{proof}
Given $\rho \notin \exceptions$, $a_\rho>0$ can be seen by choosing large $d$ and $n$ and applying Thm.~\ref{thm:secondmain}.
For $\rho \in \exceptions$, $a_\rho=0$ is a small finite calculation.
\end{proof}

\subsection{Double and triple column hooks}

Here we study Kronecker coefficients for partitions $\lambda = i^k$ and the Kronecker coefficients $g(\lambda(ab), a \times b, a \times b)$ when  $i=2$ and $i=3$. In the case of $i=1$ these were exactly the hooks which were already classified. By that classification and the semigroup property it is readily seen that since $\lambda = i ( 1^k)$, for $k \neq 1,2,4,6, d^2-7, a^2-5, a^2-3, a^2-2$ we have $g(\lambda(ab), a \times b, a \times b)>0$ for $b$ large enough. We now prove that this positivity holds in fact for all $k \in [0,a^2-1]$ when $i>1$.

\begin{proposition}\label{prop:stretchedhooks}
Let $i>1$. For any  $m  \geq 7$ and $k\in[0, m^2 -1 ]$ we have that $g( i(m^2-k, 1^k), m \times (im), m \times (mi) )>0$.
\end{proposition}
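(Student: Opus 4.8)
The plan is to reduce to the hook case (Corollary~\ref{cor:hooks1} together with the exceptional-hook Corollary~\ref{cor:columns1246}) and then patch the finitely many values of $k$ that those results miss, using the extra freedom that comes from $i>1$. First I would observe that $i(m^2-k,1^k) = i\cdot((m^2-k,1^k))$, i.e. the target partition is the rowwise scaling by $i$ of the hook $(m^2-k,1^k)$. So the natural first attempt is: by Cor.~\ref{cor:hooks1} we have $g((m^2-k,1^k), m\times m, m\times m)>0$ for all $k\in[0,m^2-1]$ outside the bad set $B := \{1,2,4,6,m^2-2,m^2-3,m^2-5,m^2-7\}$; adding this triple to itself $i$ times via the semigroup property yields $g(i(m^2-k,1^k), m\times(im), \ldots)$ — wait, scaling the rectangles gives $(im)\times m$ in each coordinate, not $m\times(im)$. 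The clean statement is that $i$-fold semigroup addition of $((m^2-k,1^k), m\times m, m\times m)$ gives positivity of $g(i(m^2-k,1^k), (im)\times(im), (im)\times(im))$; this is not yet what we want. Instead I would add $i$ copies of the single-row triple $((m\cdot m),\, m\times m,\, m\times m)$ shifted appropriately — more precisely, use Cor.~\ref{cor:hooks1} with $h=m$, $w=im$ directly: it already delivers $g((im\cdot m - j,1^j), m\times(im), m\times(im))>0$ for $j\notin\{1,2,4,6,\ldots\}$, but our first row has size $im^2-ik$, not $im\cdot m-j$ with small $j$, so this alone handles only large $k$.

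So the real argument must produce the scaling $i(m^2-k,1^k)=(i(m^2-k),\,i^k)$ — a partition with a long first row of length $i(m^2-k)$ and then $k$ further rows each of length $i$. I would decompose this as $(i(m^2-k),i^k) = \big((i(m^2-k)),\text{empty rows}\big)$-type first-row piece plus $k$ copies of the single row $(i)$ stacked, i.e. write it as a sum (as integer vectors) of the hook $(m^2-k+?,1^k)$ and multiples of rectangles of the form $(k'\times i)$ or $\bar k\times i$ blocks, and apply Lem.~\ref{lem:rect_pos} and Cor.~\ref{cor:hooks1}/Cor.~\ref{cor:columns1246} to each summand, then reassemble with the semigroup property, padding the short coordinate up to $im$ with a single-row triple at the end exactly as in the proofs of Thm.~\ref{thm:didact} and Thm.~\ref{thm:mainpositivityresult}. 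For the generic $k$ (those with $k\notin B$) this goes through on the nose. For $k$ in the bad set $B$ (there are at most $8$ such values), the whole point of the hypothesis $i>1$ is that $i^k$ has columns of length $k$ but width $i\ge 2$, so these fall under Corollary~\ref{cor:columns1246} with $\rho=(i-1)$ placed in an extra column (or under a small variant of Prop.~\ref{prop:extensions} applied with $a=7$, $\rho$ a short row/column of size $\le i-1$), and the finitely many genuinely small cases ($m=7$, $k$ tiny) are verified by direct computation just as Cor.~\ref{cor:columns1246} does. The transposition property is what lets us freely swap which coordinate of the rectangle is long, so that $m\times(im)$ and $(im)\times m$ are interchangeable throughout.

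The main obstacle I expect is bookkeeping the eight (or so) exceptional values of $k$ from $B$ and checking that for each of them the partition $i(m^2-k,1^k)$ really is covered — either by Cor.~\ref{cor:columns1246} with an appropriate small $\rho$, or by a direct $7\times 7$ (here $7\times 7i$) computation — and in particular verifying that when $i>1$ none of the exceptional cases (i)--(iv) of Cor.~\ref{cor:columns1246} actually bite, because those exceptions all require $|\rho|\le 6$ with a very specific tiny shape that the genuine column-of-width-$i$ structure avoids once $i\ge 2$. A secondary nuisance is confirming the size bound $|\nu|\le \frac{ab}{6}$-type inequality is automatically satisfied here (with $\nu=i(m^2-k,1^k)$, $a=m$, $b=im$, so $|\nu|=im^2$ and $ab/6 = im^2/6$ fails!) — which is exactly why one cannot simply quote Thm.~\ref{thm:mainpositivityresult} and must instead run the hook-based decomposition by hand, exploiting that all the pieces here are literally hooks and single-width-$i$ rectangles rather than general shapes.
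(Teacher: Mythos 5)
Your opening move contains an error that colors the rest of the attempt: when you add the triple $\bigl((m^2-k,1^k),\,m\times m,\,m\times m\bigr)$ to itself $i$ times via the semigroup property, the partitions add row-wise, so $i\cdot(m\times m)$ is $(im,im,\ldots,im)$ with $m$ parts, i.e.\ exactly $m\times(im)$ --- not $(im)\times(im)$ and not $(im)\times m$. Your first instinct was correct and the ``wait'' was the mistake. Thus the direct $i$-fold semigroup addition of the hook triple from Cor.~\ref{cor:hooks1} already yields $g\bigl(i(m^2-k,1^k),\,m\times(im),\,m\times(im)\bigr)>0$ for every $k$ outside the eight-element bad set, and this is precisely how the paper disposes of the generic case; the decomposition machinery you then reach for is not needed there.

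For the remaining eight values your plan does not close. The small values $k\in\{1,2,4,6\}$ can indeed be handled as you suggest: reduce to $i\in\{2,3\}$ via $i=2i_1+3i_2$ and verify a handful of small cases computationally, which matches the paper. But for $k\in\{m^2-2,\,m^2-3,\,m^2-5,\,m^2-7\}$, writing $r:=m^2-k-1\in\{1,2,4,6\}$, the partition is $i(m^2-k,1^k)=(i(r+1),\,i^{m^2-r-1})$, which has $m^2-r-1\geq 42$ rows of length $i$ below the first row. To express it in the form $(hw-j-|\rho|,\,1^j+\rho)$ of Cor.~\ref{cor:columns1246} you are forced to put an $(i-1)\times(m^2-r-1)$ block into $\rho$, of size $(i-1)(m^2-r-1)\geq 42$, which blows past the $|\rho|\leq 6$ hypothesis; transposing does not rescue this. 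The paper instead transposes the partition together with exactly one of the two rectangles (producing rectangles of genuinely different shape, $(ia)\times b$ versus $a\times(bi)$), sets $\mu^i[a,b]:=((ab-r)^i,1^{ir})$, verifies positivity at $(a,b)=(7,7)$ by computer, and runs a two-parameter induction on $(a,b)$ whose step adds the triple $\bigl(a_0^i,\,1^{ia_0},\,i^{a_0}\bigr)$. That ad hoc induction, bumping one side length at a time while keeping the long rows of $\mu^i$ synchronized, is the ingredient your sketch is missing for these four residual values of $k$.
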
 
\begin{proof}
 First, note that proposition follows from the hook positivity as long as $k \neq 1,2,4,6, m^2-7, m^2-5, m^2-3, m^2-2$. In the  case when $k=1,2,4,6$  finite calculations for $i=2,3$ and $m=7$ give positive values. For $i\geq 4$ we have that $i = 2i_1+ 3i_2$ for some $i_1,i_2 \geq 0$, and then we apply the semigroup property for $i_1$ many double hooks plus $i_2$ many triple hooks. By that argument, we can always assume that $i \leq 3$. 

So we can assume that $k \in\{ m^2-7, m^2 -5, m^2-3, m^2-2\}$, i.e. $r: = m^2 - k -1 \in [0,6]$ is finite.

 Let $\mu^i[a,b]:= ((k+1)^i, 1^{ir})=((ab-r)^i,1^{ir})$ be its transpose partition.
 Note that $\mu^i[m,m] = ( i(m-k, 1^k) )^t$, so by transposing one of the rectangles the statement to prove is equivalent to showing that 
 $$g( \mu^i[m,m], (im) \times m, m \times (im) )>0.$$
 This will follow from the following claim applied when $a=b=m$:
 
Claim: We have that for all $a,b \geq 6$, $r \leq 7$ and $i \in [2,3]$
 $$g( \mu^i[a,b], (ia) \times b, a \times (bi))>0.$$  
 We prove this claim by induction on $(a,b)$, with initial condition computationally verified for $(a,b)=(7,7)$ for the given finite set of values for $i$ and $r$. 
 Suppose that the claim holds for some values $a=a_0 ,b=b_0 \geq 7$, i.e. we have
 $g( \mu^i[a_0,b_0], (ia_0) \times b_0, a_0 \times (b_0 i))>0$. Consider the triple $(a,b)=(a_0,b_0+1)$. Since $ g( (ia_0), (a_0^i), (a_0^i) )>0$ after transposing the first two partitions and rearranging them we also have that 
 $g( a_0^i, 1^{ia_0}, i^{a_0}) )>0$. Add this triple to $\mu^i[a_0,b_0], (ia_0) \times b_0, a_0 \times (b_0 i)$, applying the semigroup property, we have
 that 
 \begin{align*}
 0 < g( \mu^i[a_0,b_0] + a_0^i , (ia_0) \times b_0 + 1^{ia_0},a_0 \times (b_0 i) + i^{a_0} ) \\
 = g(  ( (a_0 b_0-r)^i +a_0^i,1^{ir} ) , (ia_0) \times (b_0+1), a_0 \times (b_0i + i) )
 \\
 = g( \underbrace{( (a_0 (b_0+1) -r)^i,1^{ir} ) }_{=\mu^i[a_0,b_0+1]}, (ia_0) \times (b_0+1), a_0 \times (b_0+1) i).
 \end{align*}
 This show that the claim holds for $(a_0,b_0+1)$ as well. By the symmetry between $a$ and $b$, we also have the statement for $(a_0+1,b_0)$, so by induction the claim holds for all $(a,b)$, s.t. $a,b \geq 7$. 
 
 Applying the claim with $a=b=m$ completes the proof.

\end{proof}

\subsection{Stretching factor 2}
In \cite{bci:10} it is shown that there exists a stretching factor $i \in \IN$ such that
$g(i\la,d\times (in),d \times (in))>0$.
It is easy to see that the stretching factor $i$ is sometimes larger than~2:
For example take $n=6$, $d=1$ and $\rho=5 \times 1$, then $\la:=\rho(nd)=6 \times 1$ and $g(6 \times 2,6\times 2,6 \times 2)=0$,
but $g(6 \times 3,6\times 3,6 \times 3)=1>0$, so here the stretching factor is 3.
For a long first row the next corollary shows that the stretching factor is always 1 or 2.

\begin{corollary}\label{cor:even}
Fix $m \geq 7$. Let $\rho$ be a partition with $\ell(\rho)<m^2$ in which every row is of even length.
Then $a_\rho(m)>0$.
\end{corollary}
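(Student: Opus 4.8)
The plan is to reduce the claim, via Manivel's Theorem~\ref{thm:manivel_rect}, to producing a \emph{single} integer $n_0$ with $g\big(\rho(n_0 m), n_0 \times m, n_0 \times m\big) > 0$. Indeed, by that theorem the function $n \mapsto g\big(\rho(nm), n \times m, n \times m\big)$ is nondecreasing in $n$ and equals $a_\rho(m)$ once $n \geq |\rho|$, so for $n := \max(n_0, |\rho|)$ we would get $a_\rho(m) = g\big(\rho(nm), n\times m, n\times m\big) \geq g\big(\rho(n_0 m), n_0 \times m, n_0 \times m\big) > 0$. (If $\rho = \emptyset$ the statement is trivial, so assume $\rho \neq \emptyset$.)

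The structural observation is that \emph{``every row of $\rho$ is even'' is equivalent to ``the columns of $\rho$ occur in equal consecutive pairs.''} Listing the column heights as $H_1 \geq \cdots \geq H_{\rho_1}$, a box $(i, 2j-1)$ lies in $\rho$ iff $\rho_i \geq 2j-1$, and since $\rho_i$ is even this holds iff $\rho_i \geq 2j$, i.e.\ iff $(i,2j) \in \rho$. Hence $H_{2j-1} = H_{2j} =: h_j$ for $1 \leq j \leq p := \rho_1/2$, and as integer vectors
\[
\rho \;=\; \sum_{j=1}^{p} (h_j \times 2),
\]
a sum of height-$h_j$, width-$2$ rectangles with $1 \leq h_j \leq \ell(\rho) \leq m^2 - 1$.

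Next I would feed each rectangle into Proposition~\ref{prop:stretchedhooks}: with $i = 2$ and $k = h_j \in [0, m^2-1]$ (legal since $m \geq 7$) it yields $g\big(2(m^2 - h_j, 1^{h_j}),\, m\times 2m,\, m\times 2m\big) > 0$, and $2(m^2 - h_j, 1^{h_j}) = (2m^2 - 2h_j, 2^{h_j})$ is exactly the padded rectangle $(h_j \times 2)(2m^2)$. Adding these $p$ positivity statements with the semigroup property, the second and third arguments sum to $m \times 2mp$, while the first arguments sum — using $|\rho| = \sum_j 2h_j$ and the row-length identity $\rho_i = 2\,|\{\,j : h_j \geq i\,\}|$ — to exactly $\rho(2m^2 p)$. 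Thus $g\big(\rho(2m^2 p),\, m\times 2mp,\, m\times 2mp\big) > 0$, and transposing the two rectangles (transposition property) rewrites this as $g\big(\rho(n_0 m),\, n_0\times m,\, n_0\times m\big) > 0$ with $n_0 := 2mp$. The first paragraph then gives $a_\rho(m) > 0$.

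There is no serious obstacle; the one thing to get right is recognizing that Proposition~\ref{prop:stretchedhooks} at $i=2$ provides precisely the padded ``double-column'' blocks out of which every even-rowed $\rho$ is built. The remaining points are routine bookkeeping: that $\rho(2m^2 p)$ really is a partition, i.e.\ $2m^2 p - |\rho| \geq \rho_1 = 2p$, which follows from $|\rho| = \sum_j 2h_j \leq 2p(m^2-1)$; and that $2m^2 p \geq |\rho|$ (since each $h_j < m^2$), so the stable-range part of Theorem~\ref{thm:manivel_rect} applies at $n = \max(n_0,|\rho|)$. One can also sidestep the monotonicity step by first absorbing a single-row triple $g\big((mt),\, m\times t,\, m\times t\big) = 1 > 0$ with $t := \max(0,\,|\rho| - 2mp)$ before transposing, so as to land directly in the stable range $n \geq |\rho|$ where the Kronecker coefficient equals $a_\rho(m)$.
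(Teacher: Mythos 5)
Your proof is correct and takes essentially the same approach as the paper: decompose $\rho$ columnwise into double-column blocks $(h_j\times 2)$, invoke Prop.~\ref{prop:stretchedhooks} with $i=2$ on each, and combine via the semigroup property and Manivel's monotonicity. You simply spell out the semigroup step on the finite Kronecker coefficients before passing to the stable range, whereas the paper phrases the same argument directly in terms of $a_{(k\times 2)}(m)$.
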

\begin{proof}
Cut $\rho$ columnwise and group pairs of columns of the same length $k$ so that you get partitions $(k \times 2)$.
By Prop.~\ref{prop:stretchedhooks} we have $a_{(k \times 2)}(m)>0$.
Using the semigroup property we get the result.
\end{proof}

\subsection{Trivial saturation of the rectangular Kronecker semigroup}
In \cite{BHI:15} the semigroup of partitions with positive plethysm coefficent is studied.
In analogy we study here the semigroup of partitions with positive rectangular Kronecker coefficient.
For fixed $d$ the partitions $\la$ where $d$ divides $|\la|$ and where $g(\la,d \times (|\la|/d), d \times (|\la|/d))>0$
form a semigroup $S_d$ under addition.
Interpreting these partitions as integer vectors, the real cone $C_d$ spanned by them
coincides with the simplex $\{x \in \IR^{d^2} \mid x_1 \geq x_2 \geq \cdots \geq x_{d^2}\}$,
which is shown in \cite{bci:09}.
The \emph{group $G_d$ generated by} $S_d$ is defined as the set of all differences $\{\la-\mu \mid \la,\mu\in S_d\}$.
The \emph{saturation} of $S_d$ is defined as the intersection $G_d \cap C_d$.
The following corollary shows that the saturation of $S_d$ is as large as possible.
\begin{corollary}\label{cor:kronsaturation}
Let $d \geq 7$. The group $G_d$ contains all partitions $\la$ for which $d$ divides $|\la|$.
\end{corollary}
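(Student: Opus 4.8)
The plan is to prove the stronger statement that $G_d$ equals the full lattice
\[
L_d \;:=\; \bigl\{\, v \in \IZ^{d^2} \;\mid\; d \text{ divides } \textstyle\sum_i v_i \,\bigr\};
\]
this suffices, since every partition $\la$ with $d\mid|\la|$ that is an element of $\IZ^{d^2}$ (equivalently, has at most $d^2$ parts) lies in $L_d$. Write $e_1,\dots,e_{d^2}$ for the standard basis of $\IZ^{d^2}$. The inclusion $G_d\subseteq L_d$ is immediate: for $\mu,\nu\in S_d$ the number $d$ divides both $|\mu|$ and $|\nu|$, hence divides $\sum_i(\mu-\nu)_i=|\mu|-|\nu|$. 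For the reverse inclusion I would use that $L_d$ is generated, as a group, by the single row $(d)=d\,e_1$ together with the adjacent differences $e_j-e_{j+1}$ for $1\le j\le d^2-1$: the latter generate the sum-zero sublattice, and adjoining $d\,e_1$ supplies the one remaining direction. That $(d)\in S_d$ is the trivial identity $g\bigl((d),d\times1,d\times1\bigr)=1$, the tensor square of the sign representation of $\aS_d$ being the trivial representation. So everything reduces to putting the vectors $e_j-e_{j+1}$ into $G_d$.

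The key input for this is Proposition~\ref{prop:stretchedhooks}, applied with $m=d$ (valid since $d\ge7$). For every $i\ge2$ and every $k\in[0,d^2-1]$ it gives
\[
g\bigl(i(d^2-k,1^k),\,d\times(id),\,d\times(id)\bigr)>0 ,
\]
so the partition $Q^{(i)}_k:=i(d^2-k,1^k)$, whose entries are $i(d^2-k)$ followed by $k$ copies of $i$, lies in $S_d$: it is a partition of $id^2$ with $k+1\le d^2$ parts and $|Q^{(i)}_k|/d=id$. Subtracting consecutive ones, for $1\le k\le d^2-1$ we get $Q^{(i)}_k-Q^{(i)}_{k-1}=i\,(e_{k+1}-e_1)$, hence $i(e_1-e_j)\in G_d$ for all $j\in[2,d^2]$ and all $i\ge2$. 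Taking the difference of the $i=3$ and the $i=2$ vectors yields $e_1-e_j\in G_d$ for every $j\in[2,d^2]$, and then $e_j-e_{j+1}=(e_1-e_{j+1})-(e_1-e_j)\in G_d$ for every $j\in[1,d^2-1]$ (reading $e_1-e_1=0$). This produces all the named generators, so $G_d=L_d$; in particular $G_d$ contains every partition $\la$ with $d\mid|\la|$.

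I do not expect a real obstacle here. Once Proposition~\ref{prop:stretchedhooks} is available, the argument is only the elementary lattice bookkeeping above, plus the observation --- immediate from the semigroup property, since $S_d$ is closed under addition --- that $G_d=S_d-S_d$ is genuinely a group. The two points needing a sentence of care are: checking that each $Q^{(i)}_k$ is an honest partition of a multiple of $d$ with at most $d^2$ rows, so that it belongs to $S_d\subseteq\IZ^{d^2}$; and recording that $d\ge7$ is exactly the hypothesis that lets us invoke Proposition~\ref{prop:stretchedhooks} with $m=d$.
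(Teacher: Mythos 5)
Your proof is correct and follows essentially the same route as the paper: invoke Proposition~\ref{prop:stretchedhooks} with $m=d$ to put the partitions $i(d^2-k,1^k)$ in $S_d$ for $i=2,3$, subtract to obtain the vectors $e_1-e_j\in G_d$, and combine with the trivial element $(d)\in S_d$ to span the whole lattice of integer vectors with coordinate sum divisible by $d$. The only differences are cosmetic (you take consecutive differences before the $i=3$ minus $i=2$ subtraction, while the paper does it in the opposite order, and you make explicit the trivial character-theoretic reason that $(d)\in S_d$ and the restriction to partitions with at most $d^2$ parts).
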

\begin{proof}
By Prop.~\ref{prop:stretchedhooks} for all $0 \leq k < m^2$ we have $(k \times 3)(3m^2) \in S_d$ and $(k \times 2)(2m^2) \in S_d$.
Subtracting these we obtain $(m^2-k,k \times 1) \in G_d$.
For $1 \leq k < m^2$ we subtract $(m^2-k+1,(k-1) \times 1)$ from $(m^2-k,k \times 1)$ to obtain $v_{k+1}:=(-1,0,\ldots,0,1,0,\ldots,0) \in G_d$,
where the $1$ is at position $k+1$.
Given a partition $\la$ we define $\nu:=\sum_{k=2}^{m^2}\la_k v_k$ and obtain a vector that coincides with $\la$ in every entry but the first:
$\nu_1 = -|\bar\la|$.
We calculate $\la = \mu + j\cdot(d,0,\ldots,0)$ with $j=|\la|/d$.
Since $d$ divides $|\la|$ it follows that $j$ is an integer.
Since $(d,0,\ldots,0)\in S_d$ we have $\la = \mu+j\cdot(d,0,\ldots,0) \in G_d$.
\end{proof}

\section{Exact results for Kronecker coefficients}\label{sec:morepositivity}

Here we provide a complete classification of triples $\lambda, a, b$ with $\lambda$ -- hook or two-column partition, for which the Kronecker coefficient $g(\lambda, a \times b, a \times b)$ is positive and in the course of the proof give certain stronger quantitative relationships between these coefficients. 

\begin{theorem}[Hook positivity]\label{thm:hook_positive}
Assume that $n \geq d$. 
Let $d \geq 7 $. We then have that $g((nd-k,1^k), d \times n, d\times n) >0$  $k \in [0,d^2-1] \setminus \{1,2,4,6, d^2-7,d^2-5, d^2-3, d^1-2\}$ and is 0 for all other values of $k$.

For $d \leq 6$ we have that $g((nd-k,1^k), d \times n, d\times n) =0$ if $k >d^2-1$ or in the following cases:

\begin{tabular}{p{1in}p{4in}}
$d=$ & Values of $k \leq d^2-1$, for which $g_k(d,n)=0$: \\
6 & $\{1,2,4,6,13,22,29, 31, 33, 34\}$\\
5 & $\{1,2,4,6,11,13, 18, 20, 22, 23\}$\\
4 & $\{ 1,2,4,6, 9, 11, 13, 14\}$\\
3 & $\{1,2,4,6,7\}$\\
2& $\{1,2\}$ \\
\end{tabular}

Moreover, we have that $g((nd-k,1^k), d \times n, d\times n) > g((nd-k+2,1^{k-2}), d \times n, d\times n)$ for $k \leq d^2/2$ and the coefficients form a symmetric sequence in $k=0,\ldots,d^2-1$.
\end{theorem}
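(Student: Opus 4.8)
The plan is to prove the hook positivity theorem (Theorem~\ref{thm:hook_positive}) by combining a symmetry statement, a monotonicity statement, and explicit small-case computations, following the general philosophy used throughout Section~\ref{sec:secondthm}: reduce everything to the square case $a=b=d$ via the semigroup property, and then bootstrap from $d=7$ using Proposition~\ref{prop:extensions}. First I would establish the symmetry of the sequence $k \mapsto g((nd-k,1^k), d\times n, d\times n)$ for $k=0,\dots,d^2-1$. This should follow from the transposition property: transposing $(nd-k,1^k)$ and leaving the rectangles $d\times n$ fixed (after also transposing them to $n\times d$ and using that a square-type argument identifies $g(\lambda^t, n\times d, d\times n)$ with a relabeling), one sees that the hook with a first column of length $k+1$ gets sent to a hook whose column has the complementary length $d^2-k$ inside the ambient $d\times n$ shape; more carefully, the involution on hooks fitting in a $d\times n$ box that swaps arm and leg lengths, combined with the transposition symmetry $g(\lambda,\mu,\nu)=g(\lambda^t,\mu^t,\nu)$, yields $g((nd-k,1^k),\cdots)=g((nd-(d^2-1-k),1^{d^2-1-k}),\cdots)$, i.e.\ symmetry about $k=(d^2-1)/2$. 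I would make this precise by writing the hook $(nd-k,1^k)$ explicitly and its transpose $(k+1,1^{nd-k-1})$ and checking which genuine hook it equals after the rectangle-transposition.

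Next I would prove the strict monotonicity $g((nd-k,1^k),d\times n,d\times n) > g((nd-k+2,1^{k-2}),d\times n,d\times n)$ for $k\le d^2/2$. The natural mechanism is exactly the semigroup-style construction used in Corollary~\ref{cor:hooks1} and Proposition~\ref{prop:extensions}: adding the positive triple $((2),1\times 2,1\times 2)$ or the single-box-difference argument shows that from a positive or nonnegative-multiplicity witness for $(nd-k+2,1^{k-2})$ one produces a strictly larger family of highest weight vectors for $(nd-k,1^k)$. Concretely, interpreting $g(\lambda,\mu,\nu)$ as the dimension of a highest weight vector space in $\IC[V\otimes V\otimes V]$, one exhibits an injective-but-not-surjective map by multiplying with a fixed highest weight vector realizing the difference of the two hooks, and the non-surjectivity (strictness) comes from producing one extra independent vector — this is where I expect the real work to be, since strict inequalities of Kronecker coefficients are much more delicate than mere positivity. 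The cleanest route is probably to verify the strict inequality directly for $d=7$ (and $d\le 6$) by computation, and then propagate it to all $d\ge 7$ and all $n\ge d$ using the semigroup property: if $g(\lambda, c\times c,\cdots) - g(\lambda', c\times c,\cdots)\ge 1$ then adding the same positive row/column triples to both sides preserves the gap.

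With symmetry and monotonicity in hand, the vanishing/positivity dichotomy for $d\ge 7$ is essentially Corollary~\ref{cor:hooks1} (which already gives $g((hw-j,1^j),h\times w,h\times w)>0$ for $j\in[0,h^2-1]$ outside the eight exceptional values) together with the symmetry to handle the mirrored exceptional set $\{d^2-7,d^2-5,d^2-3,d^2-2\}$; for those eight values the monotonicity forces the coefficient to be smaller than its neighbor, and a finite check at $d=7$ combined with the semigroup stability shows they are exactly $0$ — more precisely, I would argue that $g$ is a sum over $d$ of ``square'' contributions plus a hook remainder, so a zero at $d=7$ persists only if the small hook genuinely contributes nothing, which one reads off from known hook Kronecker formulas (e.g.\ Rosas's formula for $g(\lambda,\mu,\nu)$ with $\mu,\nu$ hooks, or the Blasiak–Liu combinatorial rule). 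For $d\le 6$ the table is a finite computation: for each $d\in\{2,\dots,6\}$ and the relevant range of $n$, one either computes $g((nd-k,1^k),d\times n,d\times n)$ directly for small $n$ and invokes Manivel's stability (Theorem~\ref{thm:manivel_rect}) to conclude the value has stabilized once $n\ge$ (size of the hook-remainder partition), reducing infinitely many $n$ to finitely many checks; then symmetry fills in $k>d^2/2$ and positivity of the non-listed $k$ follows from monotonicity upward from the largest listed vanishing $k$ below $d^2/2$. The main obstacle, as noted, is the strict monotonicity — proving it in closed form rather than by appeal to computation plus semigroup stability would require a genuinely new combinatorial injection on the sets indexing these hook Kronecker coefficients.
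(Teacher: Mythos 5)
Your proposal takes a fundamentally different route from the paper, and it has a genuine gap that you yourself flag. The paper does not bootstrap via the semigroup property and Proposition~\ref{prop:extensions} at all. Instead it derives an exact formula: using Littlewood's identity for Kronecker products of a product of Schur functions, together with the fact that the Littlewood--Richardson coefficient $c^{d\times n}_{\theta\tau}$ is $1$ exactly when $\theta$ and $\tau$ are complementary in the rectangle, the paper obtains the recurrence $g_k(d,n)+g_{k-1}(d,n)=\#\{\theta\vdash k : \theta=\theta^t,\ \theta\subset d\times n\}$. The bijection between self-conjugate partitions and partitions into distinct odd parts then yields the generating function $\sum_k g_k(d,n)q^k=\prod_{i=2}^d(1+q^{2i-1})$, which is manifestly independent of $n$ (for $n\ge d$), palindromic of degree $d^2-1$, and whose strict unimodality away from the tails follows from the Pak--Panova extension of Almkvist's theorem. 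Symmetry, the strict inequality $g_k>g_{k-2}$ for $k\le d^2/2$, and the exact zero sets all drop out of this one formula.

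Your approach has two concrete problems. First, as you acknowledge, the strict inequality $g_k>g_{k-2}$ cannot come from the semigroup property: that property only propagates positivity, never a strict gap between two coefficients with the same pair of rectangles, and there is no finite-check-plus-stability argument that preserves a gap as $n$ varies without already knowing $n$-independence. Second, your transposition argument for the symmetry $g_k(d,n)=g_{d^2-1-k}(d,n)$ is not sound for $n>d$: the transposition identity gives $g(\lambda,d\times n,d\times n)=g(\lambda^t,d\times n,n\times d)$, and $d\times n$ and $n\times d$ are genuinely different partitions, so the two rectangles do not match up after transposing the hook. It works when $n=d$, but extending it to $n>d$ requires knowing that $g_k(d,n)$ is independent of $n$ in the range $k\le d^2-1$, which is exactly the nontrivial content of the paper's generating-function identity (Manivel's stability only guarantees independence once $n\ge k$, which fails for $k$ near $d^2$). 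So both the strict monotonicity and the symmetry in your sketch secretly depend on the closed formula you have not derived.
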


\begin{remark}
It is immediate to characterize the triples for which the Kronecker coefficient is 1. 
\end{remark}

\begin{corollary}
Fix $d>6$ and let $\rho$ be a partition with $m_i$ columns of length $i$. Then $g(\rho(nd), d \times n, d\times n)>0$ if $m_i \neq 1$ for $i=1,2,4,6$ and $m_i = 0 $ for $i=d^2-7. d^2-5, d^2 - 3, d^2-2$.
\end{corollary}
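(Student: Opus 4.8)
The plan is to reduce the statement to Theorem~\ref{thm:hook_positive} (Hook positivity) and Proposition~\ref{prop:stretchedhooks} via the semigroup property, exactly as in the proof of Corollary~\ref{cor:even} and Corollary~\ref{cor:kronsaturation}. First I would cut the partition $\rho$ columnwise: writing $c_i := m_i$ for the number of columns of length $i$ in $\rho$, we have the decomposition $\rho = \sum_i m_i (i \times 1)$ as a sum of single columns, so that $\rho(nd) = $ (after adding an appropriate first row) a sum of hook partitions. The key point is that each individual column contribution $(i \times 1)$, once padded to a hook $(nd - i, 1^i)$, has a positive rectangular Kronecker coefficient by Theorem~\ref{thm:hook_positive} \emph{provided} $i \notin \{1,2,4,6\}$ and $i \notin \{d^2-7, d^2-5, d^2-3, d^2-2\}$; the hypotheses $m_i = 0$ for $i \in \{d^2-7,d^2-5,d^2-3,d^2-2\}$ and $m_i \neq 1$ for $i \in \{1,2,4,6\}$ are designed precisely to route around these bad column lengths.

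The main steps, in order: (1) Handle all "good" column lengths $i \notin \{1,2,4,6\} \cup \{d^2-7,d^2-5,d^2-3,d^2-2\}$ by invoking Theorem~\ref{thm:hook_positive} once per column and then applying the semigroup property $m_i$ times to combine the $m_i$ copies of the hook $(nd-i,1^i)$; the hypothesis $m_i = 0$ for the four large bad values means no columns of those lengths appear at all, so nothing needs to be done there. (2) Handle the four small bad lengths $i \in \{1,2,4,6\}$: since $m_i \neq 1$, either $m_i = 0$ (nothing to do) or $m_i \geq 2$, in which case I group the $m_i$ columns of length $i$ into at least one block of two, i.e. a partition $(i \times 2)$, plus possibly leftover pairs; by Proposition~\ref{prop:stretchedhooks} (with $i=2$ there being our "double" index, not to be confused with the column length — more precisely, applying $\textup{Prop.}~\ref{prop:stretchedhooks}$ to $i^{k}$-type partitions) each block $(i \times 2)$ padded to $((nd - 2i)/\,\cdot\,, (i\times 2))$ has positive rectangular Kronecker coefficient, and if $m_i$ is odd with $m_i \geq 3$ we use $m_i = 2 + 2 + \cdots$ or $m_i = 2\cdot\lfloor m_i/2\rfloor + $ a triple-grouping $(i\times 3)$, again covered by Proposition~\ref{prop:stretchedhooks}. (3) Combine all these positive triples across all column lengths $i$ using the semigroup property, and finally add the single-row padding triple $((nd - |\bar\rho|), \ldots)$ — or more carefully, since all pieces have total box count divisible by $d$, add the positive trivial triple on a long first row as in the proof of Theorem~\ref{thm:together} — to land on $g(\rho(nd), d\times n, d\times n) > 0$.

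The hard part will be the bookkeeping in step (2): making sure that when $m_i \geq 2$ for a bad length $i \in \{1,2,4,6\}$, one can always partition the $m_i$ equal columns into blocks each of which is a $(i\times 2)$ or $(i\times 3)$ rectangle — this is just the elementary fact that every integer $\geq 2$ is a nonnegative combination of $2$ and $3$, but one must check that the padded versions of these rectangles genuinely fall under the hypotheses of Proposition~\ref{prop:stretchedhooks} (which requires $m \geq 7$, matching our $d > 6$) and that the first-row lengths add up correctly so that the semigroup sums really do telescope to $\rho(nd)$ rather than some other partition with the same $\bar\rho$. A secondary subtlety is ensuring $n$ is large enough (equivalently $nd$ large enough) that all the padded hooks and padded rectangles are genuine partitions with a long enough first row; this is automatic once $|\bar\rho| \le d(n-d)$, but should be stated, and if the corollary is meant to hold for \emph{all} $n$ with $n \ge d$ one may need the semigroup trick of adding $((d(n'-d)), \ldots)$ with $n' = |\rho(nd)|/d$ to adjust the first row at the end, exactly as in the proof of Corollary~\ref{cor:kronsaturation}.
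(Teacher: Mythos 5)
Your plan is essentially the paper's proof: cut $\rho$ column-wise, invoke Theorem~\ref{thm:hook_positive} for single columns of good lengths, handle the bad small lengths $i\in\{1,2,4,6\}$ with $m_i\ge 2$ by writing $m_i$ as a nonnegative combination of $2$'s and $3$'s and using positivity of the double/triple-column building blocks, then combine everything with the semigroup property. The only cosmetic difference is that the paper verifies the triple-column base case by a direct $6\times 6$ computation instead of appealing to Proposition~\ref{prop:stretchedhooks} (which already covers $i=3$ for $m\ge 7$), so your variant goes through equally well under the hypothesis $d>6$; the bookkeeping issues you raise about first rows summing correctly are real but are exactly what the semigroup property handles once each summand is given the right padded first row.
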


\begin{proof}
Direct computation shows that $g(((6^2-3i),i^3), 6\times 6, 6 \times 6) >0$ for $i=1,2,4,6$, so by the semigroup property we have $g(nd -3i, i^3), d \times n, d\times n)>0$ for all $d,n \geq 6$. Since every $m_i \geq 2$ is either even, or $3+2a_i$, we have that $(i^{m_i})$ is an even partition or is the sum of $(i^3) + (i^{2a_i})$. By the semigroup property for Kronecker coefficients then we must have $g( (nd - im_i, i^{m_i}), d \times n, d \times n)>0$ for all $m_i \geq 2$ when $i=1,2,4,6$. By Theorem~\ref{thm:hook_positive} for any $m_i$ for the remaining values of $i \geq d^2-1$.

Since $\rho = \sum_i (i^{m_i})$, the statement follows by the Kronecker semigroup property.\end{proof}

Here we give the proof of Theorem~\ref{thm:hook_positive}. 

In order to prove this theorem, we will derive a simple formula for these Kronecker coefficients, following the approaches set in \cite{bla:12,liu:14,pp:14}. For brevity we set $g_k(d,n) = g((nd-k,1^k), d \times n, d\times n) $. 

\begin{proposition}
 We have that the Kronecker coefficients $g((nd-k,1^k), d \times n, d\times n) $ are equal to the number of partitions of $k$ into distinct parts from $\{3,5,\ldots, 2d-1\}$, where without loss of generality by the symmetry of the Kronecker coefficients we assume $d \leq n$. In other words, we have the following generating function identity:
 $$\sum_{k=0}^{nd-1} g_k(d,n) q^k = \prod_{i=2}^{d} (1+q^{2i-1}).$$
\end{proposition}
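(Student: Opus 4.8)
The plan is to compute the Kronecker coefficient $g((nd-k,1^k), d\times n, d\times n)$ by evaluating the symmetric group characters directly. Recall that for any three partitions of $N$,
\[
g(\lambda,\mu,\nu) = \frac{1}{N!}\sum_{\sigma\in\aS_N}\chi^\lambda(\sigma)\chi^\mu(\sigma)\chi^\nu(\sigma).
\]
The key simplification comes from the fact that two of the three partitions are the rectangle $d\times n$. First I would use the well-known identity for characters of hooks: $\chi^{(N-k,1^k)}$ evaluated on a permutation $\sigma$ can be read off from the generating function $\prod_{j}(1-(-t)^{?})$ type expansion; more precisely, $\sum_{k=0}^{N-1}\chi^{(N-k,1^k)}(\sigma)\,t^k$ has a clean product form indexed by the cycle type of $\sigma$. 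Combining this with the character $\chi^{d\times n}$ squared, summed over the group, the sum collapses: the only cycle types that contribute are those on which $\chi^{d\times n}$ is nonzero, and the relevant combinatorics reduces to counting objects weighted by $t^k$.

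The cleanest route, and the one I would actually pursue, is to bypass the brute-force character sum and instead use the interpretation of $g(\lambda,\mu,\nu)$ as the multiplicity of $[\lambda]$ in $[\mu]\otimes[\nu]$, together with the fact that tensoring with a hook has a combinatorial description. Concretely, $\chi^{(N-k,1^k)} = \sum_{j}(-1)^j e_{?}\cdots$ — using the determinantal (Jacobi–Trudi / Giambelli) expression for the hook Schur function $s_{(N-k,1^k)} = \sum_{i+j=?}(-1)^j h_{?}e_{?}$, or rather the identity $\sum_k s_{(N-k,1^k)}\,t^k$ related to $h_\bullet[\,\cdot\,]$ and $e_\bullet[\,\cdot\,]$. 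Passing to symmetric functions: the Kronecker product corresponds to the internal product $*$ of symmetric functions, and
\[
\sum_{k} g_k(d,n)\,t^k = \left\langle\, \Big(\sum_k s_{(nd-k,1^k)}\,t^k\Big),\ s_{d\times n} * s_{d\times n}\,\right\rangle.
\]
The generating function $\sum_k s_{(N-k,1^k)} t^k$ equals (up to reindexing) $\sum_{a+b=N}t^{b-1}h_a e_b$ wait — the precise statement is $\sum_{k\ge 0}s_{(N-k,1^k)}t^k = \sum_{b\ge 0}t^b\, e_b\, h_{N-b}\big/(\text{correction})$; in any case the internal product of $h_a e_b$ with $s_{d\times n}*s_{d\times n}$ is computable by the classical rules (internal product with $h_a$ is trivial on the level of plethysm into one-part, internal product with $e_b$ introduces a sign twist $\omega$). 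This is where the value $d$ enters: the rectangle $d\times n$ under $\omega$ becomes $n\times d$, and the pairing forces a self-conjugacy-type condition that produces exactly the exponents $\{3,5,\ldots,2d-1\}$ as the available "building blocks." I expect the arithmetic progression of odd numbers to emerge from the difference of arm-lengths and leg-lengths of the cells on the diagonal of the rectangle (there are $\min(d,n)=d$ of them, with hook-type contributions $2i-1$ for $i=1,\ldots,d$, and the part $1$ never appears because of the padding/first-row).

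\textbf{Main obstacle.} The hard part will be carrying out the internal-product computation cleanly — the internal (Kronecker) product of two rectangular Schur functions does not have a simple closed form in general, so the argument must exploit that we only pair it against hooks, where $s_\lambda * s_{(N-k,1^k)}$ is known (Littlewood's rule: it equals an alternating sum of $s_\mu$ over $\mu$ obtained by removing a vertical strip and adding a horizontal strip, or dually). I would instead reorganize: use $g((nd-k,1^k),d\times n,d\times n) = \langle \chi^{(nd-k,1^k)}\downarrow, \chi^{d\times n}\otimes\chi^{d\times n}\rangle$ and apply the branching/restriction identity that tensoring $\chi^{d\times n}$ with a hook character $\chi^{(nd-k,1^k)}$ equals a signed sum over border-strip-type modifications of the rectangle; since $d\times n$ is a rectangle, the only surviving terms are those landing back on $d\times n$ after one add-a-rim-hook-and-remove step, and tracking which $k$ survive gives precisely partitions of $k$ into distinct odd parts from $\{3,5,\ldots,2d-1\}$. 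The bookkeeping of signs and the verification that all cancellations leave exactly the claimed count — and that the generating function is literally $\prod_{i=2}^{d}(1+q^{2i-1})$ with no part equal to $1$ — is the delicate step; everything else (the symmetry in $k$, the formula $R_\rho$-type bounds) follows formally from the product form.
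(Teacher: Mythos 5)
Your plan is headed in roughly the same direction as the paper's proof (symmetric functions, Littlewood's identity, the special structure of rectangular Littlewood--Richardson coefficients), but it is missing the single most important technical idea, and without it the computation does not close up the way you predict.

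The gap is this: the quantity $g_k(d,n)$ alone does not have a clean combinatorial description in terms of self-conjugate partitions. What Littlewood's identity actually gives, after substituting $\alpha=(1^k)$, $\beta=(nd-k)$ and using the Pieri-type fact $s_{1^k}s_{nd-k}=s_{(nd-k,1^k)}+s_{(nd-k+1,1^{k-1})}$, is a formula for the \emph{sum} of two adjacent hook Kronecker coefficients:
$$g_k(d,n)+g_{k-1}(d,n)=\sum_{\theta\vdash k,\ \tau\vdash nd-k}c^{d\times n}_{\theta\tau}\,c^{d\times n}_{\theta'\tau}.$$
Because the Littlewood--Richardson coefficient $c^{d\times n}_{\theta\tau}$ is $1$ exactly when $\tau$ is the complement of $\theta$ in the $d\times n$ rectangle (and $0$ otherwise), the right-hand side collapses to the number of \emph{self-conjugate} $\theta\vdash k$ fitting in the rectangle, and only then does the classical bijection to partitions into distinct odd parts $\{1,3,\ldots,2d-1\}$ enter. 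The part $1$ is killed at the very end by dividing the generating function $\prod_{i=1}^d(1+q^{2i-1})$ by $(1+q)$, which implements the transition from the $\{g_k+g_{k-1}\}$ sequence back to $\{g_k\}$. Your plan never isolates this $g_k+g_{k-1}$ recursion, and your heuristic "the only surviving terms are those landing back on $d\times n$ after one add-a-rim-hook-and-remove step, and tracking which $k$ survive gives distinct odd parts from $\{3,\ldots,2d-1\}$" does not correspond to a well-defined operation: the rim-hook cancellation picture does not directly produce an unsigned count, and the elimination of the part $1$ is a subtraction of generating functions, not a parity-of-rim-hook argument.

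Secondarily, you wave at a character-sum approach and at an expansion $\sum_k s_{(N-k,1^k)}t^k$ in terms of $h_a e_b$ without pinning either down; those are not what makes the proof go through. The load-bearing ingredients are (i) Littlewood's identity specialized to a hook split as $1^k\cdot(N-k)$, (ii) the complementary-in-a-rectangle characterization of $c^{d\times n}_{\theta\tau}$, and (iii) the generating-function division by $(1+q)$. You gestured at (ii) implicitly but did not use it as a rigidity statement; you are missing (i) in the form that produces the two-term recursion; and you have no analogue of (iii). As written, the proposal would stall when you try to extract $g_k$ itself rather than the sum $g_k+g_{k-1}$.
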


\begin{proof}
Let $s_{\lambda}$ denote the Schur function indexed by a partition $\lambda$ and let $*$ denote the Kronecker product on the ring of symmetric functions, i.e. given by
$$s_{\lambda} * s_{\mu} = \sum_{\nu} g(\lambda,\mu,\nu) s_{\nu}.$$
For the sake of self-containment we repeat some calculations appearing in \cite{bla:12,liu:14,pp:14}. 

We invoke Littlewood's identity, stating that
$$s_{\lambda}*(s_{\alpha} s_{\beta}) = \sum_{\theta \vdash |\alpha|, \tau \vdash |\beta|} c^{\lambda}_{\theta,\tau} (s_\theta * s_\alpha) (s_\tau *s_\beta).$$

In the case when $\alpha = (1^k)$ and $\beta=(nd-k)$ we have that $s_\theta *s_{1^k} = s_{\theta'}$ and $s_\tau * s_{nd-k} = s_\tau$, where $\theta'$ is the transposed (conjugate) partition of $\theta$. Observe that $s_{1^k} s_{nd-k} = s_{(nd-k,1^k)} + s_{(nd-k+1,1^{k-1})}$. Rewriting the above identity in this case leads to

$$s_{\lambda}*s_{(nd-k,1^k)} +s_{\lambda}*s_{(nd-k+1,1^{k-1} )} = \sum_{\theta \vdash k, \tau \vdash nd-k} c^{\lambda}_{\theta \tau} s_{\theta'} s_{\tau}.$$

Take inner product with $s_{\mu}$ on both sides. Observe that the left-hand side gives two Kronecker coefficients and on the right side we have $\langle s_\mu, s_{\theta'} s_{\tau} \rangle = c^{\mu}_{\theta'\tau}$ by the Littlewood-Richardson rule, so

\begin{equation}\label{eq:Lit_hook}
 g(\lambda,\mu, (nd-k,1^k) ) + g(\lambda,\mu, (nd-k+1,1^{k-1} ) ) =  \sum_{\theta \vdash k, \tau \vdash nd-k} c^{\lambda}_{\theta \tau} c^{\mu}_{\theta'\tau}.
 \end{equation}

Note that when $k=0$ we have that $g(\lambda,\mu, (nd) ) =1$ if $\lambda=\mu$ and $0$ otherwise, and the above identity holds assuming that the term with $(nd - k +1, 1^{k-1})$ is 0 when $k<1$. 

Let $\lambda=\mu=(d\times n)$. As it is not hard to see by the Littlewood-Richardson rule in this case (see e.g. \cite{pp:14}), we have that 
$$c^{ (d\times n)}_{\delta \gamma} = \begin{cases} 1 & \text{ if }\delta_i + \gamma_{d+1-i} = n \text{ for all $i=1,\ldots,d$} \\ 0 & \text{otherwise} \end{cases}.$$
In other words, the rectangular  Littlewood-Richardson coefficient are equal to 1 only when the partitions $\delta$ and $\gamma$ complement each other inside the rectangle. This is also easy to see from the  fact that $c^{\lambda}_{\delta \gamma} = \langle s_{\lambda/\delta}, s_\gamma\rangle$ and in the case of $\lambda=d \times n$, the skew shape, rotated $180^\circ$ is a straight shape, so the corresponding Schur function should be the same as $s_\delta$ to give nonzero inner product. 

Applying these observation to equation~\eqref{eq:Lit_hook} when $\lambda=\mu=d \times n$, we see that the summands on the right-hand side will be nonzero if and only if $\theta$ and $\theta'$ are both the complement of $\tau$ in the $d \times n$ rectangle, so $\theta=\theta'$, and $\theta \subset d \times n$. Since in this case the product of the Littelwood-Richardson coefficients is just 1, the right-hand side is the number of such partitions $\theta$, i.e. 

\begin{equation}\label{eq:hook_recurrence}
g_k(d,n) + g_{k-1}(d,n) = \# \{\theta \, | \; \theta \vdash k, \theta=\theta', \theta \subset d \times n\}
\end{equation}

where $g_k(d,n) = g( (nd -k, 1^k), d \times n, d \times n)$ and we set $g_{-1}(d,n) = 0$, and $g_{k}(d,n)=0$ for all $ k \geq nd$ so the above identity holds for all $k$. 

It is a classical result in combinatorics that self-conjugate partitions are in direct bijection with partitions into distinct odd parts, via $\theta \to ( 2\theta_1-1,2(\theta_2 -1) -1, \ldots)$. The condition $\theta \subset d \times n$ in this case is equivalent to $\theta_1 \leq \min(d,n)=d$. Thus we can rewrite identity~\eqref{eq:hook_recurrence} as the following generating function identity
\begin{equation}\label{eq:gen_fun}
\sum_{k=0}^{\infty} (g_k(d,n) + g_{k-1}(d,n)) q^k = \prod_{i=1}^d (1+q^{2i-1}). 
\end{equation}

Let $G(q) = \sum_{k=0}^{\infty} g_k(d,n)q^k$, then after an index shift the identity implies
$$G(q) +qG(q) = \prod_{i=1}^d (1+q^{2i-1}).$$
Dividing both sides by $(1+q)$ we obtain the generating function for the hook Kronecker coefficients as desired. 

\end{proof}

We invoke the following result from \cite{pp:14} which extends a result in~\cite{alm:85}.

\begin{proposition} Let $b_i := g_i(d,n) + g_{i-1}(d,n)$, i.e. 
$$
 \prod_{i=1}^d  (1+q^{2 i-1})  =  \sum_{j=0}^{d^2} b_j  q^j .
$$
Then, for all  $d\ge 27$, the sequence $(b_{26},\ldots,b_{d^2-26})$ is symmetric and strictly unimodal.
\end{proposition}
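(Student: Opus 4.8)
The plan is to reduce the claim to positivity of consecutive differences, dispose of symmetry by palindromy, and then estimate those differences by a saddle-point (contour integral) argument, in the spirit of Almkvist's proof of unimodality of $\prod_i(1+q^i)$.

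\textbf{Symmetry and reduction.} Write $P(q):=\prod_{i=1}^d(1+q^{2i-1})$, a polynomial of degree $1+3+\cdots+(2d-1)=d^2$ with strictly positive coefficients $b_0,\dots,b_{d^2}$ (indeed $b_j$ counts self-conjugate partitions of $j$ inside the $d\times d$ square). Since $q^{2i-1}(1+q^{-(2i-1)})=1+q^{2i-1}$ we get $q^{d^2}P(1/q)=P(q)$, i.e. $b_j=b_{d^2-j}$. Set $\Delta_j:=b_j-b_{j-1}=[q^j]\big((1-q)P(q)\big)$; palindromy gives $\Delta_j=-\Delta_{d^2+1-j}$. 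Hence it suffices to prove $\Delta_j>0$ for all integers $j$ with $27\le j\le d^2/2$: the descending half of the profile follows by reflection, and the (at most double) plateau at the exact centre when $d$ is odd is precisely the equality $\Delta_{(d^2+1)/2}=0$ forced by symmetry. Note also $(1-q)P(q)=(1-q^2)\prod_{i=2}^d(1+q^{2i-1})$.

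\textbf{Saddle-point estimate.} For any $r\in(0,1]$, Cauchy's formula gives
\[
\Delta_j \;=\; \frac{1}{2\pi}\int_{-\pi}^{\pi}\operatorname{Re}\!\Big[(1-re^{i\theta})\,P(re^{i\theta})\,r^{-j}e^{-ij\theta}\Big]\,d\theta .
\]
Choose $r=r_j$ by the saddle equation $r\frac{d}{dr}\log P(r)=j$. Since $r\frac{d}{dr}\log P(r)$ runs from $0$ at $r=0$ to $d^2$ at $r=\infty$ and equals $\tfrac12\sum(2i-1)=d^2/2$ at $r=1$, for $27\le j\le d^2/2$ the saddle $r_j$ lies in $(0,1]$ and, uniformly in $d\ge 27$, stays in a fixed compact subinterval bounded away from $0$ (for $j<2d-1$ the low coefficients of $P$ agree with those of $\prod_{i\ge1}(1+q^{2i-1})$, whose saddle for coefficient $27$ is $\approx 0.88$). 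Near $r=1$ one uses $\log P(e^{-t})=\sum_{i\ge1}\log(1+e^{-(2i-1)t})=\frac{\pi^2}{24\,t}+O(t)$ (Euler--Maclaurin against $\int_0^\infty\log(1+e^{-s})\,ds=\pi^2/12$). On a short central arc $|\theta|\le\theta_0$ we have $P(r_je^{i\theta})e^{-ij\theta}\approx P(r_j)e^{-\sigma^2\theta^2/2}$ with $\sigma^2:=r\frac{d}{dr}\big(r\frac{d}{dr}\log P(r)\big)>0$, while $1-r_je^{i\theta}=(1-r_j)-ir_j\theta+\tfrac12 r_j\theta^2+\cdots$; the odd part dies on integration, leaving a strictly positive main term $\asymp (1-r_j)P(r_j)r_j^{-j}/\sigma$, reinforced by the strictly positive $\theta^2$-term $\asymp P(r_j)r_j^{-j}/\sigma^3$, which is the dominant contribution precisely when $j$ is within $O(1)$ of the centre $d^2/2$, where $1-r_j\to0$. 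The tail $|\theta|\in(\theta_0,\pi]$ is controlled by $|P(re^{i\theta})|\le P(r)\exp\!\big(-c\sum_i r^{2i-1}(1-\cos((2i-1)\theta))\big)$ and is exponentially smaller. Tracking the error terms with explicit constants shows the main term dominates throughout $27\le j\le d^2/2$, which is exactly where the value $26$ (possible irregular coefficients at each end) comes from; any small $d$ not reached by the asymptotic bounds is settled by a direct finite computation.

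\textbf{Main obstacle.} The real difficulty is purely quantitative: making the saddle-point error control sharp and \emph{uniform in $j$ across the whole central block} and in $d$, so that the positive main term beats the corrections both at the shoulders $j\approx 27$ (where $b_j$ is still small and the truncation of the product is irrelevant) and near the peak $j\approx d^2/2$ (where $\Delta_j$ is tiny relative to $b_j$, so a second-order saddle analysis is genuinely needed), and so that the cut-off emerges as $26$ rather than a larger constant. One could instead start from the identity $P(q)=\sum_{k=0}^{d}q^{k^2}\begin{bmatrix}d\\k\end{bmatrix}_{q^2}$ (the $q$-binomial theorem with base $q^2$) and try to combine the known strict unimodality of the Gaussian binomials, each term being symmetric about $kd$ with the $k$-th and $(d-k)$-th terms mirror images about $d^2/2$; but since consecutive coefficients $b_{j-1},b_j$ come from opposite-parity terms $\big(q^{k^2}\begin{bmatrix}d\\k\end{bmatrix}_{q^2}$ is supported in residue class $k\bmod 2\big)$, the difference $\Delta_j$ never telescopes inside a single term, and one is pushed back into essentially the same quantitative estimates.
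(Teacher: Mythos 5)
The paper does not prove this proposition: it cites it from \cite{pp:14} (Pak--Panova, ``Unimodality via {K}ronecker products''), which extends \cite{alm:85}. The Pak--Panova route is representation-theoretic, not analytic: roughly, differences of coefficients of such products are identified with (sums of) Kronecker coefficients and then shown to be positive using the semigroup property and related combinatorial tools. Your sketch takes a genuinely different path --- an analytic saddle-point estimate in the spirit of Almkvist's original work --- and the high-level outline (palindromy, reduction to $\Delta_j = b_j - b_{j-1} > 0$ on $27 \le j \le \lfloor d^2/2 \rfloor$, Cauchy integral at a saddle $r_j$, Gaussian main term on a central arc, exponential decay on the tail) is sound as a \emph{plan}.

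However, there is a genuine gap: you never carry out the quantitative part, and you yourself flag it as ``the real difficulty.'' The assertions ``tracking the error terms with explicit constants shows the main term dominates throughout $27 \le j \le d^2/2$'' and that ``the cut-off emerges as $26$'' are exactly the content of the theorem, not a routine verification. Three specific issues are left open. (i) You need uniformity in both $j$ and $d$ simultaneously across the full central block; near $j \approx d^2/2$ the quantity $\Delta_j$ is tiny compared to $b_j$ (of order $b_j/\sigma^2$), so the error terms must be beaten by a second-order main term and you give no bound that achieves this. (ii) You are estimating coefficients of $(1-q)P(q)$ but place the saddle for $P$ alone; the factor $(1-q)$ vanishes at $q=1$, precisely where the saddle migrates as $j \to d^2/2$, and separating the even and odd parts of $(1 - r_j e^{i\theta})$ as you suggest requires controlling the cross terms, which is not done. (iii) The exact constant $26$ is a combinatorial feature of the first dozen coefficients of $\prod_{i\ge 1}(1+q^{2i-1})$ (the zeros at $j=1,2,4,6$), and an asymptotic method on its own would at best yield ``for $j \ge C$'' with an unspecified $C$; making $C = 27$ requires an explicit finite computation that you gesture at but do not supply, nor do you supply the finite check for small $d$ (the statement requires $d \ge 27$, and the pattern of exceptional indices is different for $d \le 6$, as the paper's Theorem on hook positivity records). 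As written, this is a program for a proof rather than a proof.
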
 
Here strict unimodality means $b_i > b_{i-1}$ for all $26 \leq i \leq \frac{d^2}{2}$ and $b_{i} > b_{i+1}$ for $i >  \frac{d^2}{2}$. 

\begin{proof}[Proof of Theorem~\ref{thm:hook_positive}]
Let $d \geq 27$. 
Since  $b_i = g_i(d,n) + g_{i-1}(d,n)$, we have that $b_i > b_{i-1}$ is equivalent to $g_i(d,n) > g_{i-2}(d,n)$ for $i>26$. 
No term $(1+q^{2i-1})$ for $i>13$ can contribute to the coefficient of $q^k$ for $k\leq 26$, so we have that the  terms in $G(q)$ of order $\leq 26$ are equal to the corresponding terms in 
\begin{align*} \prod_{i=2}^{13} (1+q^{2i-1}) = O(q^{27})+  6*q^{26}+6*q^{25}+6*q^{24}+5*q^{23}+4*q^{22}+4*q^{21}+4*q^{20}+3*q^{19}+3*q^{18} \\
+2*q^{17}+3*q^{16}+2*q^{15}+2*q^{14}+q^{13}+2*q^{12}+q^{11}+q^{10}+q^{9}+q^{8}+q^{7}+q^{5}+q^{3}+1
\end{align*}

So we see that $g_k(d,n) >0$ for $k \in [0,26] \setminus \{1,2,4,6\}$. By the inequality $g_k(d,n) > g_{k-2}(d,n)$ for the  values of $k \geq 26$, and the positivity for $k=24,25$ we obtain the positivity of all other $g_k(d,n)$'s. 

Now let $d \leq 26$. In this case the generating function $G(q)$ can be computed explicitly summarized in the following table:

\begin{tabular}{p{1in}p{4in}}
$d=$ & Values of $k \leq d^2-1$, for which $g_k(d,n)=0$: \\
$7,\ldots,26$ & $\{1,2,4,6, d^2-7, d^2 -5, d^2 - 3, d^2- 2 \}$ \\
6 & $\{1,2,4,6,13,22,29, 31, 33, 34\}$\\
5 & $\{1,2,4,6,11,13, 18, 20, 22, 23\}$\\
4 & $\{ 1,2,4,6, 9, 11, 13, 14\}$\\
3 & $\{1,2,4,6,7\}$\\
2& $\{1,2\}$ \\
\end{tabular}

\end{proof}

While the rectangles so far have been the same, we observe that in most cases when $\lambda=(a^b)$ and $\mu=(c^d)$ are two different rectangles and $\nu=(n-k,k)$ is a two-row, then the Kronecker coefficients is almost always 0.

\begin{proposition}
Let $\lambda=(a^b)$ and $\mu=(c^d)$, where $ab=cd=N$ and $a\neq c$. 
Then
$$g(\lambda,\mu,(N-k,k))=0,$$
for every $k\neq N/2$ and 
$$g(\lambda,\mu,(N/2,N/2)) =1$$
if  $(d-b) | d$ and is $0$ otherwise. 
\end{proposition}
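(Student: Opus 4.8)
Write $G_k := g(\lambda,\mu,(N-k,k))$ and $F(j) := \langle s_\lambda * s_\mu,\, s_{(N-j)}s_{(j)}\rangle$. The plan is to express $G_k$ through a count of pairs of partitions that simultaneously complement one another inside two rectangles, in direct analogy with the hook computation leading to \eqref{eq:Lit_hook}. The two-row Jacobi--Trudi identity $s_{(N-k,k)} = s_{(N-k)}s_{(k)} - s_{(N-k+1)}s_{(k-1)}$, paired with $s_\lambda * s_\mu$, gives
\[
G_k = F(k) - F(k-1),\qquad F(0) = g(\lambda,\mu,(N)) = 0,
\]
the last equality because $\lambda\neq\mu$. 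First I would compute $F(j)$: using the adjointness $\langle s_\lambda*s_\mu,\, h\rangle = \langle s_\mu,\, s_\lambda*h\rangle$ (an immediate consequence of the $\aS_3$-symmetry of Kronecker coefficients), Littlewood's identity with the one-row factors $s_{(N-j)}$ and $s_{(j)}$, and the fact that $s_\theta * s_{(n)} = s_\theta$ for $\theta\vdash n$, one obtains $F(j) = \sum_{\theta\vdash N-j,\ \tau\vdash j} c^\lambda_{\theta\tau}\,c^\mu_{\theta\tau}$. By the description of rectangular Littlewood--Richardson coefficients used in the proof of Theorem~\ref{thm:hook_positive}, each factor lies in $\{0,1\}$, so $F(j)$ equals the number of pairs $(\theta,\tau)$ with $|\tau| = j$ for which $\theta$ and $\tau$ are complementary inside the $b\times a$ rectangle $\lambda$ and also inside the $d\times c$ rectangle $\mu$.

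Next I would determine exactly which such pairs occur. Since $ab = cd$ and $a\neq c$ we must have $b\neq d$; swapping the roles of $\lambda$ and $\mu$ (which changes neither $G_k$ nor the divisibility condition, $(d-b)\mid d$ being symmetric in $b$ and $d$ because it is equivalent to $(d-b)\mid b$) we may assume $b<d$, hence $a>c$; set $e := d-b\ge 1$. Complementarity means $\theta_i+\tau_{b+1-i}=a$ for $1\le i\le b$ and $\theta_i+\tau_{d+1-i}=c$ for $1\le i\le d$. Since $\ell(\theta)\le b$, the relations of the second kind with $i>b$ force $\tau_1=\cdots=\tau_e=c$; subtracting the two kinds of relations for $1\le i\le b$ and reindexing yields the recursion $\tau_{j+e}=\tau_j-(a-c)$ for $1\le j\le b$, while $\ell(\tau)\le b$ forces $\tau_{b-e+1}=\cdots=\tau_b=a-c$. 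Iterating, $\tau$ must be constant on consecutive blocks of $e$ indices with block values descending by $a-c$; a short check using $ab=cd$ shows this produces a genuine partition meeting the boundary conditions precisely when $e\mid b$ (otherwise the last $e$ indices straddle two blocks of unequal value, contradicting $\tau_{b-e+1}=\cdots=\tau_b$), in which case $(\theta,\tau)$ is uniquely determined, genuinely fits inside both rectangles, and a direct summation gives $|\tau| = ab/2 = N/2$.

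Hence $F(j)=0$ for every $j\neq N/2$, while $F(N/2)=1$ if $(d-b)\mid d$ and $F(N/2)=0$ otherwise. Substituting into $G_k = F(k)-F(k-1)$: for $0\le k<N/2$ both terms vanish, so $G_k=0$; and when $N$ is even, $G_{N/2}=F(N/2)-F(N/2-1)=F(N/2)$, equal to $1$ if $(d-b)\mid d$ and $0$ otherwise. This is exactly the statement of the proposition. I expect the main obstacle to be the middle paragraph: executing the block iteration carefully, isolating $e\mid b$ as precisely the solvability condition, checking that the forced $\tau$ together with its complement $\theta$ is a legitimate partition inside both rectangles, and verifying the size identity $|\tau| = N/2$; the rest is formal.
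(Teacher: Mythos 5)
Your proof is correct and follows essentially the same route as the paper's: both reduce $g(\lambda,\mu,(N-k,k))$ to a difference of terms $F(k)-F(k-1)$ via the two-row Jacobi--Trudi identity and Littlewood's Kronecker identity, characterize nonzero summands through the complementarity description of rectangular Littlewood--Richardson coefficients, and then solve the resulting linear constraints to show the unique solution occurs at $k=N/2$ precisely when $(d-b)\mid d$. Your phrasing of the divisibility condition as $e\mid b$ with $e=d-b$ and the explicit "constant on blocks of length $e$" description of $\tau$ match the paper's formula $\alpha_{(d-b)i+j}=c-(a-c)i$; the check that $|\tau|=N/2$ and that $\theta,\tau$ genuinely fit in both rectangles, which you flag as the main remaining work, is a short computation using $ab=cd$ and is carried out (tersely) in the paper as well.
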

\begin{proof}
Using Littlewood's identity for $\tau = (k)$ and $\theta =(N-k)$ , since $s_k * s_{\alpha}=s_{\alpha}$ and $s_{N-k}*s_{\beta}=s_{\beta}$, we have that
$$s_{\lambda}*(s_k s_{n-k})=\sum_{\gamma \vdash n, \alpha\vdash k, \beta \vdash n-k} c^{\lambda}_{\alpha\beta} c^{\gamma}_{\alpha\beta}s_{\gamma}.$$
As in \cite{pp:14}, the Jacobi-Trudi identity for a two row gives 
$$s_{\nu}=s_ks_{N-k}-s_{k-1}s_{N-k+1}$$
and combining with with the previous identity we have
$$g(\lambda,\mu,\nu)= \sum_{\alpha\vdash k,\beta \vdash n-k} c^{\lambda}_{\alpha\beta}c^{\mu}_{\alpha\beta} - 
\sum_{\alpha\vdash k-1,\beta \vdash N-k+1} c^{\lambda}_{\alpha\beta}c^{\mu}_{\alpha\beta} .$$
We now consider when $c^{\lambda}_{\alpha\beta}c^{\mu}_{\alpha\beta} \neq 0$. It is easy to see, and has been elaborated in \cite{pp:14}, $c^{(a^b)}_{\alpha\beta}=1$ if and only if $\beta$ is the complement of $\alpha$ within $(a^b)$, and is $0$ otherwise. In other words, $\beta_i = a - \alpha_{b+1-i}$ for $i=1,\ldots,b$. At the same time, we need $c^{(c^d)}_{\alpha\beta}\neq 0$ and so $\beta_i = c-\alpha_{d+1-i}$. Assume that $b<d$, so $a>c$.  Since $\alpha,\beta \subset (a^b) \cap (c^d) = (c^b)$, we have $\alpha_j,\beta_j=0$ for $j>b$. So $\alpha_{j}=c$ for $j \leq d-b$.Together, the constraints for $\beta$ give $\alpha_{j} -\alpha_{j+d-b} = a-c$ for all $i=1,\ldots,d$. This now determines $\alpha$ uniquely as $\alpha_{(d-b)i+j} = c- (a-c)i$ for $1 \leq j\leq d-b$ and $i\geq 0$. Since, further, $\alpha_d=0$ and so $\alpha_b=a-c$, we must have $(d-b)|d$ and $(a-c)|c$. Under these conditions it is easy to see that $\alpha=\beta$, so that $k= (ab)/2=(cd)/2$ and then $g(\lambda,\mu,\nu)=1$.
\end{proof}

By transposing the two row partition and one of the rectangles above we reach the following. 
\begin{corollary}\label{cor:kron_even}
Let $n \neq d$ and $\rho = (2^k,1^{nd-2k})$ be a two-column partition of size $nd$. 
If $k=nd/2$ and $(d-n) | d $, then $g( 2^{nd/2}, n \times d, n \times d)=1$, otherwise
$g(\rho, n \times d, n \times d )=0$.
\end{corollary}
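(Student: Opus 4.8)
The plan is to derive Corollary~\ref{cor:kron_even} directly from the preceding Proposition using the transposition symmetry $g(\la,\mu,\nu)=g(\la^t,\mu,\nu^t)$ of Kronecker coefficients, so that no new computation is required. First I would observe that the two-column partition $\rho=(2^k,1^{nd-2k})$ is the transpose of a two-row partition: its first column has $k+(nd-2k)=nd-k$ boxes and its second has $k$ boxes, so $\rho^t=(nd-k,k)$. Note that $\rho$ is a genuine partition exactly when $0\le k\le nd/2$, and $\rho=(2^{nd/2})$ precisely when $k=nd/2$, which in particular forces $nd$ to be even.

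Applying $g(\la,\mu,\nu)=g(\la^t,\mu,\nu^t)$ with $\la=\rho$, $\mu=n\times d$, $\nu=n\times d$, and using $(n\times d)^t=d\times n$, one obtains
\[
g(\rho,\,n\times d,\,n\times d)=g\bigl((nd-k,k),\,n\times d,\,d\times n\bigr).
\]
The right-hand side is exactly of the form treated by the Proposition: the last entry is the two-row partition $(N-k,k)$ with $N=nd$, while the two rectangles $n\times d=(d^n)$ and $d\times n=(n^d)$ play the roles of $(a^b)$ and $(c^{d'})$ there, with $a=d$, $b=n$, $c=n$, $d'=d$; thus $ab=cd'=N$, and the hypothesis $a\neq c$ is precisely $n\neq d$. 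The Proposition then says this Kronecker coefficient vanishes for every $k\neq N/2=nd/2$, and for $k=nd/2$ equals $1$ when $(d'-b)\mid d'$, i.e. $(d-n)\mid d$, and $0$ otherwise. Combined with the identification $\rho=(2^{nd/2})$ in that case, this is exactly the statement of the corollary. (If $nd$ is odd there is no admissible $k$ equal to $nd/2$, so the coefficient is $0$ throughout, consistent with the ``otherwise'' clause.)

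I do not expect any genuine obstacle: all the substance is contained in the Proposition, and the only point requiring a little care is pairing the two rectangles so that the divisibility condition comes out as $(d-n)\mid d$. Even this is immaterial, since $(d-n)\mid d\iff(d-n)\mid n\iff(n-d)\mid n$, so transposing $\rho$ together with the other rectangle instead would yield an equivalent condition.
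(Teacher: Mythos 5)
Your proposal is correct and takes essentially the same approach as the paper, whose entire proof of the corollary is the one-line remark ``By transposing the two row partition and one of the rectangles above we reach the following.'' You have simply filled in the transposition bookkeeping and the (harmless) observation that $(d-n)\mid d$ is equivalent to $(n-d)\mid n$, both of which the paper leaves implicit.
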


\section{Appendix: padding with the first variable}\label{subsec:additionalvar}
Let $E_n$ denote the space of $n^2 \times n^2$ matrices.
In the literature sometimes $(X_{n,n})^{n-m}\per_m$ is called the padded permanent instead of $(X_{1,1})^{n-m}\per_m$.
We present now a simple interpolation argument that shows that it does not matter much which notion we use.
Clearly if $X_{n,n}^{n-m}\per_m \in E_n\det_n$, then also $X_{1,1}^{n-m}\per_m \in E_n\det_n$
by setting $X_{n,n} \leftarrow X_{1,1}$.
The following claim proves the other direction.
\begin{claim}
There exists a function $N=N(n)$ that is polynomially bounded in $n$ such that
if $X_{1,1}^{n-m}\per_m \in E_n\det_n$, then
then $X_{N,N}^{N-m}\per_m \in E_{N}\det_{N}$.
\end{claim}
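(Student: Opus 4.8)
The plan is to funnel the hypothesis through a bound on the ordinary (affine) determinantal complexity $\dcorig(\per_m)$, and then to re-pad with a fresh variable by homogenizing a determinantal representation. Throughout I use only the elementary closure properties of $\dcorig$: it is unchanged under nonzero scalar multiples and under affine-linear substitutions of the variables; $\dcorig(f+g)\le\dcorig(f)+\dcorig(g)$ and $\dcorig(fg)\le\dcorig(f)+\dcorig(g)$; a single affine-linear form has $\dcorig\le1$; and (homogenization) if $f$ is homogeneous of degree $r$ and $f=\det_N(B)$ with $B$ a matrix of affine-linear forms, then replacing the constant term $c$ of every entry of $B$ by $c\,z$ for a fresh variable $z$ yields a matrix $\tilde B$ of linear forms with $\det_N(\tilde B)=z^{N-r}f$.

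First I would use the hypothesis $X_{1,1}^{n-m}\per_m\in E_n\det_n$: write $X_{1,1}^{n-m}\per_m=\det_n(A)$ with $A$ a matrix of linear forms in $X_{1,1},\dots,X_{m,m}$, so in particular $\dcorig(X_{1,1}^{n-m}\per_m)\le n$. Substituting a nonzero scalar $X_{1,1}\leftarrow\xi$ turns $A$ into a matrix of affine-linear forms in the remaining variables, hence $\dcorig\big(\xi^{n-m}\per_m|_{X_{1,1}=\xi}\big)\le n$, and dividing out the scalar gives $\dcorig(\per_m|_{X_{1,1}=\xi})\le n$. Apply this with $\xi=1$ and $\xi=2$. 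Since every monomial of $\per_m=\sum_{\pi\in\aS_m}\prod_i X_{i,\pi(i)}$ involves $X_{1,1}$ at most once, $\per_m$ is affine-linear in $X_{1,1}$, so Lagrange interpolation at the nodes $1,2$ gives the identity
\[
\per_m=(2-X_{1,1})\,\per_m|_{X_{1,1}=1}+(X_{1,1}-1)\,\per_m|_{X_{1,1}=2}.
\]
Each summand is the product of an affine-linear form ($\dcorig\le1$) and a polynomial of $\dcorig\le n$, so by subadditivity $\dcorig(\per_m)\le(1+n)+(1+n)=2n+2=:N$.

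Finally I would re-pad. The polynomial $\per_m$ is homogeneous of degree $m$, and $N=2n+2>m$. Taking a size-$N$ affine-linear determinantal representation $\per_m=\det_N(B)$ and homogenizing as above produces a matrix $\tilde B$ of linear forms in $X_{1,1},\dots,X_{m,m}$ together with one fresh variable, which we are free to name $X_{N,N}$, such that $\det_N(\tilde B)=X_{N,N}^{N-m}\per_m$. Hence $X_{N,N}^{N-m}\per_m\in E_N\det_N$, and $N=N(n)=2n+2$ is linearly, hence polynomially, bounded in $n$.

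The step needing the most care is matching the homogeneous statements ($E_n\det_n$ and $E_N\det_N$ concern honest linear projections of determinants) with the interpolation argument, which naturally lives in the affine world: one must check that the homogenizing variable may legitimately be identified with a new variable $X_{N,N}$ and that no degree count goes astray, the point being that the exponent $N-m$ in the conclusion is exactly $N-\deg\per_m$. Everything else is routine bookkeeping with $\dcorig$; notably the only property of the permanent used is that it has degree at most $1$ in each variable, so the same argument would apply verbatim to any such family of polynomials.
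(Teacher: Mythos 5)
Your proof runs the same two moves as the paper's: interpolate $\per_m$ out of the hypothesis via two constant specializations of $X_{1,1}$ (exploiting that $\per_m$ is affine-linear in $X_{1,1}$), then homogenize a fresh variable $X_{N,N}$ into a determinantal representation of the resulting $\per_m$. The interpolation identity and the homogenization step are both correct, and your observation that only multilinearity of $\per_m$ is used matches the paper exactly.

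The gap is in the ``elementary closure properties'' you assume at the outset. Multiplicativity $\dcorig(fg)\le\dcorig(f)+\dcorig(g)$ is fine (block-diagonal matrix), and invariance under affine substitution and scaling is fine. But additivity $\dcorig(f+g)\le\dcorig(f)+\dcorig(g)$ is \emph{not} a known elementary fact, and you give no construction for it. The two robust conversions are one-sided: an algebraic branching program (equivalently a skew circuit) with $s$ vertices yields a determinantal representation of size $O(s)$, but going from a size-$n$ determinantal representation back to an ABP costs you the ABP size of $\det_n$ itself, which is polynomial in $n$ (say $q(n)$), not linear. Thus what you actually get is $\dcorig(f+g)\le q(\dcorig(f))+q(\dcorig(g))+O(1)$, a polynomial rather than a linear bound. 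The paper sidesteps precisely this by performing all the arithmetic (the linear combinations recovering $P$ and $Q$ from $R_1,R_2$, and then forming $X_{1,1}P+Q$) at the level of skew circuits, where addition and skew multiplication genuinely have additive-size overhead, and converting back to a determinant only once at the very end; this is why their $N$ is $4q(n)+8$, not $2n+2$. Your claimed $N=2n+2$ is therefore unjustified. The damage is not fatal: if you route your interpolation through skew circuits (substitute the two constants into a size-$q(n)$ skew circuit for $\det_n\circ L$, take the two linear combinations, multiply one by $X_{1,1}$ and add), you land exactly on the paper's bound, which is polynomial in $n$ as required. But as written, the step from ``$\dcorig$ of each summand is at most $n+1$'' to ``$\dcorig(\per_m)\le 2n+2$'' has no justification.
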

\begin{proof}
Let $\det_n$ have skew circuits of size $q(n)$ with $q(n)$ polynomially bounded in $n$.
Let $X_{1,1}^{n-m}\per_m \in E_n\det_n$.
Then there exists a size $q(n)$ skew circuit computing $X_{1,1}^{n-m}\per_m$.
The polynomial $\per_m$ is multilinear and we collect terms that involve $X_{1,1}$ using the notation
$\per_m = X_{1,1}P + Q$, where $X_{1,1}$ does not appear in $P$ or $Q$.
Setting $X_{1,1} \leftarrow 1$ in $X_{1,1}^{n-m}\per_m$
we obtain $R_1:=P+Q$ and setting $X_{1,1} \leftarrow 2$ we obtain $R_2:=2^{n-m+1}P+2^{n-m}Q$.
We see that $P = -\frac 1 {2^{n-m}}(2^{n-m}R_1-R_2)$ and $Q=\frac 1 {2^{n-m}}(2^{n-m+1}R_1-R_2)$, which gives size $2q(n)+3$ skew circuits for $P$ and $Q$.
Thus we get a size $N := 2(2q(n)+3)+2=4q(n)+8$ skew circuit for $\per_m = X_{1,1}P + Q$.
Homogenizing with $X_{N,N}$ as the padding variable we see $X_{N,N}^{N-m}\per_m \in E_{N}\det_{N}$.
\end{proof}

\section*{List of notations}
\renewcommand{\arraystretch}{1.2}
\begin{tabular}{ll}
$g(\la,\mu,\nu)$ & Kronecker coefficient: Multiplicity of $[\la]$ in $[\mu]\otimes[\nu]$ \\
$a_\la(d[n])$ & plethysm coefficient: Multiplicity of $\{\la\}$ in $\Sym^d(\Sym^n V)$ \\
$q_\la^m(d[n])$ & multiplicity of $\{\la\}$ in $\IC[\overline{\GL_{n^2}\per_m^n}]$\\
$o_\la^m(d[n])$ & multiplicity of $\{\la\}$ in $\IC[\Gamma_m^n]$, where $\Gamma_m^n$ is the variety of padded polynomials\\
$|\la|$ & number of boxes in the Young diagram of $\la$\\
$\bar\la$ & $\la$ with its first row removed\\
$\la(N)$ & $(N-|\la|,\la)$\\
$a_\rho(d)$ & $g(\rho(nd),n\times d,n\times d)$ for $n \geq |\rho|$\\
$a_\rho$ & $g(\rho(nd),n\times d,n\times d)$ for $n,d \geq |\rho|$\\
$[1,\ell]$ & $\{1,2,\ldots,\ell\}$
\end{tabular}

\end{document}